\begin{document}

\begin{frontmatter}%

  \title{Sublinear-Time Computation in the Presence of Online Erasures\titlefootnote{A preliminary version of this paper appeared in
  the \href{https://doi.org/10.4230/LIPIcs.ITCS.2022.90}{Proceedings of the 13th ``Innovations in Theoretical Computer Science'' Conference (ITCS'22).}~\cite{KalemajRV22}.}}

\author[kalemaj]{Iden Kalemaj\thanks{Supported by NSF award CCF-1909612 and Boston University's Dean's Fellowship.}}
\author[raskhodnikova]{Sofya Raskhodnikova\thanks{Supported by NSF award CCF-1909612.}}
\author[varma]{Nithin Varma}

\begin{abstract}
  We initiate the study of sublinear-time algorithms that access their input via an online adversarial erasure oracle. After answering each input query,
	   such an oracle can erase $t$ input values. Our goal is to understand the complexity of basic computational tasks in extremely adversarial situations, where the algorithm's access to data is blocked during the execution of the algorithm in response to its actions. Specifically, we focus on property testing in the model with online erasures. We show that two fundamental properties of functions, linearity and quadraticity, can be tested for constant $t$ with asymptotically the same complexity as in the standard property testing model. For linearity testing, we prove tight bounds  in terms of $t$, showing that the query complexity is $\Theta(\log t).$ In contrast to linearity and quadraticity, some other properties, including sortedness and the Lipschitz property of sequences, cannot be tested at all, even for $t=1$.  Our investigation leads to a deeper understanding of the structure of violations of linearity and other widely studied properties. We also consider implications of our results for algorithms that are resilient to online adversarial corruptions instead of erasures.
\end{abstract}

%

\end{frontmatter}

\section{Introduction}

We initiate the study of sublinear-time algorithms that compute in the presence of an online adversary that blocks access to some data points in response to the algorithm's queries. A motivating scenario is when a user wishes to remove their data from a dataset due to privacy concerns, as enabled by right to be forgotten laws such as the EU General Data Protection Regulation \cite{Mantelero13}. 
The online aspect of our model suitably captures the case of individuals who are prompted to restrict access to their data after noticing an inquiry into their or others' data. The user could decide to remove their data after noticing, for example, that their phone number is available on websites that scrape personal contact information.
We choose to model such user actions as adversarial in order to perform worst-case analysis and refrain from making distributional and other assumptions on how the data access is affected. We give two other motivating scenarios that are naturally adversarial and justify a worst-case analysis. In one, an  algorithm is trying to detect some fraud (e.g., tax fraud) and the adversary wants to obstruct access to data in order to make it hard to uncover any evidence. In the other scenario, an algorithm's goal is to determine an optimal course of action (e.g., whether to invest in a stock or to buy an item), whereas the adversary leads the algorithm astray by adaptively blocking access to pertinent information.

In our model, after answering each query to the input object, the adversary can hide a small number of input values. Our goal is to understand the complexity of basic computational tasks in extremely adversarial situations, where the algorithm's access to data is blocked during the execution of the algorithm in response to its actions.
Specifically, we represent the input object as a function $f$ on an arbitrary finite domain\footnote{Input objects such as strings, sequences, images, matrices, and graphs can all be represented as functions. For example, an $n \times n$ real-valued matrix is equivalent to a function mapping $[n]^2$ to $\mathbb{R}$.}, which the algorithm can access by querying a point $x$ from the domain and receiving the answer $\cO(x)$ from %
an
oracle. At the beginning of computation, $\cO(x)=f(x)$ for all points $x$ in the domain of the function. We parameterize our model by a natural number $t$ that controls the number of function values the adversary %
can erase {\em after} the oracle %
answers each query\footnote{If the adversary were allowed to erase the query of the algorithm before answering it, the algorithm would only see erased values. We give several motivating scenarios for the adversarial behavior in our model. The first example is a situation where the adversary reacts by deleting additional data after some bank records are pulled by authorities as part of an investigation. In the GDPR example mentioned previously, we argued that individuals could be prompted to restrict access to their data only after noticing an inquiry into their or others' data. Finally, in a legal setting, if the adversary is served a subpoena, they are legally bound to answer the query, but could nonetheless destroy related evidence that is not included in the subpoena.}. 
Mathematically, we represent the oracle and the adversary as one entity. However, it might be helpful to think of the oracle as the data holder and of the adversary as the obstructionist.
A {\em $t$-online-erasure} oracle can replace values $\cO(x)$ on up to $t$ points $x$ with a special symbol $\perp$, thus erasing them.
The new values will be used by the oracle to answer future queries to the corresponding points. The locations of erasures are unknown to the algorithm.
The actions of the oracle can depend on the input, the queries made so far, and even on the publicly known code that the algorithm is running, but {\em not} on future coin tosses of the algorithm.

We focus on investigating property testing in the presence of online erasures. In the property testing model, introduced by \cite{RubinfeldS96,GGR98} with the goal of formally studying sublinear-time algorithms, a property is represented by a set  $\cP$ (of functions satisfying the desired property).  A function $f$ is $\eps$-far from $\cP$ if $f$ differs from each function $g\in\cP$ on at least an $\eps$ fraction of domain points. 
The goal is to distinguish, with constant probability, functions $f\in\cP$ from functions that are $\eps$-far from $\cP.$
We call an algorithm a {\em $t$-online-erasure-resilient $\eps$-tester} for property~$\cP$ if, given parameters $t\in\mathbb{N}$ and $\eps\in(0,1),$ and access to an input function $f$ via a $t$-online-erasure oracle, the algorithm accepts with probability at least 2/3 if $f\in\cP$ and rejects with probability at least 2/3 if $f$ is $\eps$-far from $\cP$.

We study the query complexity of online-erasure-resilient testing of several fundamental properties. 
We show that for linearity and quadraticity of functions $f:\{0,1\}^d\to\{0,1\}$, the query complexity of $t$-online-erasure-resilient testing for constant $t$ is asymptotically the same as in the standard model. For linearity, we also prove tight bounds in terms of $t$, showing that the query complexity is $\Theta(\log t)$.
A function $f(x)$ is {\em linear} if it can be represented as a sum of monomials of the form $x[i]$, where~$x=(x[1],\dots,x[d])$ is a vector of~$d$ bits; the function is {\em quadratic} if it can be represented as a sum of monomials of the form~$x[i]$ or~$x[i]x[j]$.

To understand
the difficulty of testing in the presence of online erasures, consider the case of linearity and $t=1.$ The celebrated tester for linearity in the standard property testing model was proposed by
Blum, Luby, and Rubinfeld~\cite{BlumLR93}. It looks for witnesses of non-linearity that consist of three points $x,y,$ and $x\oplus y$ satisfying $f(x)+f(y)\neq f(x\oplus y)$, where addition is mod 2, and $\oplus$ denotes bitwise XOR. 
Bellare et al.~\cite{BellareCHKS96} show that if $f$ is $\eps$-far from linear, then a triple $(x,y, x\oplus y)$ is a witness to non-linearity with probability at least $\eps$ when $x,y\in\{0,1\}^d$ are chosen uniformly and independently at random. 
In our model, after $x$ and $y$ are queried, the oracle can erase the value of $x\oplus y.$ 
To overcome this, our tester considers witnesses with more points, namely, of the form $\sum_{x\in T}f(x) \neq f(\bigoplus_{x \in T}{x})$ for sets $T \subset \{0,1\}^d$ of even size.

Witnesses of non-quadraticity are even more complicated. 
The tester of Alon et al.~\cite{AlonKKLR05} looks for witnesses consisting of points $x,y,z$, and all four of their linear combinations.
We describe a two-player game that models the interaction between the tester and the adversary and give a winning strategy for the tester-player. We also consider {\em witness structures} in which all specified tuples are witnesses of non-quadraticity (to allow for the possibility of the adversary erasing some points from the structure). We analyze the probability of getting a witness structure under uniform sampling when the input function is $\eps$-far from quadratic.
Our investigation leads to a deeper understanding of the structure of witnesses for both properties, linearity and quadraticity.

In contrast to linearity and quadraticity, we show that several other properties, specifically, sortedness and the Lipschitz property of sequences, and the Lipschitz property of functions $f:\{0,1\}^d\to\{0,1,2\}$, cannot be tested in the presence of an online-erasure oracle, even with $t=1$, no matter how many queries the algorithm makes. Interestingly, witnesses for these properties have a much simpler structure than witnesses for linearity and quadraticity. Consider the case of sortedness  of integer sequences, represented by functions $f:[n]\to\mathbb{N}.$ A sequence is {\em sorted} (or the corresponding function is {\em monotone}) if $f(x) \leq f(y)$ for all $x< y $ in $[n]$.
A witness of non-sortedness consists of two points $x<y$, such that $f(x)>f(y).$
In the standard model, sortedness can be $\eps$-tested with an algorithm that queries $O(\sqrt{n/\eps})$ uniform and independent points~\cite{FLNRRS02}. (The fastest testers for this property have $O(\frac{\log \eps n}\eps)$ query complexity~\cite{EKKRV00,DGLRRS99,BGJRW12,CS13,Belovs18}, but 
they make correlated queries that follow a more complicated distribution.) Our impossibility result demonstrates that even the simplest 
testing strategy of querying independent points
can be thwarted by an online adversary. 
To prove this result, we use sequences that are far from being sorted, but where each point is involved in only one witness, allowing the oracle to erase the second point of the witness as soon as the first one is queried. Using a version of Yao's principle that is suitable for our model, we turn these examples into a general impossibility result for testing sortedness with a 1-online-erasure oracle.

Our impossibility result for testing sortedness uses sequences with many (specifically, $n$) distinct integers. We show that this is not a coincidence by designing a $t$-online-erasure-resilient sortedness tester that works for sequences that have $O(\frac {\eps^2n}{t})$ distinct values. However, the number of distinct values does not have to be large to preclude testing the Lipschitz property in our model. A function $f:[n]\to \mathbb{N}$, representing an $n$-integer sequence, is {\em Lipschitz} 
if $|f(x) - f(y)| \leq |x - y|$ for all $x, y \in [n]$.  Similarly, a function $f \colon \{0, 1\}^d \to \R$ is {\em Lipschitz} if $|f(x) - f(y)| \leq \| x - y \|_1$ for all $x, y \in \{0,1\}^d$. 
We show that the Lipschitz property of sequences, as well as $d$-variate functions, cannot be tested even when the range has size 3, even with $t=1$, no matter how many queries the algorithm makes.

\subparagraph*{Comparison to related models.} Our model is closely related to (offline) erasure-resilient testing of  Dixit et al.~\cite{DixitRTV18}.
In the model of Dixit et al., also investigated in~\cite{RV18,RRV19,BenFLR20,PallavoorRW22,LPRV21,NV20},
 the adversary performs all erasures to the function before the execution of the algorithm. An (offline) erasure-resilient tester is given a parameter $\alpha\in(0,1)$, an upper bound on the fraction of the values that are erased. %
 The adversary we consider is more powerful in the sense that it can perform erasures online, during the execution of the tester. However, in some parameter regimes, our oracle cannot perform as many erasures.
Importantly, all three properties that we show are impossible to test in our model, are testable in the model of Dixit et al.\ with essentially the same query complexity as in the standard model~\cite{DixitRTV18}. 
It is open if there are properties that have lower query complexity in the online model than in the offline model. The models are not directly comparable because the erasures are budgeted differently.

Another widely studied model in property testing is that of tolerant testing~\cite{PRR06}. As explained by Dixit et al., every tolerant tester is also (offline) erasure-resilient with corresponding parameters.
As pointed out in~\cite{PRR06}, the BLR tester is a tolerant tester of linearity for $\alpha$ significantly smaller than $\eps.$ 
Tolerant testing of linearity with distributional assumptions was studied in~\cite{KoppartyS09} and tolerant testing of low-degree polynomials over large alphabets was studied in~\cite{GuruswamiR05}.
Tolerant testing of sortedness is closely related to approximating the distance to monotonicity and estimating the longest increasing subsequence. These tasks can be performed with polylogorithmic in $n$ number of queries~\cite{PRR06,ACCL07,SaksS17}. As we showed, sortedness is impossible to test in the presence of online erasures. 

Adversarial models are also studied in related contexts, such as sampling from  a stream~\cite{Ben-EliezerY20}, more general stream computations \cite{Ben-EliezerJWY22}, and dynamic algorithms~\cite{BeimelKMNSS22}. 
Notably, in these models, the input is formed adversarially online while the algorithm is running; there is no notion of corrupted input. In contrast, in our model, algorithms solve problems on the original, ground-truth input while the access to the input is degrading.

\subsection{Our results}\label{sec:results}

We design 
$t$-online-erasure-resilient testers for linearity and quadraticity, two properties widely studied because of their connection to probabilistically checkable proofs, %
hardness of approximating NP-hard problems, and coding theory. 
Our testers have {\em 1-sided error}, that is, they always accept functions with the property. They are also {\em nonadaptive}, that is, their queries do not depend on answers to previous queries. This is despite the adversary being allowed to respond online (i.e., adaptively) to the algorithm's queries.

\subparagraph*{Linearity.} 

Starting from the pioneering work of \cite{BlumLR93}, linearity testing has been investigated, e.g., in \cite{BellareGLR93, BellareS94, FeigeGLSS96, BellareCHKS96, BellareGS98, Trevisan98, SudanT98,  SamorodnitskyT00, HastadW03,Ben-SassonSVW03, Samorodnitsky07, SamorodnitskyT09, ShpilkaW06, KaufmanLX10} (see~\cite{RasR16} for a survey). 
Linearity can be $\eps$-tested in the standard property testing model with $O(1/\eps)$ queries by the BLR tester. We say that a pair $(x, y)$ \emph{violates} linearity if $f(x) + f(y) \neq f(x \oplus y)$. The BLR tester repeatedly selects 
a uniformly random pair of domain points and rejects if it violates linearity.
A tight lower bound of $\Theta(\eps)$ on the probability that a uniformly random pair violates linearity
was proven by Bellare et al.~\cite{BellareCHKS96} and Kaufman et al.~\cite{KaufmanLX10}.

We show that linearity can be $\eps$-tested with $\widetilde O(\log t/\eps)$ queries with a $t$-online-erasure oracle. 
\begin{theorem}
\label{thm:linearity_min}
There exist a constant $c_0 \in (0,1)$ and a 1-sided error, nonadaptive, $t$-online-erasure-resilient $\eps$-tester for linearity of functions $f\colon \{0,1\}^d \to \{0,1\}$ that  works for all $t \leq c_0\cdot\eps^{5/4} \cdot  2^{d/4}$ and makes $O\big(\min\big( \frac{1}{\eps}\log \frac{t}{\eps}, \frac t \eps\big)\big)$ queries.
\end{theorem}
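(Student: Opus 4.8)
The plan is to follow the idea outlined before the theorem: instead of the BLR triple $(x,y,x\oplus y)$, use witnesses built from an even-size set $T$, checking whether $\bigoplus_{x\in T}f(x)\neq f(\bigoplus_{x\in T}x)$, and query the points of each test so that the ``check point'' $\bigoplus_{x\in T}x$ stays unpredictable to the online oracle until it is too late to erase it. Concretely, the tester will consist of $\Theta(1/\eps)$ independent iterations, each making $\Theta(\log(t/\eps))$ queries and rejecting an $\eps$-far function with probability $\Omega(\eps)$, so that $\Theta(1/\eps)$ iterations suffice to reach rejection probability $2/3$; this yields the $O\big(\tfrac1\eps\log\tfrac t\eps\big)$ bound, while the $O(t/\eps)$ bound is what the same scheme gives when $t$ is so tiny that $t<\log(t/\eps)$.

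The first ingredient is a Fourier witness bound generalizing the $s=2$ bounds of Bellare et al.\ and Kaufman et al. Writing $g=(-1)^f$, the condition ``$f$ is $\eps$-far from linear'' is exactly ``$\widehat g(S)\le 1-2\eps$ for all $S\subseteq[d]$''. I would show that for a uniformly random even-size set $T$ of $s$ points of $\{0,1\}^d$,
\[
\Pr\!\left[\bigoplus_{x\in T}f(x)\neq f\!\Big(\bigoplus_{x\in T}x\Big)\right]=\frac12\Big(1-\sum_{S\subseteq[d]}\widehat g(S)^{s+1}\Big)\ \ge\ \frac12\big(1-(1-2\eps)^{s-1}\big),
\]
where the identity is the standard Fourier computation (expand $g$, use that characters are multiplicative under $\oplus$, take expectations over the independent uniform points), and the inequality holds because $s+1$ is odd, so negative Fourier coefficients only help, while $\widehat g(S)^{s+1}\le(1-2\eps)^{s-1}\widehat g(S)^2$ for the nonnegative ones and $\sum_S\widehat g(S)^2=1$ by Parseval. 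This probability is $\Omega(\min(s\eps,1))$, hence $\Omega(\eps)$ once $s\ge2$ and a constant once $s=\Omega(1/\eps)$.

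For the tester, set $k=\Theta(\log(t/\eps))$. One iteration chooses, \emph{in advance} (so the tester stays nonadaptive), a uniformly random even-size set $S\subseteq[k]$ and uniformly random independent points $x_1,\dots,x_k\in\{0,1\}^d$; it queries $x_1,\dots,x_k$ and then the single point $y:=\bigoplus_{i\in S}x_i$, and it rejects iff none of these queried values is $\perp$ and $\bigoplus_{i\in S}f(x_i)\neq f(y)$. Run $\Theta(1/\eps)$ such iterations and accept if none rejects. One-sided error is immediate, since online erasures can only replace values by $\perp$, never by a wrong bit, so a linear $f$ never produces a mismatch. For soundness, fix an $\eps$-far $f$ and one iteration: by the Fourier bound (and since $|S|=\Theta(k)$ with overwhelming probability) the sampled data forms a witness with probability $\Omega(\eps)$, and it remains to show that with probability $1-o(\eps)$ none of the involved values is erased. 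For the pool points, each $x_i$ is a fresh uniform point and at the time it is queried at most $t\cdot O(\tfrac1\eps\log\tfrac t\eps)$ points have been erased, so a union bound over all queries of all iterations bounds the total ``pool-erasure'' probability by $\poly(\tfrac1\eps,\log\tfrac t\eps)\cdot t/2^d$. For the check point $y$: when $y$ is queried the oracle has seen only $x_1,\dots,x_k$, which are independent of $S$, and since the map $S\mapsto\bigoplus_{i\in S}x_i$ from even subsets of $[k]$ is a surjective homomorphism onto the subspace $V:=\mathrm{span}\{x_i\oplus x_1:i\ge2\}$, the point $y$ is \emph{exactly} uniform over $V$; moreover $\dim V=\min(k-1,d)$ except with probability $\le 2^{k-d}$, so $y$ lies in the current erased set (of size at most $t\cdot O(\tfrac1\eps\log\tfrac t\eps)$) with probability at most $2^{k-d}+t\cdot O(\tfrac1\eps\log\tfrac t\eps)/2^{\min(k-1,d)}$. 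Taking the constant in $k=\Theta(\log(t/\eps))$ large enough and invoking $t\le c_0\eps^{5/4}2^{d/4}$ with $c_0$ a sufficiently small constant (together with $t\ge1$, which forces $d=\Omega(\log\tfrac1\eps)$) makes all these stray terms $o(\eps)$; hence each iteration rejects with probability $\Omega(\eps)$, the $\Theta(1/\eps)$ iterations reject with probability $\ge2/3$, and the query count is $\Theta(1/\eps)\cdot(k+1)=O\big(\tfrac1\eps\log\tfrac t\eps\big)$.

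The main obstacle is the check-point argument: everything hinges on certifying that $\bigoplus_{x\in T}x$ is genuinely unpredictable, i.e.\ spread over a subspace whose size dominates the number of erasures the oracle can afford, which is what forces the structure size to grow like $\log t$ — and explains why no test built around a single fixed witness set can succeed, since the oracle can always erase the last-queried point of a fixed set. The homomorphism observation is exactly what makes the distribution of $y$ \emph{uniform} on $V$, so that the only losses are the rare event $\dim V<k-1$ and the ratio (number of erasures)$/|V|$, and the bound $t\le c_0\eps^{5/4}2^{d/4}$ is precisely tuned to absorb all the polynomial factors and the term $2^{k-d}$. Finally, the $O(t/\eps)$ branch is the easy regime — the same scheme with the constants set for tiny $t$ — and both versions are nonadaptive with one-sided error, as claimed.
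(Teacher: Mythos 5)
Your argument for the $O\big(\tfrac1\eps\log\tfrac t\eps\big)$ branch is essentially sound, and it takes a genuinely different route from the paper. The paper (Algorithm 1) reuses one reserve of $q=2\log\frac{50t}{\eps}$ points for \emph{many} even-subset sums per round and invokes the work-investment strategy of Berman--Raskhodnikova--Yaroslavtsev to handle the unknown ``quality'' of the reserve; you instead take a fresh reserve of $k=\Theta(\log\frac t\eps)$ points per iteration and query a \emph{single} even-subset sum whose subset is chosen from the tester's private coins, so that only the expectation bound of \autoref{thm:fourier} is needed (no concentration, no work investment), and the sum is uniform on the span $V$ of $\{x_i\oplus x_1\}$, of size $2^{k-1}\gg t\cdot(\text{total queries})$, independently of the adversary's erased set. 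Combined with $\Pr[A\cap B]\ge\Pr[A]-\Pr[\overline B]$ this gives per-iteration rejection probability $\Omega(\eps)$, and the parameter bookkeeping under $t\le c_0\eps^{5/4}2^{d/4}$ (take the constant in $k$ to be $3$ or $4$, as in the paper's \autoref{lem:distinct_and_nonerased}) goes through; your ``$o(\eps)$'' should really be ``$\le\eps/10$, say,'' but that is all the argument needs. This is a legitimate simplification of the paper's analysis for this branch.

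The genuine gap is the $O(t/\eps)$ branch of the minimum. Your claim that ``the $O(t/\eps)$ bound is what the same scheme gives when $t<\log(t/\eps)$'' is false: your scheme makes $\Theta(k)=\Theta(\log\frac t\eps)$ queries per iteration no matter how small $t$ is, so its total cost is $\Theta\big(\tfrac1\eps\log\tfrac t\eps\big)$, which exceeds $\tfrac t\eps$ exactly in the regime where the second term of the min is the smaller one (e.g.\ $t=O(1)$, where the theorem promises $O(1/\eps)$). Moreover, you cannot fix this by shrinking $k$ to $O(t)$: your whole check-point argument needs $2^{k-1}$ to dominate the number of erasures, and for $t=1$, $k=2$ the only nonempty even subset is $\{1,2\}$, so the adversary knows the sum $x_1\oplus x_2$ in advance and simply erases it. The paper handles this regime with a separate tester (\autoref{thm:linearity} in \autoref{sec:linearity_appendix}): one reserve of $q=\Theta(t/\eps)$ uniform points followed by $\Theta(1/\eps)$ random \emph{pair} sums $x_i\oplus x_j$, whose soundness requires a second-moment argument (\autoref{lem:violating_sums}, a Chebyshev bound on the number of violating pairs using the covariance estimates of \autoref{clm:witness_upper}) to guarantee that, with constant probability, an $\Omega(\eps)$ fraction of the ${q\choose2}=\Theta(t^2/\eps^2)$ pair sums are violating, so the adversary's $O(t^2/\eps)$ erasures cannot kill them all. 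Some argument of this kind (a shared reserve whose set of candidate sums is much larger than the erasure budget, plus a concentration statement rather than just an expectation bound) is needed for small $t$, and your proposal does not supply it; as written, the theorem's stated query bound $O\big(\min\big(\tfrac1\eps\log\tfrac t\eps,\tfrac t\eps\big)\big)$ is therefore not proven.
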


Our linearity tester has query complexity $O(1/\eps)$ for constant $t$, which is optimal even in the standard property testing model, with no erasures. The tester looks for more general witnesses of non-linearity than the BLR tester, namely,  it looks for tuples $T$ of elements from $\{0,1\}^d$ such that $\sum_{x \in T} f(x) \neq f(\bigoplus_{x\in T} x)$ and $|T|$ is even.
We call such tuples $\emph{violating}$. The analysis of our linearity tester crucially depends on the following structural theorem.

\begin{theorem}\label{thm:fourier}
Let $T$ be a tuple of a fixed even size, where each element of $T$ is sampled uniformly and independently at random from $\{0,1\}^d$. If a function $f \colon \{0,1\}^d \to \{0,1\}$ is $\eps$-far from linear, then 
\begin{equation*}
    \Pr_T\Big[\sum_{x\in T}f(x) \neq f(\bigoplus_{x\in T} x )\Big] \geq \eps.
\end{equation*}
\end{theorem}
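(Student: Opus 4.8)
The plan is to prove the statement by Fourier analysis on the hypercube. I would first pass to the $\pm 1$ encoding $F\colon\{0,1\}^d\to\{-1,1\}$ defined by $F(x)=(-1)^{f(x)}$ and write its Fourier expansion $F=\sum_{S\subseteq[d]}\widehat F(S)\,\chi_S$, where $\chi_S(x)=(-1)^{\sum_{i\in S}x[i]}$; Parseval then gives $\sum_S\widehat F(S)^2=1$. The reason for this encoding is the standard dictionary between Hamming distance and Fourier coefficients: for every $S$, the function $f$ disagrees with the linear function $x\mapsto\sum_{i\in S}x[i]$ on exactly a $\tfrac{1-\widehat F(S)}{2}$ fraction of points, so the hypothesis that $f$ is $\eps$-far from linear is exactly the statement that $\widehat F(S)\le 1-2\eps$ for every $S$.

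Next I would rewrite the probability in question. Fix the even size $|T|=k$ and write $T=(x_1,\dots,x_k)$, with the $x_i$ independent and uniform over $\{0,1\}^d$. In the $\pm 1$ encoding, $T$ fails to be violating exactly when $\bigl(\prod_{i=1}^k F(x_i)\bigr)\cdot F\bigl(\bigoplus_{i=1}^k x_i\bigr)=1$; since this quantity takes values in $\{-1,1\}$, the indicator of the complementary (violating) event equals $\tfrac12-\tfrac12\bigl(\prod_i F(x_i)\bigr)F\bigl(\bigoplus_i x_i\bigr)$, and taking expectations gives
\begin{equation*}
\Pr_T\Bigl[\sum_{x\in T}f(x)\neq f\left(\bigoplus_{x\in T}x\right)\Bigr]
=\frac12-\frac12\,\E_{x_1,\dots,x_k}\left[\left(\prod_{i=1}^k F(x_i)\right)\cdot F\left(\bigoplus_{i=1}^k x_i\right)\right].
\end{equation*}
Hence it suffices to show the expectation on the right is at most $1-2\eps$.

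The core computation is to evaluate this expectation exactly. Expanding each of $F(x_1),\dots,F(x_k)$ and $F(\bigoplus_i x_i)$ into its Fourier series and using the character identity $\chi_S(a\oplus b)=\chi_S(a)\chi_S(b)$, the expectation becomes $\sum_{S,S_1,\dots,S_k}\widehat F(S)\bigl(\prod_{i=1}^k\widehat F(S_i)\bigr)\prod_{i=1}^k\E_{x_i}\bigl[\chi_{S_i\triangle S}(x_i)\bigr]$. By orthonormality of the characters, $\E_{x_i}[\chi_{S_i\triangle S}(x_i)]$ equals $1$ if $S_i=S$ and $0$ otherwise, so only the diagonal terms $S_1=\dots=S_k=S$ survive and the expectation collapses to $\sum_S\widehat F(S)^{k+1}$. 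It remains to bound $\sum_S\widehat F(S)^{k+1}\le 1-2\eps$, and this is where the parity of $|T|$ matters: since $k$ is even, writing $\widehat F(S)^{k+1}=\widehat F(S)\cdot\widehat F(S)^{k}$ isolates the nonnegative factor $\widehat F(S)^{k}\ge 0$, so multiplying the inequality $\widehat F(S)\le 1-2\eps$ by $\widehat F(S)^{k}$ yields $\widehat F(S)^{k+1}\le(1-2\eps)\widehat F(S)^{k}$ termwise. Summing over $S$ and using $\widehat F(S)^{k}=\widehat F(S)^{k-2}\widehat F(S)^2\le\widehat F(S)^2$ (valid because $|\widehat F(S)|\le 1$ and $k-2$ is a nonnegative even integer) together with Parseval, I obtain $\sum_S\widehat F(S)^{k+1}\le(1-2\eps)\sum_S\widehat F(S)^2=1-2\eps$; the last step uses $1-2\eps\ge 0$, which holds because every $\{-1,1\}$-valued $F$ satisfies $\max_S\widehat F(S)\ge 0$ (a one-line consequence of Parseval and $\sum_S\widehat F(S)=F(\mathbf 0)\in\{-1,1\}$), so no Boolean function is more than $\tfrac12$-far from linear. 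Substituting back into the displayed identity gives a lower bound of $\eps$ on the probability.

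The one step that genuinely requires thought is the final inequality $\sum_S\widehat F(S)^{k+1}\le 1-2\eps$: the trick is to peel off a single factor of $\widehat F(S)$ so that an even power of $\widehat F(S)$ remains, and this is exactly where the evenness of $|T|$ is essential — for odd $|T|$ the statement is false, as witnessed by affine non-linear functions, which are $\tfrac12$-far from linear yet satisfy $\sum_{x\in T}f(x)=f(\bigoplus_{x\in T}x)$ whenever $|T|$ is odd. Everything else reduces to the orthonormality relation $\E_x[\chi_U(x)]=\mathbf 1[U=\emptyset]$ and Parseval.
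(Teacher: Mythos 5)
Your proposal is correct and takes essentially the same approach as the paper: the same $\pm 1$ encoding and indicator identity, the same Fourier computation showing the violation probability equals $\frac12-\frac12\sum_{S}\widehat F(S)^{k+1}$, and the same use of Parseval, the evenness of $k$, and the relation $\dist(f,\text{linear})=\frac12-\frac12\max_S\widehat F(S)$ to bound the sum by $1-2\eps$. The only cosmetic differences are that you obtain the identity by directly expanding all factors and invoking orthonormality of the characters, where the paper routes the same calculation through convolutions, Plancherel, and the convolution theorem, and that in the final bound you peel off one factor via the termwise inequality $\widehat F(S)^{k+1}\le(1-2\eps)\widehat F(S)^{k}$ instead of extracting $\max_S\widehat F(S)^{k-1}$ as the paper does; both steps are equivalent (and your appeal to $\max_S\widehat F(S)\ge 0$ matches a fact the paper also uses without detailed proof).
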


\noindent  Our theorem generalizes the result of \cite{BellareCHKS96}, which dealt with the case $|T|=2$. We remark that the assertion in \Thm{fourier} does not hold for odd $|T|$. Consider the function $f(x) = x[1] + 1 \pmod 2$, where $x[1]$ is the first bit of $x$. Function $f$ is $\frac 12$-far from linear, but  has no violating tuples of odd size.

The core procedure of our linearity tester queries $\Theta(\log(t/\eps))$ uniformly random points from $\{0,1\}^d$ to build a reserve and then queries sums of the form $\bigoplus_{x \in T} x$, where $T$ is a uniformly random tuple of reserve elements such that $|T|$ is even. The \emph{quality} of the reserve is the probability that $T$ is violating. The  likelihood that the procedure catches a violating tuple depends on the quality of the reserve (which is a priori unknown to the tester) and the number of sums queried. Instead of querying the same number of sums in each iteration of the core procedure, %
we
obtain a better query complexity by guessing different reserve qualities for each iteration and querying the number of sums that is inversely proportional to the reserve quality. We decide on the number of sums to query based on the {\em work investment} strategy by Berman, Raskhodnikova, and Yaroslavtsev~\cite{BermanRY14}, which builds on an idea proposed by Levin and popularized by Goldreich~\cite{Goldreich14}.

Next, we show that our tester has optimal query complexity in terms of the erasure budget~$t$. 

\begin{theorem}\label{thm:linearity_lower}
For all $\eps \in(0, \frac{1}{4}]$, every $t$-online-erasure-resilient $\eps$-tester for linearity of functions $f \colon \{0,1\}^d\to \{0,1\}$ must make more than $\log_2 t$ queries. %
\end{theorem}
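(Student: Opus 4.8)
The plan is to use the version of Yao's minimax principle for the online-erasure model: it suffices to exhibit a distribution over inputs --- here, over pairs consisting of a function and a $t$-online-erasure oracle strategy --- such that every \emph{deterministic} algorithm making at most $\log_2 t$ queries errs with probability more than $1/3$. I will build this distribution from a single, algorithm-independent oracle strategy together with two function distributions, one supported on linear functions and one supported on functions that are $\eps$-far from linear, chosen so that the oracle makes the two induced transcripts (almost) identically distributed.

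The oracle strategy, call it the \emph{span-erasing oracle}, answers each query truthfully whenever the queried point has not yet been erased, and after answering the $i$-th query $x_i$ it erases every point of the coset $x_i \oplus \mathrm{span}_{\mathbb{F}_2}(x_1,\dots,x_{i-1})$ other than $0$ (skipping points already erased). This rule depends only on the points queried so far, so the same oracle is faced by every algorithm. A short induction shows that just before the $i$-th query the erased set equals exactly $\mathrm{span}(x_1,\dots,x_{i-1})\setminus\{0\}$; hence the oracle returns $\perp$ on every query that lies in the span of the earlier queries and is nonzero, it returns $f(0)$ on the query $0$, and it returns the true value only on \emph{fresh} queries, i.e., those linearly independent of the earlier queries. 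Since after the $i$-th query the oracle erases at most $|\mathrm{span}(x_1,\dots,x_i)| - |\mathrm{span}(x_1,\dots,x_{i-1})| \le 2^{i-1}$ new points, and $i \le \log_2 t$, each erasure batch has size at most $2^{\log_2 t - 1} = t/2 \le t$; this is the one place the hypothesis $q \le \log_2 t$ enters, and it certifies that the span-erasing oracle is a legal $t$-online-erasure oracle.

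For the ``yes'' distribution I take $f$ to be a uniformly random linear function; for the ``no'' distribution I take $f$ to be a uniformly random function subject to $f(0)=0$, conditioned on being $\eps$-far from linear (a $(1-o_d(1))$-probability event by a Chernoff bound on $\mathrm{dist}(f,\ell)$ for each fixed linear $\ell$ together with a union bound over the $2^d$ linear functions, using $\eps\le 1/4$ --- this is where I need $d$ to exceed an absolute constant). In both cases the oracle is the span-erasing oracle. The key claim is that any fixed deterministic $q$-query algorithm sees the same transcript distribution in the two cases. The transcript is completely determined by the sequence of answers returned on the fresh queries, since non-fresh queries return $\perp$ or, for the query $0$, the value $f(0)=0$, which is $0$ in both cases. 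When the algorithm issues its next fresh query $v_j$, it lies outside the span of the previously revealed points $v_1,\dots,v_{j-1}$, which are linearly independent and nonzero; so, conditioned on the history, a uniformly random linear function takes a uniform bit value at $v_j$, independent of its values at $v_1,\dots,v_{j-1}$, and likewise a uniformly random function with $f(0)=0$ takes a uniform bit value at the distinct nonzero point $v_j$, independent of the earlier revealed values. Thus in both cases each fresh answer is a fresh uniform bit, so the transcript distributions coincide before the $\eps$-far conditioning, and the conditioning moves total variation by $o_d(1) < 1/12$.

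Putting this together: any deterministic algorithm making $q \le \log_2 t$ queries has acceptance probabilities in the two cases differing by less than $1/12$, so on the $\tfrac12$-$\tfrac12$ mixture of the ``yes'' and ``no'' distributions its error is at least $\tfrac12 - \tfrac1{24} > \tfrac13$; Yao's principle then rules out any randomized $t$-online-erasure-resilient $\eps$-tester for linearity making at most $\log_2 t$ queries. I expect the main obstacle to be the coupling argument in the third paragraph --- in particular, verifying that ``which queries are fresh'' is itself a deterministic function of the fresh answers (so the coupling is well defined) and that both the query $0$ and the $\eps$-far conditioning are handled so that the two transcript distributions genuinely match. The remaining ingredients (the induction describing the erased set, the budget arithmetic, and the final Yao bookkeeping) are routine. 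A tempting shortcut --- letting the oracle simply erase all of a deterministic algorithm's future queries after the first one --- must be avoided, since that oracle depends on the algorithm and so cannot serve in a single hard distribution; it would also ``prove'' a false bound contradicting the upper bound of \autoref{thm:linearity_min}.
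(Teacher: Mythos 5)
Your proposal is correct and takes essentially the same route as the paper's proof: the same hard pair (uniformly random linear function versus uniformly random function) against an oracle that erases all subset-sums, i.e.\ the span, of the queries made so far, so that within a $\log_2 t$-query budget the algorithm only ever learns values at linearly independent points, which are i.i.d.\ uniform bits in both cases; the paper packages the indistinguishability as two simulating processes $\cP^+,\cP^-$ and invokes the $6/7$-version of Yao's principle instead of conditioning the no-distribution on farness, but these are bookkeeping differences. Your explicit handling of a query at $0$ (via the condition $f(0)=0$) and the per-step budget accounting $2^{i-1}\le t/2$ are slightly more careful renditions of the same argument.
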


The main idea in the proof of \Thm{linearity_lower} is that when a tester makes $\lfloor\log_2 t\rfloor$ queries, the adversary has the budget to erase all linear combinations of the previous queries after every step. As a result, the tester cannot distinguish a random linear function from a random function.

\subparagraph*{Quadraticity.}
Quadraticity and, more generally, low-degree testing have been studied, e.g.,
in~\cite{BabaiFL91,BabaiFLS91,GemmellLRSW91,FeigeGLSS96,FriedlS95,RubinfeldS96, RazS97, AlonKKLR05, AroraS03, MoshkovitzR08, Moshkovitz17, KaufmanR06, Samorodnitsky07, SamorodnitskyT09, JutlaPRZ09, BKSSZ10,HaramatySS13, Ron-ZewiS13, DinurG13}. 
Low-degree testing is closely related to local testing of Reed-Muller codes. The %
Reed-Muller code ${\cal C}(k,d)$ 
consists of codewords, each of which corresponds to all evaluations of a  polynomial $f\colon \{0,1\}^d \to \{0,1\}$  of degree at most $k$. %
A local tester for a code queries a few locations of a codeword; it accepts if the codeword is in the code; otherwise, it rejects with probability proportional to the distance of the codeword from the code.

In the standard property testing model, quadraticity can be $\eps$-tested with $O(1/\eps)$ queries by the tester of Alon et al.~\cite{AlonKKLR05} that repeatedly selects $x,y,z\sim\{0,1\}^d$ and queries $f$ on all of their linear combinations---the points themselves, the double sums $x\oplus y,x\oplus z, y\oplus z$, and the triple sum $x\oplus y\oplus z$. The tester rejects if the values of the function on all seven queried points sum to 1, since this cannot happen for a quadratic function.
 A tight lower bound on the probability that the resulting 7-tuple is a witness of non-quadraticity was proved by  Alon et al.~\cite{AlonKKLR05}
 and Bhattacharyya et al.~\cite{BKSSZ10}.

We prove that quadraticity can be $\eps$-tested with $O(1/\eps)$ queries with a $t$-online-erasure-oracle for constant~$t$. 
Our tester can be easily modified to give a local tester for the Reed-Muller code $\mathcal{C}(2,d)$ that works with a $t$-online-erasure oracle.

\begin{theorem}\label{thm:quadraticity}
There exists a 1-sided error, nonadaptive, $t$-online-erasure-resilient $\eps$-tester for quadraticity of functions $f\colon \{0,1\}^d \to \{0,1\}$ that makes $O(\frac{1}{\eps})$ queries for constant $t$.
\end{theorem}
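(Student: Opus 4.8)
The plan is to build a $t$-online-erasure-resilient version of the quadraticity tester of Alon et al.~\cite{AlonKKLR05}. A function $f\colon\{0,1\}^d\to\{0,1\}$ is quadratic if and only if $\sum_{v\in V}f(v)=0$ for every affine subspace $V\subseteq\{0,1\}^d$ of dimension at least $3$; call a $3$-dimensional affine subspace (an ``affine cube'') \emph{violating} if its eight values sum to $1$. The Alon et al. tester samples $x,y,z$ uniformly, queries the eight points of the cube they span, and rejects if the cube is violating; by \cite{AlonKKLR05,BKSSZ10}, if $f$ is $\eps$-far from quadratic then a uniformly random affine cube is violating with probability $\Omega(\min(\eps,c))$ for an absolute constant $c$. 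A single cube is worthless against an online adversary: once the tester has queried a few of its points, the adversary can identify the cube and erase the rest. Mirroring the linearity tester, the fix is to first query a \emph{reserve} of uniformly random points, and then to query only derived points --- sums of reserve elements --- that form \emph{witness structures}: small, carefully overlapping families of affine cubes, all spanned by reserve elements, engineered so that (i) under the tester's random choices, with probability $\Omega(\min(\eps,c))$ \emph{every} cube in the family is violating, and (ii) whatever the $t$-online-erasure oracle does, at least one cube of the family stays entirely un-erased and can be checked. For constant $t$, the reserve together with the derived queries amounts to $O(1/\eps)$ points; for general $t$ the count grows polynomially with $t$.

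I would handle the adversarial part in two steps. First, model the completion of one witness structure as a two-player game: the tester reveals the structure's points one at a time, and after each reveal the adversary erases $t$ domain points. I would prescribe an order in which the tester reveals these points and show it forces a fully un-erased cube. The leverage is that the tester's reserve elements are chosen randomly and are hidden, so until a query essentially pins down which cube of the family it belongs to, the adversary cannot target that cube's remaining points; its bounded per-step budget, spread over a family with enough overlap, cannot cover all cubes simultaneously. Second, a uniformly random reserve queried in random order survives intact with high probability, since the adversary cannot predict which random points will be queried and its total of $O(t\cdot q)$ erasures over $q$ queries therefore misses all reserve points except with small probability; once the reserve is revealed the adversary can compute every derivable sum point, but that pool is large enough (polynomial in the reserve size) that a random witness structure avoids all erased points with high probability --- this is what dictates the reserve size.

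Next is the probabilistic core: the quadraticity counterpart of \autoref{thm:fourier}, stating that if $f$ is $\eps$-far from quadratic then a uniformly random witness structure of the chosen shape consists entirely of violating cubes with probability $\Omega(\min(\eps,c))$. The base case --- one uniformly random affine cube being violating with probability $\Omega(\min(\eps,c))$ --- is the tight rejection bound for the Alon et al. tester from \cite{AlonKKLR05,BKSSZ10}, proved via Fourier analysis over $\{0,1\}^d$ and the structure of the Reed--Muller code $\mathcal{C}(2,d)$. Upgrading it to ``all cubes of the family are simultaneously violating'' requires controlling the joint distribution of the violation indicators over the correlated cubes, not just their marginals; I would choose the family so that the event ``cube $1$ is violating'' forces the rest up to a controlled loss, again exploiting the self-correcting structure of degree-$2$ polynomials. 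Finally, to shave a polylogarithmic factor and reach $O(1/\eps)$ queries, I would not fix the number of completion attempts but run the core procedure with geometrically decreasing guesses for the unknown probability that a random witness structure is all-violating, querying a number of derived points inversely proportional to the current guess, and invoke the work-investment analysis of Berman, Raskhodnikova, and Yaroslavtsev~\cite{BermanRY14}. One-sided error is immediate, as the tester rejects only after completing an actually violating cube, which a quadratic function never contains; nonadaptivity holds because the whole query set is a deterministic function of the tester's coins (the identities of the reserve elements), not of the oracle's answers.

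The hardest part is the probabilistic core together with its tension with the game: robustness against the adversary wants witness structures built from many heavily overlapping cubes, whereas the clean probabilistic bound is for a single uniform cube and degrades as the cubes become correlated. Designing a family that is \emph{simultaneously} robustly completable and all-violating with probability $\Omega(\eps)$ when $f$ is $\eps$-far from quadratic --- i.e., controlling the joint, not merely marginal, behavior of the violation events over a correlated family --- is the technical heart of the argument and the quadraticity analog of the Fourier-analytic proof of \autoref{thm:fourier}.
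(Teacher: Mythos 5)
Your architecture is the same as the paper's: query many random decoy points first, defer the correlated (sum) queries, design the query order as a tester-vs-adversary game so that some complete cube survives, and lower-bound the probability that \emph{every} cube in the correlated family violates quadraticity so that whichever cube survives is a witness. But the two components you yourself identify as the crux are left as hopes rather than proofs, and the specific ideas you sketch for them would not go through as stated. For the joint-violation bound, the paper (\autoref{lem:quadratic_t}) does not argue that ``cube 1 violating forces the rest'': it applies H\"older's inequality (repeatedly conditioning on the shared points) to get $\Pr[\text{all } N \text{ cubes violate}] \geq \eta^N$, where $\eta \geq \min(\tfrac{7}{3}\eps_f, \tfrac{1}{40})$ is the Alon et al.\ single-cube bound; since $\eta^N$ is useless when $\eps_f$ is tiny, a separate argument for $\eps_f \leq 1/(2|S|)$ shows probability at least $\eps_f/2$ by exhibiting the event that the one shared point $z$ disagrees with the nearest quadratic function while all other points of the structure agree. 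The resulting bound is $\min(\eps/2, c_t)$, not the $\Omega(\eps)$ uniform in the family size that you aim for (and it need not be: $\Omega(\min(\eps,c_t))$ suffices for constant $t$). Without some argument of this kind your ``controlled loss'' claim is unsupported, and since you call it the technical heart, the proposal does not yet establish the theorem.

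The adversarial part has a similar gap. Once the tester has queried the double sums $x\oplus y$, $x\oplus z$, $y\oplus z$, the completing point $x\oplus y\oplus z$ is a single, fully determined point, and an adversary with budget $t\geq 1$ can erase it the instant the cube is pinned down; so ``a family with enough overlap'' must be an explicit construction in which, at every step, the tester has $t+1$ live continuations of which the adversary can spoil only $t$. The paper achieves this with $t+1$ trees of $(t+1)$-ary depth-$t$ $y$-decoys, disjoint batches of $x$-decoys per root-to-leaf path, and a final cascade of uniform random guesses (tree, path, $x$, level), succeeding with probability $1/c_t$ (\autoref{lem:erasures}); a separate ``active witness'' counting argument (\autoref{lem:witness}) converts the visible-erasure game into a guarantee for the real tester, which cannot see erasures until it queries them --- a conversion your sketch does not address. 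This construction costs $t^{O(t)}$ queries per round, which is why your claim of polynomial-in-$t$ query count (and, implicitly, an easy robustness argument via ``a random structure in a large pool of sums avoids the erased points'') underestimates the difficulty; the paper explicitly leaves improving the doubly-exponential dependence on $t$ open. Finally, the work-investment step you import from the linearity tester is unnecessary here: each round costs $O_t(1)$ queries, so simply repeating $O(1/\alpha)$ rounds already yields $O(1/\eps)$ for constant $t$.
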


 The dependence on $t$ in the query complexity of our quadraticity tester is at least doubly exponential, and it is an open question whether it can be improved.
 The main ideas behind our quadraticity tester are explained in \Sec{techniques}.

\subparagraph*{Sortedness.} Sortedness testing (see~\cite{Enc1} for a survey) was introduced by Ergun et al.~\cite{EKKRV00}.  Its query complexity has been pinned down to $\Theta(\frac{\log (\eps n)}{\eps})$ by~\cite{EKKRV00,Fis04, ChSe14,Belovs18}.

We show that online-erasure-resilient testing of integer sequences is, in general, impossible.
\begin{theorem}\label{thm:sortedness_real}
	For all $\eps \in (0,\frac{1}{12}]$, 
	there is no 1-online-erasure-resilient $\eps$-tester for sortedness of integer sequences.
\end{theorem}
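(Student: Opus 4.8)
\emph{Approach.} The plan is to prove \Thm{thm:sortedness_real} by a minimax-style (Yao) argument for the online-erasure model: I would exhibit a distribution $D_{\mathrm{yes}}$ supported entirely on sorted sequences, a distribution $D_{\mathrm{no}}$ supported with overwhelming probability on sequences that are $\eps$-far from sorted, and a single adversarial erasure strategy, so that \emph{every} algorithm --- deterministic or randomized, with arbitrarily many queries --- produces the same distribution of query--answer transcripts on $D_{\mathrm{yes}}$ (under this adversary) as on $D_{\mathrm{no}}$ (under this adversary). Since a tester's decision is a function of its transcript, this forces equal acceptance probabilities on $D_{\mathrm{yes}}$ and $D_{\mathrm{no}}$, contradicting the definition of an $\eps$-tester.

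\emph{The hard distributions.} Let $n = 2m$ and partition the domain into the consecutive blocks $P_i = \{2i-1, 2i\}$, assigning to block $i$ the value set $V_i = \{2i-1, 2i\}$. Since $\max V_i < \min V_{i+1}$, any sequence with $f(P_i) \subseteq V_i$ for all $i$ can violate sortedness only within a single block, so each domain point lies in at most one violating pair --- exactly the ``each point is in only one witness'' property that the adversary will exploit. I would draw $D_{\mathrm{yes}}$ by setting, independently over $i$, $f(2i-1) = f(2i)$ to a uniformly random element of $V_i$; every such sequence is non-decreasing, hence sorted. I would draw $D_{\mathrm{no}}$ by setting, independently over $i$, the ordered pair $(f(2i-1), f(2i))$ to a uniformly random element of $V_i \times V_i$; this block is a violation, namely $(2i, 2i-1)$, with probability $1/4$. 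The key feature is that, although the within-block \emph{joint} laws differ, for each single point the \emph{marginal} law of its value is uniform on the corresponding $V_i$ under both $D_{\mathrm{yes}}$ and $D_{\mathrm{no}}$, and the blocks are mutually independent in both.

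\emph{The adversary and indistinguishability.} The adversary is oblivious: right after each query to a point $p$, it erases the block-partner of $p$, using exactly its budget of $t = 1$ erasure. Consequently the algorithm can ever read at most one value from each block --- the first point it queries there --- since the partner is erased before it can be examined. I would then show, by induction on the step index, that the transcript has the same distribution under $D_{\mathrm{yes}}$ and under $D_{\mathrm{no}}$: at each step the next query is determined by the transcript so far, and its answer is either $\perp$ (determined by which points were queried), or a value recorded earlier in the transcript, or an unseen value $f(p)$ which --- precisely because the partner of $p$ has not been queried --- is uniform on the relevant $V_i$ and independent of the transcript, with the same law in both cases. Hence $\Pr[\text{accept} \mid D_{\mathrm{yes}}] = \Pr[\text{accept} \mid D_{\mathrm{no}}]$ for every algorithm against this adversary; since the adversary ignores the algorithm's coins, this already handles randomized testers, and it is the natural instance of the minimax principle for our model.

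\emph{Finishing, and the main obstacle.} A Chernoff bound shows that, except with probability $2^{-\Omega(n)}$, a $D_{\mathrm{no}}$-sequence has more than $n/10$ violating pairs, and since repairing each such pair requires changing one of its two disjoint points, its distance to sortedness exceeds $1/10 \ge \eps$. Take $n$ large enough that this failure probability is below $1/3$. Then a valid $\eps$-tester (for $\eps \le 1/12$) accepts $D_{\mathrm{yes}}$ with probability at least $2/3$, hence, by the transcript identity, accepts $D_{\mathrm{no}}$ with probability at least $2/3$; but on $D_{\mathrm{no}}$ the input is $\eps$-far except with probability below $1/3$, so the tester's acceptance probability there is at most $1/3 + 1/3 < 2/3$, a contradiction. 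I expect the main obstacle to be the design of the two distributions: one must simultaneously keep $D_{\mathrm{yes}}$ sorted and make $D_{\mathrm{no}}$ far yet statistically indistinguishable --- a tension that is irreconcilable for one-coordinate observations of sorted versus far sequences, and is overcome only because the online erasure oracle can hide the second endpoint of every potential inversion, leaving the algorithm with exactly the matching single-coordinate marginals and no access to the within-block joint structure that separates the two cases.
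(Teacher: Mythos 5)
Your proposal is correct and follows essentially the same route as the paper's proof: paired blocks $\{2i-1,2i\}$ with values in $\{2i-1,2i\}$, a partner-erasing adversary so only one value per block is ever revealed, yes/no distributions chosen so that single-coordinate marginals coincide (making transcripts identically distributed), a Chernoff bound for farness of the no-distribution, and a Yao-style conclusion. The only differences are cosmetic: your block distributions use probabilities $(1/2,1/2)$ versus uniform on $V_i\times V_i$ instead of the paper's $(1/3,1/3,1/3)$ versus $(1/3,2/3)$, and you argue the minimax step directly rather than invoking the paper's stated corollary of Yao's principle.
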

In the case without erasures, sortedness can be tested with $O(\sqrt{n/\eps})$ uniform and independent queries \cite{FLNRRS02}. \Thm{sortedness_real} implies that a uniform tester for a property does not translate into the existence of an online-erasure-resilient tester, counter to the intuition that testers that make only uniform and independent queries should be less prone to adversarial attacks. Our lower bound construction demonstrates that the structure of violations to a property plays an important role in determining whether the property is testable.

The hard sequences from the proof of \Thm{sortedness_real} have $n$ distinct values. Pallavoor et al. \cite{PRV18,Ramesh} considered the setting when the tester is given an additional parameter $r$, the number of distinct elements in the sequence, and obtained an $O(\frac{\log r}\eps)$-query tester. Two lower bounds apply to this setting: $\Omega(\log r)$ for nonadaptive testers~\cite{BlaRY14} and $\Omega(\frac{\log r}{\log \log r})$ for all testers for the case when $r=n^{1/3}$ \cite{Belovs18}. Pallavoor et al.\ also showed that
sortedness can be tested with $O(\sqrt{r}/\eps)$ uniform and independent queries. We extend the result of 
Pallavoor et al.\ to the setting with online erasures for the case when $r$ is small.

\begin{theorem}\label{thm:sortedness_uniform}
Let $c_0 > 0$ be a constant.
There exists a 1-sided error, nonadaptive, $t$-online-erasure-resilient $\eps$-tester for sortedness of $n$-element sequences with at most $r$ distinct values. The tester makes $O(\frac{\sqrt{r}}{\eps})$ uniform and independent queries and works when $r<\frac{\eps^2 n}{c_0 t}$.
\end{theorem}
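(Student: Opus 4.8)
The plan is to adapt the uniform sortedness tester of Pallavoor et al.\ to the online-erasure setting by exploiting the small number of distinct values. Recall the key structural fact behind the $O(\sqrt r/\eps)$ uniform tester: if a sequence with at most $r$ distinct values is $\eps$-far from sorted, then there are at least $\eps n/4$ (say) disjoint \emph{inversions}---pairs $x<y$ with $f(x)>f(y)$---and, crucially, one can find a single value $v$ together with $\Omega(\eps n)$ indices on which the sequence is at least $v$ appearing \emph{before} $\Omega(\eps n)$ indices on which the sequence is below $v$; a uniform sample of $O(\sqrt{r}/\eps)$ indices then hits such a crossing pair with constant probability, essentially by a birthday-paradox argument applied to the $r$ possible ``threshold'' values (more precisely, one argues that for the right threshold $v$ the ``bad'' indices above $v$ on the left and below $v$ on the right each form an $\Omega(\eps/\sqrt r)$ fraction, and $\Theta(\sqrt r/\eps)$ samples hit both). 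So in the no-erasure model, the tester samples a set $S$ of $O(\sqrt r/\eps)$ uniform indices, queries all of them, and rejects if it sees any inversion; correctness follows from the structural fact.

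To make this erasure-resilient, first I would observe that the tester is nonadaptive and makes only uniform queries, so the only thing the adversary can do is erase points the tester queries. The tester has a fixed budget of $q = O(\sqrt r/\eps)$ queries, so the adversary erases at most $qt$ points total. The fix is the standard oversampling trick: instead of sampling $O(\sqrt r/\eps)$ points, sample a fresh set $S$ of size $q' = O(\sqrt r/\eps)$ with a large enough constant, query the points of $S$ in a uniformly random order, and whenever the oracle returns $\perp$ simply ignore that point. Because the sample is uniform and independent and the query order is random, at the time index $x\in S$ is queried the adversary has erased at most $tq' = O(t\sqrt r/\eps)$ points \emph{in total} over the whole run, i.e.\ a $\frac{tq'}{n} = O\!\big(\frac{t\sqrt r}{\eps n}\big)$ fraction of the domain; using the hypothesis $r < \eps^2 n/(c_0 t)$ with $c_0$ a sufficiently large constant, this fraction is at most, say, $\frac{\eps}{100\sqrt r}$. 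Hence in expectation only an $O(\eps/\sqrt r)$ fraction of the sampled points is erased, so by Markov's inequality, with probability at least, say, $9/10$, at least half of the sampled points survive as genuine queried values of $f$. Conditioned on that, the surviving sub-sample is still a uniform-and-independent sample of $\Omega(\sqrt r/\eps)$ points from the domain (here one must be slightly careful: whether a point is erased can depend on the sample, so I would instead argue that for the fixed structural witness set---the $\Omega(\eps n/\sqrt r)$ ``left'' indices above threshold $v$ and the $\Omega(\eps n/\sqrt r)$ ``right'' indices below $v$---the expected number of sampled witness-indices that get erased is an arbitrarily small fraction of the expected number sampled, since erasures total $O(t\sqrt r/\eps) \ll \eps n/\sqrt r$ by the bound on $r$). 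Therefore, with constant probability the tester queries an un-erased index above $v$ on the left and an un-erased index below $v$ on the right, producing an inversion, and it rejects. One-sidedness is immediate: a sorted sequence has no inversions, so the tester never rejects it regardless of erasures.

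Concretely, the steps in order are: (i) state the structural lemma (from Pallavoor et al.) that an $\eps$-far sequence with $\leq r$ distinct values has a threshold $v$ with $\Omega(\eps n/\sqrt r)$ ``above-$v$ on the left'' indices preceding $\Omega(\eps n/\sqrt r)$ ``below-$v$ on the right'' indices, such that any left-index above $v$ paired with any right-index below $v$ is an inversion; (ii) describe the tester: pick $q' = C\sqrt r/\eps$ uniform i.i.d.\ indices, query them in random order, reject on seeing any inversion among the un-erased answers; (iii) bound the total number of erasures over the run by $tq' = Ct\sqrt r/\eps$, and use $r < \eps^2 n/(c_0 t)$ to make this at most $\gamma \cdot \eps n/\sqrt r$ for a tiny constant $\gamma$; (iv) for an $\eps$-far input, fix the witness sets from (i) and lower-bound by a second-moment / Chernoff argument the probability that the sample contains a left-witness and a right-witness \emph{both of which survive erasure}, using (iii) to absorb the erased witness-indices into a lower-order term; (v) conclude the rejection probability is at least $2/3$ (amplifying $C$ if needed), and note one-sided acceptance is trivial; (vi) the query complexity is $q' = O(\sqrt r/\eps)$ by construction. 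The main obstacle is step (iv): the adversary's erasure decisions can be correlated with which indices were sampled (it sees the queries as they come), so one cannot simply condition on ``the surviving sample is uniform.'' The clean way around this is to work with the \emph{fixed} witness sets rather than the random sample---count, via linearity of expectation, the expected number of sampled witness-indices and the expected number of those that are subsequently erased (at most the total erasure budget), show the latter is a negligible fraction of the former because $t\sqrt r/\eps \ll \eps n/\sqrt r$, and then a Paley--Zygmund or Chernoff bound gives that with constant probability an un-erased left-witness and an un-erased right-witness are both sampled, which suffices since the tester queries all sampled points.
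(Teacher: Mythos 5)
There is a genuine gap, and it is in your step (i), not in the erasure handling. The structural claim you rely on --- that an $\eps$-far sequence with at most $r$ distinct values admits a \emph{single} threshold $v$ with $\Omega(\eps n/\sqrt{r})$ indices at or above $v$ all preceding $\Omega(\eps n/\sqrt{r})$ indices below $v$ --- is false. Consider the sequence made of $r/2$ blocks of length $2n/r$, where block $j$ consists of $n/r$ copies of the value $2j$ followed by $n/r$ copies of the value $2j-1$. This sequence has $r$ distinct values and is $\Theta(1)$-far from sorted, yet for every threshold $v$ the only inversions crossing $v$ live inside a single block, so any pair of sets $(A,B)$ with every element of $A$ preceding every element of $B$ and $f\ge v$ on $A$, $f<v$ on $B$ satisfies $\min(|A|,|B|)\le n/r \ll \eps n/\sqrt{r}$. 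The $O(\sqrt{r}/\eps)$ bound cannot be obtained by exhibiting one heavy threshold; it requires a genuine collision argument \emph{across all $r$ value classes simultaneously}: one writes the violation matching as $\bigcup_i L_i\times U_i$ with $\sum_i|L_i|=\sum_i|U_i|\ge \eps n/4$, and shows that a sample of size $\Theta(\sqrt{r}/\eps)$ hits $L_i$ and $U_i$ for the \emph{same} $i$ with constant probability even though each individual $|L_i|$ may be as small as $\eps n/(4r)$. The paper does this via Claim~1 of \cite{GGLRS00} (the sampled indices $i$ carry total weight at least $\rho/\sqrt r$ with probability $1-e^{-c}$), applied after splitting the sample into two halves. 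Your steps (iv)--(v) fix ``the witness sets from (i)'' and run a second-moment argument on them, so the whole rejection analysis collapses once (i) fails.

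Your treatment of the adversary is, by contrast, essentially the right one and close to what the paper does: the total erasure budget is $O(t\sqrt r/\eps)$, which under $r<\eps^2 n/(c_0 t)$ is a negligible fraction of the witness mass, and one must argue about fixed (nonerased) witness sets rather than condition on the surviving sample being uniform. The paper implements this by querying the sample in two batches, defining $L_i'\subseteq L_i$ and $U_i'\subseteq U_i$ as the indices still nonerased after the first batch, checking that $\sum_i\min(|L_i'|,|U_i'|)/n\ge \eps/8$, and then applying the collision claim to the primed sets. If you replace your step (i) with that collision lemma and graft your erasure accounting onto the two-batch structure, the proof goes through; as written, it does not.
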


Thus, sortedness is not testable with online erasures when $r$ is large and is testable in the setting when~$r$ is small. For example, for Boolean sequences, it is testable with $O(1/\eps)$ queries.

\subparagraph*{The Lipschitz property.} Lipschitz testing, introduced by \cite{JhaR13}, was subsequently studied in~\cite{CS13, DixitJRT13, BermanRY14, AwasthiJMR16, ChakrabartyDJS17}. Lipschitz testing of functions %
$f:[n]\to\{0,1,2\}$ can be performed with $O(\frac 1\eps)$ queries~\cite{JhaR13}. For functions $f:\{0,1\}^d\to\mathbb{R}$, it can be done with $O(\frac d\eps)$ queries~\cite{JhaR13,CS13}. 

We show that the Lipschitz property is impossible to test in the online-erasures model even when the range of the function has only 3 distinct values. This applies to both domains, $[n]$ and $\{0,1\}^d.$

\begin{theorem}\label{thm:lipschitz}
    For all $\eps \in (0, \frac{1}{8}]$,
    there is no 1-online-erasure-resilient $\eps$-tester for
    the Lipschitz property of functions $f \colon [n] \to \{0,1,2\}$. 
    The same statement holds when the domain is $\{0,1\}^d$ instead of $[n].$
\end{theorem}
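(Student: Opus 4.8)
The plan is to apply the minimax principle for the online-erasures model: it suffices to exhibit a distribution $\mathcal{D}_{\mathrm{yes}}$ over pairs (Lipschitz function, $1$-online-erasure oracle) and a distribution $\mathcal{D}_{\mathrm{no}}$ over pairs ($\eps$-far-from-Lipschitz function, $1$-online-erasure oracle) such that, against every deterministic (adaptive) tester, the two induced distributions over query--answer transcripts coincide up to $o(1)$ in total variation. Then any deterministic tester accepts transcripts from the two experiments with almost the same probability, so it is correct with probability at most $\tfrac12+o(1)<\tfrac23$ on the balanced mixture of $\mathcal{D}_{\mathrm{yes}}$ and $\mathcal{D}_{\mathrm{no}}$, and the minimax principle rules out randomized testers as well. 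The design principle, borrowed from the sortedness lower bound, is to make the ``no'' functions far from Lipschitz while having each point lie in at most one violated pair, so that a $1$-online-erasure oracle can blank the partner of a point as soon as that point is queried; the extra twist needed here is to build the ``yes'' distribution so that its \emph{revealed} values match those of the ``no'' distribution point by point.

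For the domain $[n]$, recall that a $\{0,1,2\}$-valued function violates the Lipschitz property precisely on adjacent pairs carrying the values $0$ and $2$. Assume $3\mid n$ and partition $[n]$ into $n/3$ consecutive blocks $B_j=\{3j-2,\,3j-1,\,3j\}$; call $3j-2$ the \emph{buffer} and $\{3j-1,3j\}$ the \emph{active pair}. In every function we use buffers carry the value $1$, which insulates consecutive blocks. In $\mathcal{D}_{\mathrm{no}}$ each block is independently \emph{bad} with probability $\tfrac12$, in which case $f(3j-1)=0$ and $f(3j)=2$, and otherwise $f(3j-1)=f(3j)=1$; the violated pairs are then exactly the active pairs of bad blocks, they are disjoint, so the distance to the Lipschitz property equals $(\text{number of bad blocks})/n\ge\tfrac18\ge\eps$ except with probability $e^{-\Omega(n)}$ by a Chernoff bound, and we condition $\mathcal{D}_{\mathrm{no}}$ on this event (a change of only $o(1)$). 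In $\mathcal{D}_{\mathrm{yes}}$ each block is independently of \emph{type A} ($f(3j-1)=0,\ f(3j)=1$) or \emph{type B} ($f(3j-1)=1,\ f(3j)=2$) with probability $\tfrac12$; every such function is Lipschitz. In both experiments the oracle answers $1$ on buffers, answers truthfully on active points, and, as soon as one point of an active pair is queried, erases the other point of that pair --- a single erasure, within the budget $t=1$.

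To see that the transcripts match, condition on any transcript prefix: a fresh query to a buffer returns $1$ in both experiments; a query to an already-erased point returns $\perp$ in both; a re-query returns the value already shown; and the erasure rule is one fixed function of the transcript. The only remaining case is a fresh query to an active point of a block whose active points are untouched, and here querying the lower point $3j-1$ returns a uniform element of $\{0,1\}$ in both $\mathcal{D}_{\mathrm{no}}$ (bad vs.\ good) and $\mathcal{D}_{\mathrm{yes}}$ (type A vs.\ type B), while querying the upper point $3j$ returns a uniform element of $\{1,2\}$ in both. Crucially, the erasure rule prevents a transcript from ever containing both points of an active pair, so the one statistic that differs between the two distributions --- the within-block correlation between $f(3j-1)$ and $f(3j)$ --- is never exposed; hence the transcript distributions coincide up to the $o(1)$ conditioning error, regardless of adaptivity. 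The case $\{0,1\}^d$ is identical with the role of ``consecutive blocks separated by buffers'' played by the perfect matching $M$ of edges in direction $e_1$ (no buffers are needed, since every non-$M$ edge joins two vertices with equal first coordinate): in $\mathcal{D}_{\mathrm{no}}$ each $M$-edge $\{x,x\oplus e_1\}$ with $x_1=0$ is independently bad ($f(x)=0,\ f(x\oplus e_1)=2$) with probability $\tfrac12$ and otherwise both endpoints get $1$, and in $\mathcal{D}_{\mathrm{yes}}$ each $M$-edge is type A ($f(x)=0,\ f(x\oplus e_1)=1$) or type B ($f(x)=1,\ f(x\oplus e_1)=2$). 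Every $\mathcal{D}_{\mathrm{yes}}$-function is Lipschitz (vertices with $x_1=0$ take values in $\{0,1\}$, those with $x_1=1$ take values in $\{1,2\}$, and every edge changes the value by at most $1$), the violated pairs of a $\mathcal{D}_{\mathrm{no}}$-function are exactly the bad $M$-edges (so the distance is $\ge\tfrac18\ge\eps$ except with probability $e^{-\Omega(2^d)}$), and the oracle erases the other endpoint upon the first query to an endpoint of an $M$-edge, making the transcripts coincide by the same case analysis.

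I expect the main obstacle to be exactly the point that forces the elaborate ``yes'' distribution: a $1$-online-erasure oracle can only blank values, never fabricate them, so the ``yes'' side cannot simply use the constant function $1$ and pretend to see $0$'s and $2$'s --- it must actually place $0$'s and $2$'s on the same points as the ``no'' side while remaining Lipschitz. The type-A/type-B device reconciles this by reproducing the bad/good choice's single-point marginal on every queryable point, and the only surviving discrepancy --- the within-pair correlation --- is precisely what the adversary hides using its one unit of erasure budget. The rest is routine verification: that no step needs more than one erasure, that $\mathcal{D}_{\mathrm{no}}$ is $\eps$-far except with probability $o(1)$ (the Chernoff bound and a negligible conditioning), and that conditioning on a transcript prefix leaves the next answer's distribution unchanged across the two experiments, which is what makes the whole argument robust to adaptive testers.
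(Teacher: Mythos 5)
Your proof is correct and takes essentially the same route as the paper's: conceal the values in disjoint pairs whose single-point marginals agree between the yes- and no-distributions, let the $1$-erasure adversary blank the partner of any queried pair-point so the within-pair correlation is never revealed, and conclude via the minimax principle. Your hypercube construction coincides with the paper's; on $[n]$ the only (cosmetic) difference is that the paper avoids your buffer points by alternating ascending and descending pairs in consecutive blocks.
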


\subparagraph*{Yao's minimax principle.} All our lower bounds use Yao's minimax principle. A formulation of Yao's principle suitable for our online-erasures model 
is described in \Sec{yao}.

\subsection{The ideas behind our quadraticity tester}\label{sec:techniques}
 One challenge in generalizing
the tester
of Alon et al.~\cite{AlonKKLR05} to work with an online-erasure oracle is that 
its queries are correlated. First, we want to ensure that the tester can obtain function values on tuples of the %
form
$(x,y,z,x\oplus y, x\oplus z, y\oplus z, x\oplus y \oplus z)$. %
Then  we  want to ensure that, if the original function is far from the property, the tester is likely to catch such a tuple that is also a witness to not satisfying the property. Next, we formulate a two-player game\footnote{This game has been tested on real children, and they spent hours playing it.} that abstracts the first task. 
In the game, the tester-player sees what erasures are made by the oracle-player. This assumption is made to abstract out the most basic challenge and is not used in the algorithms' analyses.

\subparagraph*{Quadraticity 
testing as a two-player game.} Player 1 represents the tester and Player 2 represents the adversary. The players take turns drawing points, connecting points with edges, and coloring triangles specified by drawn points, each in their own color.
Player 1 wins the game if it draws in blue all the vertices and edges of a triangle and colors the triangle blue. The vertices represent the points $x, y, z \in \{0,1\}^d$, the edges are the sums $x\oplus y, x\oplus z, y\oplus z$, and the triangle is the sum $x\oplus y\oplus z.$
A move of Player~1 consists of drawing a point or an edge between two existing non-adjacent points or coloring an uncolored triangle between three existing points (in blue). A move of Player~2 consists of at most $t$ steps; in each step, it can draw a red edge between existing points or color a triangle between three existing points (in red).

\begin{figure}[ht]
    \begin{center}
    \includegraphics[scale = 0.7]{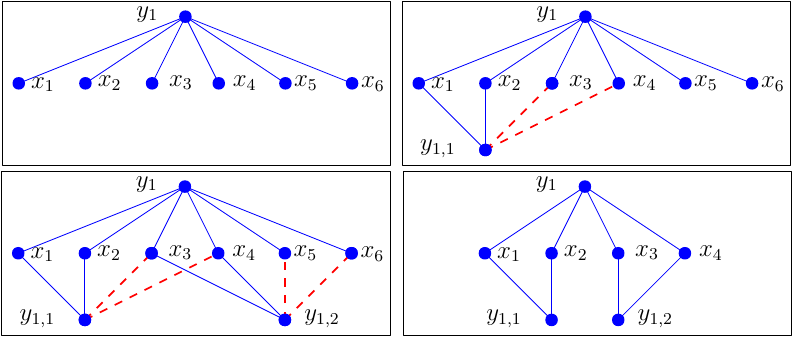}

    \includegraphics[scale = 0.5]{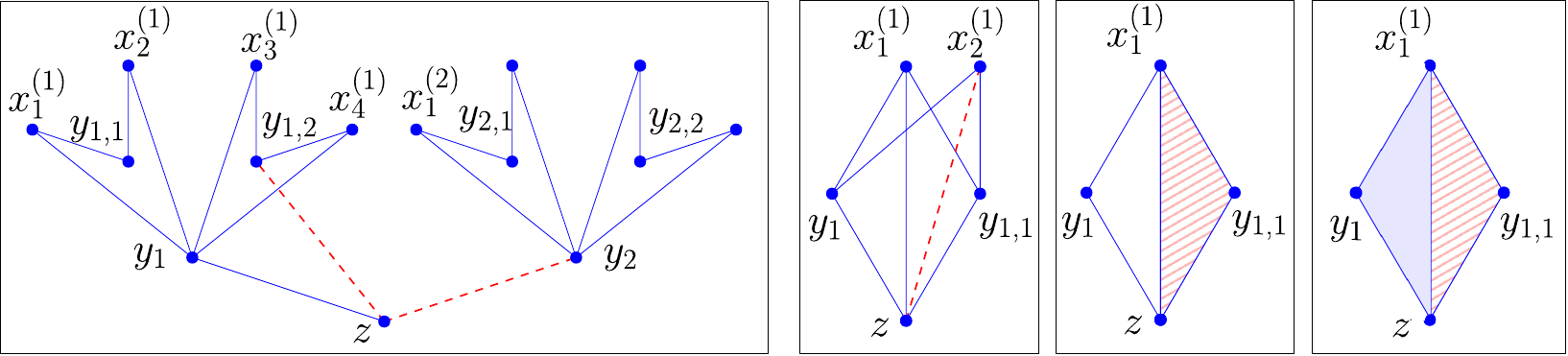}
    \end{center}
\caption{Stages in the quadraticity game for $t=1$, played according to the winning strategy for Player~1: connecting the $y$-decoys from the first tree to $x$-decoys (frames 1-4); drawing %
$z$ and connecting it to $y$-decoys and an $x$-decoy (frames 5-6), and creation of a blue triangle (frames 7--8). Frame 5 contains 
edges from $z$ to two structures, each replicating frame 4. We depict only points and edges %
relevant for subsequent frames.}
\label{fig:quadraticity-game}
\end{figure}

Our online-erasure-resilient quadraticity tester is based on a winning strategy for 
Player~1 with $t^{O(t)}$ moves.  At a high level, Player 1 first draws many decoys for $x$. The $y$-decoys are organized in $t+1$ full $(t+1)$-ary trees of depth $t$. The root for tree $i$ is $y_i$, its children are $y_{i,j}$, where $j\in[t+1]$, etc. %
We jot the rest of the winning strategy for $t=1$ and depict it in \Fig{quadraticity-game}. %
In this case, Player 1 does the following for each of two trees: it draws points $x^{(i)}_1,\dots, x^{(i)}_{12},y_i$; connects $y_i$ to half of $x$-decoys (w.l.o.g., $x^{(i)}_1,\dots,x^{(i)}_6$); draws point $y_{i,1},$
connects it to two of the $x$-decoys adjacent to $y_1$ (w.l.o.g., $x^{(i)}_1$ and $x^{(i)}_2$); draws point $y_{i,2},$ connects it to two of $x^{(i)}_3,\dots,x^{(i)}_6$ (w.l.o.g., $x^{(i)}_3$ and $x^{(i)}_4$); draws $z$ and connects it to one of the roots (w.l.o.g., $y_1$), connects $z$ to one of $y_{1,1}$ and $y_{1,2}$ (w.l.o.g., $y_{1,1}$), connects $z$ to one of $x^{(1)}_1$ and $x^{(1)}_2$ (w.l.o.g., $x^{(1)}_1$), and finally colors one of the triangles $x^{(1)}_1y_1z$ and  $x^{(1)}_1y_{1,1}z$, thus winning the game. The decoys are arranged to guarantee that Player 1 always has at least one available move in each step of the strategy.

For general $t$, the winning strategy is described in full detail in \Alg{quadratic_t}. Recall that the $y$-decoys are organized in $t+1$ full $(t+1)$-ary trees of depth $t$. For every root-to-leaf path in every tree, Player 1 draws edges from all the nodes in that path to a separate set of $t+1$ decoys for $x$. After $z$ is drawn, the tester ``walks'' along a root-to-leaf path in one of the trees, drawing edges between $z$ and the $y$-decoys on the path. The goal of this walk is to avoid the parts of the tree spoiled by Player 2. Finally, Player 1 connects $z$ to an $x$-decoy that is adjacent to all vertices in the path, and then colors a triangle involving this $x$-decoy, a $y$-decoy from the chosen path, and $z$. The structure of decoys guarantees that Player 1 always has $t+1$ options for its next move, only $t$ of which can be spoiled by Player~2.

\subparagraph*{From the game to a tester.}
There are two important aspects of designing a tester that are abstracted away in the game: First, the tester does not actually know which values are erased until it queries them. Second, the tester needs to catch a witness demonstrating a violation of the property, not merely a tuple of the right form with no erasures. Here, we briefly describe how we overcome these challenges.

  Our quadraticity tester is based directly on the game. It converts the moves of the winning strategy of Player~1 into a corresponding procedure, making a uniformly random guess at each step about the choices that remain nonerased. There are three core technical lemmas used in the analysis of the algorithm. \Lem{erasures} lower bounds the probability that the tester makes correct guesses at each step about which edges (double sums) and triangles (triple sums) remain nonerased, thus addressing part of the first challenge. This probability depends only on the erasure budget $t$. To address the second challenge, \Lem{quadratic_t} gives a lower bound on the probability that uniformly random sampled points (the $x$- and $y$- decoys together with $z$) form a large violation structure, where all triangles that Player~1 might eventually complete violate quadraticity. Building on a result of Alon et al., we show that even though the number of triangles involved in the violation structure is large, namely $ t^{O(t)}$, the probability of sampling such a structure is $\alpha = \Omega(\min(\eps, c_t))$, where $c_t \in(0,1)$ depends only on $t$. Finally, \Lem{witness} shows that despite the online adversarial erasures, the tester has a probability of $\alpha/2$ of sampling the points of such a large violation structure and obtaining their values from the oracle. These three lemmas combined show that quadraticity can be tested with $O(1/\eps)$ queries for constant~$t$.   

\subsection{Sublinear-time computation in the presence of online corruptions}

Our model of online erasures extends naturally to a model of online corruptions, where the adversary is allowed to modify values of the input function. Specifically, a \emph{$t$-online-corruption} oracle $\mathcal{O}$, after answering each query, can replace values $\mathcal{O}(x)$ on up to $t$ domain points $x$ with arbitrary values in the range of $f$. Our algorithmic results in the presence of online erasures have implications for computation in the presence of online corruptions. 

We consider two types of computational tasks in the presence of online corruptions. The first one is {\em $t$-online-corruption-resilient} testing, defined analogously to $t$-online-erasure-resilient testing. Specifically, we call an algorithm a {\em $t$-online-corruption-resilient $\eps$-tester} for property~$\cP$ if, given parameters $t\in\mathbb{N}$ and $\eps\in(0,1),$ and access to an input function $f$ via a $t$-online-erasure oracle, the algorithm accepts with probability at least 2/3 if $f\in\cP$ and rejects with probability at least 2/3 if $f$ is $\eps$-far from $\cP$.
In \Lem{no_erasures} we study a general computational task for which the algorithm has oracle access to an input function and has to output a correct answer with high probability. We consider algorithms that access their input via an online-erasure oracle and show that if with high probability they output a correct answer \emph{and} encounter no erased queries during their execution, they also output a correct answer with high probability in the presence of online corruptions. Note that the correctness of the answer is with respect to the original function $f$, as opposed to the corrupted values represented by $\mathcal{O}$.
\Lem{no_erasures}  uses the notion of an \emph{adversarial strategy}.
We model an adversarial strategy as a distribution on decision trees, where each tree dictates the erasures to be made for a given input and queries of the algorithm.

\begin{lemma}\label{lem:no_erasures}
    Let $T$ be an algorithm that is given access to a function via a $t$-online-erasure oracle and performs a specified computational task. Suppose that for all adversarial strategies, with probability at least $2/3$, the algorithm $T$ outputs a correct answer and queries no erased values during its execution. Then, for the same computational task, when $T$ is given access to a function via a $t$-online-corruption oracle, it outputs a correct answer with probability at least $2/3$. 
\end{lemma}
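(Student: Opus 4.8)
The plan is to reduce the online-corruption setting to the online-erasure setting via a coupling argument. Fix an arbitrary adversarial strategy $\mathcal{B}$ for the $t$-online-corruption oracle; by definition it is a distribution over decision trees, each of which, for a given input $f$ and sequence of queries of the algorithm, prescribes which (at most $t$) values of $\mathcal{O}$ to overwrite after each answer and what to overwrite them with. From $\mathcal{B}$ I would build a mirror adversarial strategy $\mathcal{B}'$ for the $t$-online-erasure oracle: $\mathcal{B}'$ uses the same distribution over trees and the same branching structure, but whenever a tree of $\mathcal{B}$ prescribes overwriting $\mathcal{O}(x)$ (by any value), the corresponding tree of $\mathcal{B}'$ instead erases $x$, i.e., sets $\mathcal{O}(x) = \perp$. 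Since $\mathcal{B}$ overwrites at most $t$ points per step, $\mathcal{B}'$ erases at most $t$ points per step, so $\mathcal{B}'$ is a legal $t$-online-erasure strategy. Crucially, the branching of each decision tree depends only on $f$ and on the queries made so far, not on future coin tosses of the algorithm, so this mirroring is well defined.

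Next I would couple the two executions of $T$ --- one against $\mathcal{O}$ equipped with strategy $\mathcal{B}$, one against $\mathcal{O}$ equipped with strategy $\mathcal{B}'$ --- by giving $T$ the same random string in both and by drawing a decision tree from $\mathcal{B}$ and using its mirror image in $\mathcal{B}'$. I then argue by induction on the step number that, as long as $T$ has not yet queried a point whose oracle value has been modified, the two executions are identical: the same query is issued (same coins, same answers so far), the same unmodified answer $f(x)$ is returned, and hence the adversary sits at the same node of its decision tree in both worlds and marks the same set of points as modified. The first step at which the two executions could diverge is therefore the first step at which $T$ queries a modified point; in the erasure world this is exactly the event that $T$ receives the answer $\perp$, i.e., queries an erased value.

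Now invoke the hypothesis with the strategy $\mathcal{B}'$: with probability at least $2/3$ over the coin tosses of $T$ (and the draw of the tree), the event $E$ that ``$T$ outputs a correct answer and queries no erased value'' occurs in the erasure world. Note that correctness of an answer is determined by the underlying function $f$, which is the same in both worlds. On the event $E$, $T$ never queries a modified point in the erasure world, so by the coupling it never queries a modified point in the corruption world either; hence its entire transcript, and in particular its output, is identical in the two worlds, and since the erasure-world output is correct on $E$, so is the corruption-world output. Therefore $\Pr[\, T \text{ is correct against } \mathcal{B}\,] \geq \Pr[E] \geq 2/3$, and since $\mathcal{B}$ was an arbitrary corruption strategy, the lemma follows.

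The main obstacle, and essentially the only place that requires care, is making the coupling and the mirrored strategy rigorous inside the decision-tree formalism: one must verify that the node reached after a given prefix of queries is the same in the two worlds --- this is precisely where the adversary's independence of future coin tosses is used --- and that the set of modified points is tracked consistently so that ``first query to a modified point'' coincides with ``first erased query'' under the coupling. Once the coupling is set up, the probabilistic content is essentially just the inequality $\Pr[E] \geq 2/3$ transported from one world to the other.
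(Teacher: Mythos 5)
Your proposal is correct and follows essentially the same route as the paper: map the given corruption strategy to an erasure strategy that erases exactly the points the corruption oracle would overwrite, couple the coins of the two executions, and observe that on the event that the erasure-world run is correct and hits no erasures, the two transcripts (and hence outputs) coincide. The extra care you devote to the decision-tree mirroring and the induction on steps is a more explicit rendering of the paper's "same coins, same queries, same answers" argument, not a different proof.
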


This lemma can be applied to our online-erasure-resilient testers for linearity and for sortedness with few distinct values.  These testers have 1-sided error, i.e., they err only when the function is far from the property. We can amplify the success probability of the testers to $5/6$ for functions that are far from the property, without changing the asymptotic query complexity of the testers. In addition, the resulting testers have a small probability (specifically, at most $1/6$) of encountering an erasure during their computation. As a result, \Lem{no_erasures} implies online-corruption-resilient testers for these properties. Their performance is summarized in \Cors{linearity_corruptions}{sortedness_corrupt}.

For the second type of computational tasks, we return to the motivation of the algorithm looking for evidence of fraud.  In \Lem{detect}, we  show that every nonadaptive, 1-sided error, online-erasure-resilient tester for any property $\cP$ can be modified so that, given access to an input function $f$ that is far from $\cP$ via an online-corruption oracle, it outputs a witness of $f$ not being in $\cP$. 
This result applies to our testers for sortedness with few distinct values, linearity, and quadraticity. Note that the values of the witness could be corrupted values of $f$. For this task, it is helpful to think of the input $f$ as changing dynamically rather than there being a ground truth input $f$ as is the case for \Lem{no_erasures}.

\begin{lemma}\label{lem:detect}
    Let $\mathcal{P}$ be a property of functions that has a 1-sided error, nonadaptive, $t$-online-erasure-resilient tester.
    Then there exists a nonadaptive algorithm with the same query complexity that, given a parameter $\eps$ and access via a $t$-online-corruption-oracle $\cO$ to a function $f$ that is $\eps$-far from $\cP$, outputs with probability at least $2/3$ a witness of  $f \notin \mathcal{P}$. 
    The witness consists of queried domain points $x_1, \dots, x_k$ and values $\cO(x_1), \dots, \cO(x_k)$ obtained by the algorithm, such that no $g \in \cP$ has $g(x_i) = \cO(x_i)$ for all $i \in [k]$. 
\end{lemma}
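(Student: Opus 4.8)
The plan is to run the given $t$-online-erasure-resilient tester $A$ for $\cP$, but on top of the online-corruption oracle $\cO$ instead of an online-erasure oracle, and to argue that whenever $A$ would reject, the queries it has made together with the answers it received already constitute a witness of $f \notin \cP$. The key conceptual point is that a nonadaptive, 1-sided error tester rejects only when it has in hand a set of query/answer pairs that is \emph{inconsistent with every function in $\cP$} --- otherwise it could not reject with zero error on inputs in $\cP$. So the modified algorithm is: simulate $A$ against $\cO$; if the simulation rejects, output the list of points $x_1,\dots,x_k$ queried so far and the values $\cO(x_1),\dots,\cO(x_k)$; if it accepts, output ``fail'' (this case will be rare).

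The first step is to make precise the claim that rejection of a 1-sided error tester certifies a witness. Fix the (nonadaptive) query set $Q = \{x_1,\dots,x_k\}$ that $A$ uses on a given random seed, and let $a_1,\dots,a_k$ be the answers returned. Suppose for contradiction that some $g \in \cP$ satisfies $g(x_i) = a_i$ for all $i$. Then consider running $A$ in the \emph{standard} model (no erasures) on input $g$: since $A$ is nonadaptive, it issues exactly the queries in $Q$ and receives exactly the answers $a_1,\dots,a_k$ that determine $A$'s decision; but $A$ has 1-sided error, so it must accept $g \in \cP$ with probability $1$ --- contradiction with $A$ rejecting on these answers. Hence if $A$ rejects, no $g \in \cP$ is consistent with the observed pairs, which is exactly the witness condition in the statement. (One small subtlety: $A$ running against $\cO$ might receive an answer $\cO(x_i)$ that $A$'s erasure-world logic treats specially, e.g.\ the symbol $\perp$; but a corruption oracle only returns values in the range of $f$, never $\perp$, so every answer $A$ sees is a genuine range value and the argument above applies verbatim. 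This is the one place the ``corruption, not erasure'' hypothesis is used, and it is why the range of $f$ is the range of the corruptions.)

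The second step is to show the modified algorithm succeeds with probability $\ge 2/3$ on any $f$ that is $\eps$-far from $\cP$, accessed via $\cO$. Here I would invoke \Lem{no_erasures} in spirit, or re-derive the needed instance of it directly: the adversary controlling $\cO$ induces, for each fixed random seed of $A$, a deterministic sequence of answers to $A$'s (seed-determined, nonadaptive) queries; we may therefore regard the corruption adversary as simply one particular way of choosing, in advance of $A$'s coins being revealed, the answer vector $(a_1,\dots,a_k)$ as a function of which queries $A$ makes --- equivalently, a decision-tree adversarial strategy in the sense defined before \Lem{no_erasures}, but one that is allowed to overwrite rather than erase. Against \emph{any} such answer-assignment strategy, including the honest one, $A$ (as an erasure-resilient tester) must reject $f$ with probability $\ge 2/3$ whenever it sees answers it regards as evidence of $\eps$-farness --- but we need a touch more care, because an erasure-resilient tester's guarantee is stated for erasure adversaries, whose answers are either the true $f$-value or $\perp$, whereas a corruption adversary can feed false range values. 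The clean way around this: the hypothesis we actually need is that the erasure-resilient tester, \emph{when it encounters no erasures}, rejects $\eps$-far $f$ with probability $\ge 2/3$; a corruption adversary that always returns range values is, from $A$'s internal viewpoint, indistinguishable from an erasure adversary that happens never to erase a queried point (since $A$ only learns a point was erased by querying it and seeing $\perp$). By the 1-sided-error rejection analysis of the original tester applied in this no-$\perp$ regime --- which is precisely what \Lem{no_erasures}'s hypothesis isolates --- $A$ rejects with probability $\ge 2/3$, and by Step~1 each rejection yields a valid witness.

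The main obstacle is the mismatch flagged above between the erasure adversary's answer set ($\{$true values$\} \cup \{\perp\}$) and the corruption adversary's answer set (arbitrary range values): an erasure-resilient tester is only \emph{promised} to behave well when every answer is either honest or $\perp$, so a priori we have no control over what $A$ does when fed maliciously corrupted-but-plausible values. The resolution is the one \Lem{no_erasures} already packages --- restrict attention to the event that the erasure-resilient tester encounters \emph{no} erasures, in which case its run is identical whether the oracle is an honest erasure oracle (that chose not to erase any queried point) or a corruption oracle (that chose to return some false values), because in neither transcript does $\perp$ appear; the 1-sided-error structure then guarantees that any rejection in such a transcript is backed by a genuine witness, since a consistent $g \in \cP$ would force acceptance. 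So the proof is really: (i) note nonadaptivity lets us fix the query set per seed; (ii) note corruption answers never include $\perp$, so $A$'s transcript looks like an erasure-free run; (iii) apply the 1-sided-error property to convert ``rejects'' into ``holds a witness''; (iv) apply the (amplified-as-needed) rejection guarantee of $A$ on $\eps$-far inputs in the no-erasure regime to get probability $\ge 2/3$. The query complexity is unchanged because we run $A$ unmodified. I would also remark that, as with the corollaries for linearity and sortedness, one may first boost $A$'s rejection probability on $\eps$-far inputs to $5/6$ so that the $2/3$ bound is comfortable even after accounting for the event that $A$ (in its erasure-world imagination) would have ``wanted'' an erased value; but for a clean statement the bare $2/3$ suffices.
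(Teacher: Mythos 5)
Your Step 1 is sound: for a nonadaptive, 1-sided error tester, any rejecting transcript (with the coins fixed) must be inconsistent with every $g \in \cP$, since otherwise the same coins run on $g$ with no erasures would produce the identical transcript and force a rejection of a function in $\cP$. This matches the observation the paper also relies on. The genuine gap is in Step 2, i.e., in the probability bound. You run the unmodified tester $A$ against the corruption oracle and need it to \emph{reject} with probability at least $2/3$, but the erasure-resilience guarantee only covers erasure adversaries, whose answers are true values or $\perp$; a corruption adversary's transcript is in general not realizable by any erasure adversary. Your proposed fix is to restrict to the event that $A$ encounters no erasures and to invoke ``precisely what \Lem{no_erasures}'s hypothesis isolates'' --- but that hypothesis (with probability at least $2/3$ the tester is correct \emph{and} queries no erased point) is an extra assumption that \Lem{detect} does not grant and that does not follow from erasure-resilience: the coupled adversary can deliberately corrupt/erase points the tester is likely to query, making the ``no erasures encountered'' event rare. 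Indeed, the lemma is meant to apply to the quadraticity tester, which does not satisfy that hypothesis (the paper leaves online-corruption-resilient quadraticity \emph{testing} open), so your argument cannot establish the lemma in the stated generality. Also, your claim that a no-$\perp$ transcript against the corruption oracle is ``indistinguishable from an erasure adversary that never erased a queried point'' conflates $\perp$-freeness with transcript equality: at corrupted points the answers are false, so the unmodified tester's accept/reject decision may well flip from reject to accept, and then your algorithm outputs ``fail.''

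The paper's proof avoids this by not using $A$'s decision on the corrupted transcript at all. Given any corruption strategy $\cO^{(c)}$, couple it with the erasure strategy $\cO^{(e)}$ that erases exactly the points $\cO^{(c)}$ modifies. Against $\cO^{(e)}$, the tester rejects an $\eps$-far $f$ with probability at least $2/3$, and (by the 1-sided error argument, your Step 1 applied in the erasure world) every rejecting run is certified by successfully obtained, hence \emph{nonerased}, values $f(x_1),\dots,f(x_k)$ inconsistent with every $g\in\cP$. Since those points are uncorrupted, the corruption-world run with the same coins --- which, by nonadaptivity, makes the same queries --- obtains the same true values there. So the right modification of the algorithm is: make $A$'s queries and output any sub-collection of the obtained query--answer pairs that is inconsistent with every $g\in\cP$, if one exists; with probability at least $2/3$ such a sub-collection is present (namely the erasure-world witness), regardless of how the adversary fills in the corrupted positions. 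Your trigger ``output a witness when $A$ rejects'' would work only under the additional \Lem{no_erasures}-style hypothesis, which is exactly what the lemma is designed to avoid.
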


\subsection{Conclusions and open questions}\label{sec:open}
We initiate a study of sublinear-time algorithms in the presence of online adversarial erasures. We design efficient online-erasure-resilient testers for several important properties (linearity, quadraticity, and---for the case of small number of distinct values---sortedness). For linearity, we prove tight upper and lower bounds in terms of $t$. We also show that several basic properties, specifically, sortedness of integer sequences and the Lipschitz properties, cannot be tested in our model. We now list several open problems.
\begin{itemize}
    \item Sortedness is an example of a property that is impossible to test with online erasures, but is easy to test with offline erasures, as well as tolerantly. Is there a property that has %
    smaller query complexity in the online-erasure-resilient model than in the (offline) erasure-resilient model of~\cite{DixitRTV18}?
    \item We design a $t$-online-erasure-resilient quadraticity tester that makes $O(1/\eps)$ queries for constant $t$. What is the query complexity of $t$-online-erasure-resilient quadraticity testing in terms of $t$ and $\eps$? Specifically, the dependence on $t$ in the query complexity of our quadraticity tester is at least doubly exponential, and it is open  whether it can be improved.
    \item The query complexity of $\eps$-testing if a function is a polynomial of degree at most $k$ is $\Theta(\frac{1}{\eps} + 2^k)$ \cite{AlonKKLR05,BKSSZ10}. Is there a low-degree test for $k\geq 3$ that works in the presence of online erasures?
    \item Our tester for linearity works in the presence of online corruptions, but our tester for quadraticity does not. The reason for this is that our linearity tester is unlikely to see erasures, but that is not the case for our quadraticity tester. Is there an online-corruption-resilient quadraticity tester?
\end{itemize}

\section{An online-erasure-resilient linearity tester}\label{sec:linearity}
To prove \Thm{linearity_min}, we present and analyze two testers. Our main online-erasure-resilient linearity tester (\Alg{linearity_3}) is presented in this section. Its query complexity has optimal dependence on $t$ and nearly optimal dependence on $\eps$. Its performance is summarized in \Thm{linearity_3}. To complete the proof of \Thm{linearity_min}, we give a $O(t/\eps)$-query 
linearity tester in 
\Sec{linearity_appendix}. 
It has optimal query complexity for constant~$t$.

\begin{theorem}\label{thm:linearity_3}
There exists %
$c_0 \in (0,1)$ and a 1-sided error, nonadaptive,~$t$-online-erasure-resilient $\eps$-tester for linearity of functions~$f\colon \{0,1\}^d \to \{0,1\}$ that  works for $t \leq c_0\cdot\eps^{5/4} \cdot  2^{d/4}$ and makes $O(\frac{1}{\eps}\log \frac{t}{\eps})$ queries.
\end{theorem}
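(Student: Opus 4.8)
The plan is to design a tester built from a \emph{core procedure} that samples a small reserve of uniformly random points, queries many random even-sized sums from this reserve, and rejects if any sampled tuple is violating; then wrap this core procedure in a work-investment loop to handle the unknown ``quality'' of the reserve. Concretely, the core procedure first queries $r = \Theta(\log(t/\eps))$ uniformly random points $a_1,\dots,a_r$ to form a reserve $R$, then repeatedly draws a uniformly random even-size subset $T \subseteq R$ and queries $\bigoplus_{x\in T}x$, checking whether $\sum_{x\in T}f(x) \ne f(\bigoplus_{x\in T}x)$. Since the tester has 1-sided error, completeness is immediate: a linear function never produces a violating tuple. The work lies entirely in soundness, i.e.\ showing that for $f$ that is $\eps$-far from linear, the tester rejects with probability $\ge 2/3$ within $O(\frac{1}{\eps}\log\frac{t}{\eps})$ queries, provided $t \le c_0\,\eps^{5/4}2^{d/4}$.

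The first key step is to argue that the reserve points themselves are, with good probability, \emph{not} erased when initially queried and, more importantly, that a uniformly random even subset $T$ of the reserve yields a sum $\bigoplus_{x\in T}x$ that is (a) close to uniformly distributed over $\{0,1\}^d$ and (b) unpredictable to the adversary. For (a): conditioned on the reserve spanning a large-dimensional subspace — which holds with overwhelming probability once $r = \Theta(\log(t/\eps))$ since random vectors in $\{0,1\}^d$ are nearly always independent when $r \ll d$ — the map $T \mapsto \bigoplus_{x\in T}x$ over even subsets is a $2$-to-$1$ (roughly) surjection onto a subspace of dimension $r-1$, so the sum is uniform there. By \autoref{thm:fourier}, a uniformly random even tuple of fixed size is violating with probability $\ge \eps$; I would adapt this to the setting where the tuple is a random even \emph{subset of the reserve} rather than a fresh random even tuple, using that the reserve points are themselves uniform and independent. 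Call the resulting quantity — the probability over the random even subset $T$ that $T$ is violating (with respect to the true $f$, ignoring erasures) — the \emph{quality} $q$ of the reserve; the above shows $\E[q] \gtrsim \eps$, hence with constant probability $q = \Omega(\eps)$.

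The second key step handles the online erasures: when the tester queries $\bigoplus_{x\in T}x$, the adversary has erased $t$ points after each of the previous queries. The crucial observation is that the sum $\bigoplus_{x\in T}x$ is, from the adversary's viewpoint, uniform over a set of size $\ge 2^{r-1}$ (the even-subset-sums of the reserve, which the adversary knows only span-wise, not which subset the tester will pick), and the adversary has erased at most $t \cdot (\text{number of sum-queries so far})$ points total. Choosing the number of sum-queries per core-procedure invocation to be $\Theta(1/(\eps'))$ for a guessed quality $\eps'$, and $r$ large enough that $2^{r-1} \gg t/\eps^2$, guarantees that the fraction of the sum-space that is erased stays below, say, $q/2$, so a random sum-query hits a non-erased \emph{violating} tuple with probability $\ge q/2$. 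This is where the constraint $t \le c_0 \eps^{5/4} 2^{d/4}$ enters: it ensures $d$ is large enough relative to $t$ and $\eps$ that a reserve of size $r = \Theta(\log(t/\eps))$ both fits inside the domain (i.e.\ $r < d$, so reserve points are independent w.h.p.) and yields a sum-space $2^{r-1}$ large enough to dwarf the total erasure budget across all iterations. I would make the dependence explicit: the total number of queries is $O(\frac1\eps\log\frac t\eps)$, so the adversary erases $O(\frac t\eps\log\frac t\eps)$ points total, which must be $\ll 2^{r-1}$; working backward pins down how large $r$ — hence how large $d$ relative to $t,\eps$ — must be, yielding the stated $c_0$.

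The third step is the work-investment wrapper: since the quality $q \in [\Omega(\eps), 1]$ is a priori unknown, I would run the core procedure $O(\log(1/\eps))$ times, in iteration $\ell$ guessing quality $2^{-\ell}$ and querying $\Theta(2^{\ell})$ random sums; by the work-investment strategy of Berman--Raskhodnikova--Yaroslavtsev~\cite{BermanRY14}, the total number of sum-queries is $O(1/q) = O(1/\eps)$ in expectation while guaranteeing that some iteration's query count matches the true quality, so a violating non-erased tuple is caught with constant probability. Combined with the $r = \Theta(\log(t/\eps))$ reserve queries (which can be reused across iterations, or re-drawn $O(\log(1/\eps))$ times at a cost of $O(\log\frac1\eps \log\frac t\eps)$ — either way absorbed into $O(\frac1\eps\log\frac t\eps)$), the total query complexity is $O(\frac1\eps\log\frac t\eps)$. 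The main obstacle, I expect, is the second step: carefully showing that the adversary's erasures cannot concentrate on the violating sums — i.e.\ that even-subset-sums of a random reserve are sufficiently ``spread out'' and unpredictable that the adversarial budget $t$ per step cannot meaningfully degrade the quality — and getting the quantitative relationship between $r$, $t$, $\eps$, and $d$ tight enough to yield the clean threshold $t \le c_0\,\eps^{5/4}2^{d/4}$. This likely requires a union bound over the (random) structure of the reserve together with a martingale/adaptive-adversary argument tracking the erased fraction of the sum-space across all sum-queries.
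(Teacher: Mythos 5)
Your proposal follows essentially the same route as the paper's proof: the same core procedure (a reserve of $\Theta(\log(t/\eps))$ uniform points followed by queries to sums over random even-size subsets), the same appeal to \autoref{thm:fourier} for the expected reserve quality, the same argument that the $2^{q-1}$ sums are distinct, a priori uniform, and far too numerous for the adversary's total erasure budget (which is exactly where the condition $t \le c_0\eps^{5/4}2^{d/4}$ enters), and the same work-investment wrapper from \cite{BermanRY14}. The one imprecision worth noting is that the wrapper must run $\Theta(1/(2^{j}\eps))$ \emph{independent} rounds (each with a fresh reserve) at guess level $j$ --- a single run per level does not let \autoref{lem:work} drive the failure probability down --- but this totals $O(1/\eps)$ rounds and preserves the claimed $O(\frac1\eps\log\frac t\eps)$ query bound.
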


The $t$-online-erasure-resilient tester guaranteed by \Thm{linearity_3} is presented in \Alg{linearity_3}.

\begin{algorithm} 
	\caption{An Online-Erasure-Resilient Linearity Tester}
	\begin{algorithmic}[1]
		\Require{$\eps \in (0,\frac{1}{2})$, erasure budget $t\in\mathbb{N}$, access to function $f: \{0,1\}^d \rightarrow \{0,1\}$ via a $t$-online-erasure oracle.} 
		
		\State Let $q = 2\log\frac{50t}\eps$. \label{step:define_q}
		\State\textbf{for all} $j \in [\log \frac 8 \eps]$: \label{step:define_j}
		
		\State\label{step:repeat} \indent \textbf{repeat} $\frac{8 \ln 5}{2^j\eps}$ times:
 		
		\State \indent \indent \textbf{for all} {$i \in [q]$}:\label{step:loop_query_x_i}

		\State \indent \indent \indent Sample $x_i \sim \{0,1\}^d$ and query $f$ at $x_i$. \label{step:query_x_i}
		
		\State \indent \indent\textbf{repeat} $4 \cdot 2^j$ times:\label{step:query_sums}

        \State \indent \indent \indent Sample a uniform nonempty subset $I$ of $[q]$ of even size.

        \State \indent \indent \indent Query $f$ at $\bigoplus_{i \in I}{x_i}$.\label{step:sample-I}
		
		\State \indent \indent \indent {\textbf{Reject}} if $\sum_{i\in I}f(x_i) \neq f(\bigoplus_{i \in I}{x_i})$ and all points $x_i$ for $i\in I$ and $\bigoplus_{i \in I}{x_i}$ are nonerased.  \label{step:reject}

		\State \textbf{Accept.}
	\end{algorithmic}
	\label{alg:linearity_3}
\end{algorithm}

\subsection{Proof of \Thm{fourier}}\label{sec:fourier}

 In this section, we prove \Thm{fourier}, the main structural result used in \Thm{linearity_3}. Recall that a $k$-tuple $(x_1, \dots, x_k) \in (\{0,1\}^d)^k$ violates linearity if $f(x_1) + \dots + f(x_k) \neq f(x_1 \oplus \dots \oplus x_k)$. (Addition is mod $2$ when adding values of Boolean functions.) \Thm{fourier} states that if $f$ is $\eps$-far from linear, then for all even $k$, with probability at least $\eps$, independently and uniformly sampled points $x_1, \dots, x_k \sim \{0,1\}^d$ form a violating $k$-tuple. 
Our proof of \Thm{fourier} 
builds on the proof of \cite[Theorem 1.2]{BellareCHKS96}, which is a special case of \Thm{fourier} for $k=2$. The proof is via Fourier analysis. Next, we state some standard facts and definitions related to Fourier analysis. See, e.g., \cite{OD1} for proofs of these facts. 

Consider the space of all real-valued functions on $\{0,1\}^d$ equipped with the inner-product 
\[\langle g, h \rangle = \underset{x\sim \{0,1\}^d}{\E}[g(x)h(x)],\]
where $g, h \colon \{0,1\}^d \to \R$. The \emph{character} functions $\chi_S \colon \{0, 1\}^d \to \{-1,1\}$, defined as $\chi_S = (-1)^{\sum_{i \in S} x[i]}$ for $S \subseteq [d]$, form an orthonormal basis for the space of functions under consideration. Hence, every function $g:\{0,1\}^d\to\R$ can be uniquely expressed as a linear combination of the functions $\chi_S,$ where $S\subseteq [d]$. The Fourier coefficients of $g$ are the coefficients on the functions  $\chi_S$ in this linear representation of $g$.  

\begin{definition}[Fourier coefficient]
For $g \colon \{0,1\}^d \to \R$ and $S \subseteq [d]$, the Fourier coefficient of $g$ on $S$ is 
$
    \widehat{g}(S) = \langle g, \chi_S \rangle = \underset{x\sim \{0,1\}^d}{\E}[g(x)\chi_S(x)].
$
\end{definition}

We will need the following facts about Fourier coefficients. 

\begin{theorem}[Parseval's Theorem]\label{thm:parseval}
For all functions $g\colon\{0,1\}^d \to \R$, it holds that $ \langle g,g\rangle =  \sum_{S\subseteq [d]} \widehat{g}(S)^2$.
In particular, if $g\colon\{0,1\}^d \to \{-1,1\}$ then $ \sum_{S\subseteq [d]} \widehat{g}(S)^2=1$. 
\end{theorem}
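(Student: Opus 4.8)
The plan is to derive Parseval's identity directly from the two facts already recorded in the excerpt: that the characters $\{\chi_S\}_{S \subseteq [d]}$ are orthonormal with respect to $\langle g, h\rangle = \E_{x \sim \{0,1\}^d}[g(x)h(x)]$, and that every $g\colon \{0,1\}^d \to \R$ admits the (unique) Fourier expansion $g = \sum_{S \subseteq [d]} \widehat{g}(S)\,\chi_S$, where $\widehat{g}(S) = \langle g, \chi_S\rangle$. Both are stated in the excerpt, so I would treat them as given.

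First I would substitute the Fourier expansion into both arguments of $\langle g, g\rangle$ and use bilinearity of the inner product:
\[
\langle g, g\rangle
= \Big\langle \sum_{S} \widehat{g}(S)\chi_S,\ \sum_{T} \widehat{g}(T)\chi_T \Big\rangle
= \sum_{S}\sum_{T} \widehat{g}(S)\,\widehat{g}(T)\,\langle \chi_S, \chi_T\rangle .
\]
Then I would apply orthonormality, $\langle \chi_S, \chi_T\rangle = \mathbf{1}[S = T]$, which collapses the double sum to the diagonal, yielding $\langle g, g\rangle = \sum_{S \subseteq [d]} \widehat{g}(S)^2$, as claimed. If one wanted a fully self-contained argument rather than citing the excerpt's assertion that the characters form an orthonormal basis, I would note that orthonormality follows from the multiplicativity $\chi_S\,\chi_T = \chi_{S \triangle T}$ together with the identity $\E_{x}[\chi_U(x)] = \mathbf{1}[U = \emptyset]$, which is immediate since $\chi_U(x) = \prod_{i \in U}(-1)^{x[i]}$ factors over independent uniform bits and $\E_{x[i]}[(-1)^{x[i]}] = 0$; but in the paper's setting citing the stated fact suffices.

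For the ``in particular'' clause, I would observe that when $g$ takes values in $\{-1,1\}$ we have $g(x)^2 = 1$ for every $x \in \{0,1\}^d$, hence $\langle g, g\rangle = \E_{x \sim \{0,1\}^d}[g(x)^2] = 1$; combining this with the identity just proved gives $\sum_{S \subseteq [d]} \widehat{g}(S)^2 = 1$.

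There is essentially no genuine obstacle here: the statement is a one-line consequence of bilinearity plus orthonormality of the characters, and the only place requiring (minimal) care is the bookkeeping of the double sum over $S, T$ and — if orthonormality is not taken as given — the elementary check that $\E_{x}[\chi_U(x)] = \mathbf{1}[U = \emptyset]$.
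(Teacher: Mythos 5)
Your proof is correct and is the standard textbook argument. The paper does not actually prove this statement; it records Parseval's Theorem as a known fact and cites an external reference (\cite{OD1}) for the proof, which is exactly the argument you give (expand in the character basis, use bilinearity and orthonormality to collapse the double sum, and for the second clause note $g(x)^2 = 1$ pointwise so $\langle g,g\rangle = 1$).
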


\begin{theorem}[Plancherel's Theorem] \label{thm:plancherel}
For all functions $g, h \colon\{0,1\}^d \to \R$, it holds that $ \langle g, h\rangle = \sum_{S\subseteq [d]}  \widehat{g}(S)\widehat{h}(S).$
\end{theorem}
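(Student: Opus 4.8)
The plan is to derive Plancherel's identity directly from the fact, recalled just above in the excerpt, that the characters $\{\chi_S\}_{S \subseteq [d]}$ form an orthonormal basis for the space of real-valued functions on $\{0,1\}^d$ under the inner product $\langle g,h\rangle = \E_{x\sim\{0,1\}^d}[g(x)h(x)]$. First I would record the two auxiliary facts that make orthonormality concrete: $\chi_S(x)\chi_T(x) = \chi_{S\triangle T}(x)$ for all $x$ (since $(-1)^{\sum_{i\in S}x[i]}(-1)^{\sum_{i\in T}x[i]} = (-1)^{\sum_{i\in S\triangle T}x[i]}$, using that the exponent of any coordinate is taken mod $2$), and $\E_{x\sim\{0,1\}^d}[\chi_U(x)] = \mathbb{1}[U=\emptyset]$ (because if $U\neq\emptyset$, picking any coordinate $i\in U$ and pairing $x$ with $x\oplus e_i$ flips the sign of $\chi_U$, so the sum telescopes to $0$). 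Combining these gives $\langle \chi_S,\chi_T\rangle = \E_x[\chi_{S\triangle T}(x)] = \mathbb{1}[S=T]$, i.e. orthonormality.

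Next I would write each function in its (unique) Fourier expansion: by definition of the Fourier coefficients, $g = \sum_{S\subseteq[d]} \widehat g(S)\,\chi_S$ and $h = \sum_{T\subseteq[d]} \widehat h(T)\,\chi_T$. Then, using bilinearity of $\langle\cdot,\cdot\rangle$ (finite sums, so no convergence issues), expand
\[
\langle g,h\rangle = \Big\langle \sum_{S} \widehat g(S)\chi_S,\ \sum_{T} \widehat h(T)\chi_T\Big\rangle = \sum_{S\subseteq[d]}\sum_{T\subseteq[d]} \widehat g(S)\,\widehat h(T)\,\langle \chi_S,\chi_T\rangle.
\]
Substituting $\langle \chi_S,\chi_T\rangle = \mathbb{1}[S=T]$ collapses the double sum to $\sum_{S\subseteq[d]} \widehat g(S)\widehat h(S)$, which is exactly the claimed identity. (As a sanity check, taking $h=g$ recovers Parseval's Theorem, \autoref{thm:parseval}.)

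There is essentially no hard step here; the only thing to be careful about is justifying that the Fourier expansion $g=\sum_S \widehat g(S)\chi_S$ holds with $\widehat g(S)=\langle g,\chi_S\rangle$, but this is precisely the statement that $\{\chi_S\}$ is an orthonormal basis, which the excerpt already takes as given. An equally short alternative, if one prefers to avoid re-deriving orthonormality, is the polarization route: apply Parseval to $g+h$ and to $g-h$, use that the Fourier transform is linear ($\widehat{g\pm h}(S) = \widehat g(S)\pm\widehat h(S)$, immediate from the definition), and note $\langle g,h\rangle = \tfrac14\big(\langle g+h,g+h\rangle - \langle g-h,g-h\rangle\big)$; expanding the squares and cancelling gives the result. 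I would present the direct basis-expansion argument as the main proof since it is the most transparent.
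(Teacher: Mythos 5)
Your proof is correct, but there is nothing in the paper to compare it against: the paper states Plancherel's Theorem (along with Parseval's Theorem, \autoref{lem:dist_to_linearity}, and \autoref{thm:convolution}) purely as a standard fact and simply refers the reader to a textbook for the proofs, giving no argument of its own. Both of your routes are sound --- the basis-expansion argument (orthonormality of $\{\chi_S\}$, unique Fourier expansion, bilinearity of $\langle\cdot,\cdot\rangle$) is the standard textbook derivation, and your polarization alternative via Parseval applied to $g\pm h$ together with linearity of the Fourier transform is an equally valid shortcut.
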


A function $g\colon\{0,1\}^d \to \{-1,1\}$ is linear if $g(x)g(y) = g(x\oplus y)$ for all $x, y \in \{0,1\}^d$. 

\begin{lemma}\label{lem:dist_to_linearity}
The distance of $g\colon\{0,1\}^d \to \{-1,1\}$ to linearity is $\frac{1}{2} - \frac{1}{2}\max_{S\subseteq [d]}\widehat{g}(S)$. 
\end{lemma}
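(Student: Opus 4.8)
The statement to prove is \autoref{lem:dist_to_linearity}: for $g\colon\{0,1\}^d\to\{-1,1\}$, the relative Hamming distance from $g$ to the set of linear functions equals $\frac12-\frac12\max_{S\subseteq[d]}\widehat g(S)$. The plan is to recall that the linear $\{-1,1\}$-valued functions are exactly the characters $\chi_S$, so the distance to linearity is $\min_{S\subseteq[d]}\dist(g,\chi_S)$, where $\dist(g,\chi_S)$ is the fraction of points on which $g$ and $\chi_S$ disagree. Then I would translate a single pairwise distance into a Fourier coefficient and take the minimum.

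\textbf{Key steps.} First I would justify that a function $g\colon\{0,1\}^d\to\{-1,1\}$ satisfying $g(x)g(y)=g(x\oplus y)$ for all $x,y$ must be a character: setting $x=y$ gives $g(\mathbf 0)=1$, and the homomorphism property from $(\{0,1\}^d,\oplus)$ to $(\{-1,1\},\cdot)$ forces $g=\chi_S$ where $S=\{i : g(e_i)=-1\}$ and $e_i$ is the $i$-th standard basis vector; conversely every $\chi_S$ is linear. So the set of linear $\{-1,1\}$-functions is precisely $\{\chi_S : S\subseteq[d]\}$, and hence $\dist(g,\text{linear})=\min_{S\subseteq[d]}\dist(g,\chi_S)$. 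Second, for a fixed $S$, I would compute
\[
\widehat g(S)=\langle g,\chi_S\rangle=\E_{x}[g(x)\chi_S(x)]=\Pr_x[g(x)=\chi_S(x)]-\Pr_x[g(x)\neq\chi_S(x)]=1-2\dist(g,\chi_S),
\]
using that $g(x)\chi_S(x)\in\{-1,1\}$ equals $+1$ exactly when they agree. Rearranging gives $\dist(g,\chi_S)=\frac12-\frac12\widehat g(S)$. Third, taking the minimum over $S$ and noting that minimizing $\frac12-\frac12\widehat g(S)$ is the same as maximizing $\widehat g(S)$, I obtain
\[
\dist(g,\text{linear})=\min_{S\subseteq[d]}\Big(\tfrac12-\tfrac12\widehat g(S)\Big)=\tfrac12-\tfrac12\max_{S\subseteq[d]}\widehat g(S),
\]
which is exactly the claim.

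\textbf{Main obstacle.} There is no serious analytic obstacle here — the argument is a short identity once the right objects are in place. The only point requiring any care is the structural fact that the $\{-1,1\}$-valued linear functions are exactly the characters $\chi_S$ (this is where the group-homomorphism argument enters), since without it one cannot reduce "distance to linearity" to "minimum distance to a character." After that, the computation $\widehat g(S)=1-2\dist(g,\chi_S)$ and the passage from $\min$ to $\max$ are routine, so I would keep the write-up brief and spend the words on the homomorphism characterization.
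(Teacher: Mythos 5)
Your proof is correct and is exactly the standard argument (linear $\{-1,1\}$-valued functions are precisely the characters $\chi_S$, and $\dist(g,\chi_S)=\frac12-\frac12\widehat g(S)$), which the paper does not reprove but defers to its cited Fourier-analysis reference. Nothing to add.
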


Finally, we also use the convolution operation, defined below, and one of its key properties. 
 
\begin{definition}[Convolution]
Let $g, h \colon\{0,1\}^d \to \R$. Their convolution is the function $g *h: \{0,1\}^d \to \R$ defined by
$
    (g * h)(x) = \underset{y\sim \{0,1\}^d}{\E}[g(y)h(x\oplus y)].
$
\end{definition}

\begin{theorem} \label{thm:convolution}
Let $g, h \colon\{0,1\}^d \to \R$. Then, for all $S \subseteq [d],$ it holds $\widehat{g * h}(S) = \widehat{g}(S)\widehat{h}(S)$. 
\end{theorem}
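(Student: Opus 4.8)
The plan is to prove Theorem~\ref{thm:convolution} by directly expanding the Fourier coefficient as an inner product, unfolding the definition of convolution, performing the change of variables $z = x \oplus y$, and then invoking the fact that each character $\chi_S$ is a homomorphism from the group $(\{0,1\}^d,\oplus)$ to the multiplicative group $\{-1,1\}$. Everything else reduces to linearity of expectation and independence of two uniform samples.

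First I would write, using the definition of the Fourier coefficient and then the definition of convolution,
\[
\widehat{g * h}(S) = \underset{x\sim \{0,1\}^d}{\E}\bigl[(g*h)(x)\,\chi_S(x)\bigr]
= \underset{x\sim \{0,1\}^d}{\E}\Bigl[\,\underset{y\sim \{0,1\}^d}{\E}[g(y)h(x\oplus y)]\,\chi_S(x)\Bigr],
\]
and, pulling the inner expectation out, rewrite this as a single expectation over an independent uniform pair $(x,y)$, namely $\underset{x,y}{\E}[g(y)\,h(x\oplus y)\,\chi_S(x)]$.

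The key step is the substitution $z = x\oplus y$: for each fixed $y$, the map $x \mapsto x\oplus y$ is a bijection of $\{0,1\}^d$ preserving the uniform measure, so $(y,z)$ is uniform on $(\{0,1\}^d)^2$ with $y$ and $z$ independent, and $x = y\oplus z$. This turns the expression into $\underset{y,z}{\E}[g(y)\,h(z)\,\chi_S(y\oplus z)]$. Next I would use $\chi_S(y\oplus z) = \chi_S(y)\chi_S(z)$, which holds because $\sum_{i\in S}(y[i]\oplus z[i]) \equiv \sum_{i\in S}y[i] + \sum_{i\in S}z[i] \pmod 2$, so the two exponents have the same parity and hence $(-1)^{(\cdot)}$ agrees. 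Substituting and factoring using independence of $y$ and $z$ gives
\[
\widehat{g * h}(S) = \underset{y,z}{\E}\bigl[g(y)\chi_S(y)\cdot h(z)\chi_S(z)\bigr]
= \underset{y}{\E}[g(y)\chi_S(y)]\cdot\underset{z}{\E}[h(z)\chi_S(z)] = \widehat{g}(S)\,\widehat{h}(S),
\]
as claimed.

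I do not expect a genuine obstacle here: this is a routine computation. The only two points that deserve an explicit word of justification are the change of variables $z = x\oplus y$ (legitimate precisely because translation by a fixed vector is a measure-preserving bijection of the uniform cube) and the multiplicativity $\chi_S(y\oplus z) = \chi_S(y)\chi_S(z)$ (the character/homomorphism property). Once these are in hand, the factorization is immediate from independence.
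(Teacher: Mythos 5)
Your proof is correct and is exactly the standard argument for the convolution theorem over $\{0,1\}^d$; the paper does not prove this statement itself but cites it as a standard fact (from O'Donnell's book), and your computation — unfolding the definitions, the measure-preserving change of variables $z = x \oplus y$, the character multiplicativity $\chi_S(y\oplus z)=\chi_S(y)\chi_S(z)$, and factoring by independence — is precisely that textbook proof. Nothing is missing.
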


\begin{proof}[Proof of \Thm{fourier}]
    Define $g \colon \{0,1\}^d \to \{-1, 1\}$ so that $g(x) = (-1)^{f(x)}$.
   That is, $g$ is obtained from the function $f$ by encoding its output with $\pm 1$.  Note that the distance to linearity of $g$ is the same as the distance to linearity of $f$. 
   We have that the expression $\frac{1}{2} - \frac{1}{2}g(x_1)\dots g(x_k)g(x_1 \oplus \dots \oplus x_k)$~is an indicator for the event that 
   points $x_1, \dots, x_k\sim \{0,1\}^d$ violate linearity 
   for $g$. Define $g^{*k}$ to be the convolution of $g$ with itself $k$ times, i.e., $g^{*k} = g * \dots * g$, where the $*$ operator appears $k-1$ times.  We obtain
  \begin{align}
      &\Pr_{x_1, \dots, x_k \sim \{0,1\}^d}[(x_1, \dots, x_k) \textnormal{ violates linearity}]\nonumber \\
      &= \underset{x_1, \dots, x_k \sim \{0,1\}^d}{\E} \Big[\frac{1}{2} - \frac{1}{2}g(x_1)\dots g(x_k)g(x_1 \oplus \dots \oplus x_k)\Big] \nonumber \\
      &= \frac{1}{2} - \frac{1}{2} \underset{x_1 , \dots, x_{k-1} \sim \{0,1\}^d}{\E} [g(x_1) \dots g(x_{k-1}) \cdot \underset{x_k \sim \{0,1\}^d}{\E} [g(x_k) g(x_1 \oplus \dots \oplus x_k)]]
      \nonumber  \\
      &= \frac{1}{2} - \frac{1}{2} \underset{x_1 , \dots, x_{k-1} \sim \{0,1\}^d}{\E} [g(x_1) \dots g(x_{k-1}) (g * g)(x_1\oplus \dots \oplus x_{k-1})]
      \label{eq:definition} \\
      &= \frac{1}{2} - \frac{1}{2} \underset{x_1 \sim \{0,1\}^d}{\E} [g(x_1) \cdot (g^{*k})(x_1)]  \label{eq:repeat} \\
      &=  \frac{1}{2} - \frac{1}{2}\sum_{S \subseteq [d]} \widehat{g}(S) \widehat{g^{*k}}(S) \label{eq:plancherel}\\ 
      &=  \frac{1}{2} - \frac{1}{2} \sum_{S \subseteq [d]} \widehat{g}(S)^{k+1}, \label{eq:convolution}
  \end{align}
  where  \Eqn{definition} holds by the definition of convolution, \Eqn{repeat} follows by repeated application of the steps used to obtain \Eqn{definition},  \Eqn{plancherel} follows from Plancherel’s Theorem (\Thm{plancherel}), and  \Eqn{convolution} follows from \Thm{convolution}.
  
  Note that $|\widehat{g}(S)| \leq 1$ for all $S \subseteq [d]$, because $\widehat{g}(S)$ is the inner product of two functions with range in $\{-1, 1\}$. In addition,  $\widehat{g}(S) \geq 0$ for $S$ such that $\chi_S$ is the closest linear function to $g$. Then, for even $k$,
  \begin{equation*}
      \sum_{S \subseteq [d]} \widehat{g}(S)^{k+1} \leq \max_{S \subseteq [d]} (\widehat{g}(S)^{k-1}) \sum_{S \subseteq [d]} \widehat{g}(S)^{2} = \max_{S \subseteq [d]} (\widehat{g}(S)^{k-1}) \leq \max_{S \subseteq [d]} \widehat{g}(S),
  \end{equation*}
where the equality follows from Parseval's Theorem (\Thm{parseval}). By \Lem{dist_to_linearity}, the distance of $g$ to linearity is $\frac{1}{2} - \frac{1}{2} \max_{S \subseteq [d]} \widehat{g}(S)$, 
which is at least $\eps$, since $g$ is $\eps$-far from linear. This concludes the proof.
\end{proof}

\subsection{Proof of \Thm{linearity_3}} \label{sec:linearity_proof}

In this section, we prove \Thm{linearity_3} using \Thm{fourier}. In \Lem{distinct_and_nonerased}, we analyze the probability of good events that capture, roughly, that the queries made in the beginning of each iteration haven't already been ``spoiled'' by the previous erasures.
Then we use the work investment strategy of \cite{BermanRY14}, stated in \Lem{work}, 
together with \Thm{fourier} and \Lem{distinct_and_nonerased} to prove \Thm{linearity_3}.

Each iteration of the outer \textbf{repeat} loop in Steps~\ref{step:repeat}-\ref{step:reject} of \Alg{linearity_3} is called a {\em round}. We say a query $x$ is {\em successfully obtained} if it is nonerased when queried, i.e., the tester obtains $f(x)$ as opposed to $\perp$.

\begin{lemma}[Good events]
\label{lem:distinct_and_nonerased}
Fix one round of \Alg{linearity_3}. Consider the points  $x_1,\dots, x_q$ queried in Step~\ref{step:query_x_i} of this round, where $q = 2\log(50t/\eps)$,
and the set $S$ of all sums $\bigoplus_{i\in I}x_i,$ where $I$ is a nonempty subset of $[q]$ of even size.
Let $G_1$ be the (good) event that all points in $S$
are distinct. Let $G_2$ be the (good) event that all points $x_1,\dots,x_q$
are successfully obtained and all points in $S$
are nonerased at the beginning of the round. Finally, let $G = G_1 \cap G_2$. Then $\Pr[\overline{G}] \le \frac{\eps}{2}$ for all adversarial strategies.
\end{lemma}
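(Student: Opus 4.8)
The plan is to bound $\Pr[\overline{G_1}]$ and $\Pr[\overline{G_2}]$ separately and then apply a union bound, using that $q = 2\log(50t/\eps)$, so $2^q = (50t/\eps)^2$ and the number of even-size nonempty subsets of $[q]$ is $2^{q-1}-1 < 2^{q-1}$, hence $|S| \le 2^{q-1} \le \tfrac12(50t/\eps)^2$. For the event $\overline{G_1}$, I would fix two distinct nonempty even-size subsets $I \neq J$ of $[q]$ and observe that $\bigoplus_{i\in I} x_i = \bigoplus_{i\in J} x_i$ iff $\bigoplus_{i \in I \triangle J} x_i = 0$, where $I \triangle J$ is nonempty; since the $x_i$ are independent and uniform in $\{0,1\}^d$, the XOR over any fixed nonempty index set is uniform in $\{0,1\}^d$, so this collision has probability $2^{-d}$. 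A union bound over the at most $\binom{2^{q-1}}{2} < 2^{2q-2}$ pairs gives $\Pr[\overline{G_1}] \le 2^{2q-2}\cdot 2^{-d} = \tfrac14 (50t/\eps)^4 \cdot 2^{-d}$, which is at most $\eps/4$ provided $t \le c_0 \eps^{5/4} 2^{d/4}$ for a suitable constant $c_0$ (this is exactly where the hypothesis $t \le c_0\eps^{5/4}2^{d/4}$ of \autoref{thm:linearity_3} enters).

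For the event $\overline{G_2}$, I would use the online-erasure budget accounting. The adversary has erased at most $t$ values after each of the preceding queries; I would bound the total number of erased points present at the beginning of the round by the number of queries made so far times $t$. A cleaner route: fix any single point $w$ that the round needs to be nonerased (either some $x_i$ or some sum in $S$). Since the adversary's decisions cannot depend on the tester's future randomness, and each $x_i$ (respectively each sum $\bigoplus_{i\in I} x_i$) is uniform in $\{0,1\}^d$ — the latter because $I$ is nonempty — the probability that $w$ coincides with any one of the at most $T_{\mathrm{tot}} \cdot t$ erased points accumulated so far is at most $T_{\mathrm{tot}} \cdot t / 2^d$, where $T_{\mathrm{tot}}$ is the total number of queries made by the algorithm before/during the round. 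Actually the sums in $S$ are queried within the round, so I should be careful: when the $i$-th sum is queried, at most (queries so far)$\cdot t$ points are erased, and (queries so far) is at most the total query count of the algorithm, which is $O(\tfrac1\eps \log\tfrac t\eps \cdot q)$ — polylogarithmic in $t/\eps$ and polynomial in $1/\eps$. Then a union bound over all $q + |S| \le 2^q \le (50t/\eps)^2$ points that the round queries gives $\Pr[\overline{G_2}] \le (50t/\eps)^2 \cdot (\text{total queries}) \cdot t / 2^d$, which is again $\le \eps/4$ under the hypothesis $t \le c_0\eps^{5/4}2^{d/4}$ after absorbing the polylogarithmic factors into the choice of $c_0$ (possibly at the cost of a slightly different exponent bookkeeping, which one checks is still dominated by the $\eps^{5/4}2^{d/4}$ bound).

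Combining, $\Pr[\overline G] \le \Pr[\overline{G_1}] + \Pr[\overline{G_2}] \le \eps/4 + \eps/4 = \eps/2$. The main obstacle I anticipate is the second part: one must argue carefully that the relevant points ($x_i$'s and the sums) are each marginally uniform on $\{0,1\}^d$ — which holds because each index set is nonempty and the $x_i$ are independent uniform — and that the adversary's erasure set, though adaptively chosen, is determined before these points' randomness is revealed, so the "birthday-style" union bound is legitimate. The bookkeeping of exactly how many points have been erased by the time each sum is queried, and verifying that the polylogarithmic and polynomial-in-$1/\eps$ factors are still beaten by $2^d/t^4$ under the hypothesis on $t$, is the part that needs the most care; everything else is a routine union bound.
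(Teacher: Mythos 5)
Your proposal is correct and follows essentially the same route as the paper: split $G$ into $G_1$ and $G_2$, bound collisions among the sums by a pairwise $2^{-d}$ argument with a union bound over at most $2^{2q-2}$ pairs (your symmetric-difference justification is a slightly cleaner version of the paper's pairwise-uniformity claim), and bound $\overline{G_2}$ by noting that each $x_i$ and each sum is marginally uniform while the adversary's erased set has size at most (total queries)$\cdot t \le \frac{40qt}{\eps}$, then union bound over the $q+2^{q-1}$ relevant points. The one caveat is your closing remark about "absorbing the polylogarithmic factors into $c_0$": since $q=2\log(50t/\eps)$ grows with $t$ and $d$ it cannot be hidden in a constant; the paper instead closes the bookkeeping by observing $\frac{40qt}{\eps}\le\frac34\cdot 2^q$ (the log factor is beaten by an extra power of $t/\eps$, as $2^q=(50t/\eps)^2$), which gives $\Pr[\overline{G}]\le \frac{2^{2q}}{2^d}=\frac{50^4t^4}{\eps^4 2^d}\le 50^4c_0^4\eps\le\frac{\eps}{2}$ under $t\le c_0\eps^{5/4}2^{d/4}$ -- exactly the "exponent bookkeeping" you anticipated, and it does go through.
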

\begin{proof}
First, we analyze event $G_1$. Consider points $x_{i_1}, \dots, x_{i_{k}}, x_{i_1'}, \dots, x_{i_{\ell}'} \sim \{0,1\}^d$,  where $\{i_1, \dots, i_k\} \neq \{i_1', \dots, i_\ell'\}$, $k, \ell \in [q]$ and $k, \ell$ are even. Since the points are distributed uniformly and independently, so are the sums $x_{i_1} \oplus\dots\oplus x_{i_{k}}$ and $x_{i_1'}\oplus\dots\oplus x_{i_{\ell}'}$. The probability that two uniformly and independently sampled points $x,y \sim \{0,1\}^d$ are identical is $\frac{1}{2^d}$. The number of sets $I \subseteq [q]$ of even size is $2^{q-1}$ because every subset of $[q-1]$ can be uniquely completed to such a set $I$.
By a union bound over all pairs of sums, $\Pr[\overline{G_1}] 
\leq \frac{2^{2q}}{4 \cdot 2^d}$. 

To analyze $G_2,$ fix any adversarial strategy. The number of queries made by \Alg{linearity_3} is at most 
\begin{equation}\label{eq:num-queries}
    \sum_{j=1}^{ \log(8/\eps) } \frac{8\ln 5}{2^j \eps} \Big(q + 4 \cdot 2^j \Big) \leq \frac{8\ln 5}{\eps} q + \frac{32 \ln 5}{\eps} \log\frac 8\eps \leq \frac{8\ln 5}{\eps} q + \frac{16 \ln 5}{\eps}q \leq \frac{40 q}{\eps }.
\end{equation}
Hence, the oracle erases at most $ \frac{40qt}{\eps}$ points. Since each point $x_i$ is sampled uniformly from $\{0,1\}^d$,
\begin{equation*}
    \Pr_{x_i \sim \{0,1\}^d} [x_i \textnormal{ is erased when queried}] \leq  \frac{40qt}{\eps\cdot 2^d}.
\end{equation*}
Additionally, before the queries $x_i$ are revealed to the oracle, each sum $\bigoplus_{i \in I}x_i$ is distributed uniformly at random. Therefore, for every $\{i_1, \dots, i_k\} \subseteq [q]$,
\begin{equation*}
    \Pr_{x_{i_1}, \dots,  x_{i_{k}} \sim \{0,1\}^d} [x_{i_1} \oplus \dots \oplus x_{i_{k}} \textnormal{ is erased at the beginning of the round}] \leq \frac{40qt}{\eps\cdot 2^d}.
\end{equation*}
By a union bound over the $q$ points sampled in Step~\ref{step:query_x_i} and at most $2^{q-1}$ sums, we get 
\[
\Pr[\overline{G_2}] \leq  \frac{40qt}{\eps  2^d}(q+2^{q-1}) \leq  \frac{40  qt \cdot 2^q }{\eps \cdot   2^d}.
\]
Since $q = 2\log\frac{50t}\eps$, we get  $\frac{40qt}{\eps} \leq \frac{3}{4} \cdot 2^q$ and, consequently,
\begin{align*}
     \Pr[\overline{G}] \leq \frac{2^{2q}}{4 \cdot 2^d} + \frac{40qt\cdot 2^q }{\eps \cdot  2^d}  
     \leq \frac{ 2^{2q}}{2^d} \leq \frac{ 50^4 \cdot t^4}{\eps^4 2^d} \leq \frac{50^4 \cdot c_0^4 \cdot \eps^5}{\eps^4} \leq \frac{\eps}{2},
\end{align*}
since $t \leq c_0 \cdot \eps^{5/4} \cdot 2^{d/4}$, as stated in \Thm{linearity_3}, and assuming $c_0$ is sufficiently small.
\end{proof}

Next, we state the work investment lemma.
\begin{lemma}[Lemma 2.5 of \cite{BermanRY14}] \label{lem:work}
Let $X$ be a random variable taking values in $[0,1]$. Suppose $\E[X] \geq \alpha$. Let $s = \lceil \log(\frac{4}{\alpha}) \rceil$ and $\delta \in (0,1)$ be the desired probability of error. For all $j \in [s]$, let $p_j = \Pr[X \geq 2^{-j}]$ and $k_j =  \frac{4\ln(1/\delta)}{2^j\alpha}$. Then
$
    \prod_{j=1}^{s}(1-p_j)^{k_j} \leq \delta.
$
\end{lemma}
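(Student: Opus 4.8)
The plan is to take logarithms and reduce the claimed inequality to a single statement about a weighted sum of the tail probabilities $p_j$. If $p_j=1$ for some $j\in[s]$, then $\prod_{j=1}^s(1-p_j)^{k_j}=0\le\delta$ and we are done, so assume $p_j<1$ throughout. Using $\ln(1-p)\le -p$ for $p\in[0,1)$,
\[
\ln\Big(\prod_{j=1}^s(1-p_j)^{k_j}\Big)=\sum_{j=1}^s k_j\ln(1-p_j)\le -\sum_{j=1}^s k_j p_j,
\]
so it suffices to show $\sum_{j=1}^s k_j p_j\ge\ln(1/\delta)$. Substituting $k_j=\frac{4\ln(1/\delta)}{2^j\alpha}$ and cancelling the factor $\ln(1/\delta)$, this is equivalent to the purely combinatorial inequality $\sum_{j=1}^s p_j/2^j\ge\alpha/4$.

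The heart of the argument is the dyadic estimate $\sum_{j\ge 1}p_j/2^j\ge\E[X]$. I would prove this pointwise: for every $x\in[0,1]$,
\[
x\ \le\ \sum_{j\ge 1}2^{-j}\,\mathbf{1}[x\ge 2^{-j}],
\]
since for $x\in[2^{-(m+1)},2^{-m})$ with $m\ge 0$ the right-hand side equals $\sum_{j\ge m+1}2^{-j}=2^{-m}>x$, and the bound holds trivially at $x=0$ and with equality at $x=1$. Taking expectations and recalling $p_j=\Pr[X\ge 2^{-j}]$ gives $\E[X]\le\sum_{j\ge 1}p_j/2^j$, hence $\sum_{j\ge 1}p_j/2^j\ge\alpha$ by the hypothesis $\E[X]\ge\alpha$.

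Finally I would truncate the sum at $s=\lceil\log(4/\alpha)\rceil$. The discarded tail is at most $\sum_{j>s}2^{-j}=2^{-s}\le\alpha/4$, where the last step uses $s\ge\log(4/\alpha)$. Subtracting, $\sum_{j=1}^s p_j/2^j\ge\alpha-\alpha/4\ge\alpha/4$, which is exactly the inequality we needed. Chaining back through the reduction gives $\sum_{j=1}^s k_j p_j\ge\ln(1/\delta)$, and exponentiating the log-inequality above yields $\prod_{j=1}^s(1-p_j)^{k_j}\le\delta$.

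I expect essentially no serious obstacle here: the only mildly nonobvious ingredient is the pointwise dyadic bound, and the one thing to get right is that the choice of $s$ is precisely what makes the discarded tail $2^{-s}$ no larger than $\alpha/4$ (together with the trivial handling of the degenerate case $p_j=1$). Everything else is bookkeeping with logarithms and a geometric series.
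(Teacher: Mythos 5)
The paper does not actually prove this lemma---it is imported verbatim as Lemma~2.5 of \cite{BermanRY14}---so there is no in-paper proof to compare against; your argument is correct and is essentially the standard one for this work-investment bound. The reduction via $\ln(1-p)\le -p$, the pointwise dyadic estimate $x\le\sum_{j\ge1}2^{-j}\mathbf{1}[x\ge 2^{-j}]$ (giving $\sum_j p_j/2^j\ge\E[X]\ge\alpha$), and the observation that $s=\lceil\log(4/\alpha)\rceil$ makes the discarded tail at most $2^{-s}\le\alpha/4$ all check out, as does your separate handling of the degenerate case $p_j=1$.
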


\begin{proof}[Proof of \Thm{linearity_3}]
By \Eqn{num-queries}, the query complexity of \Alg{linearity_3} is $O(\frac q \eps)=O(\frac{\log (t/\eps)}\eps)$. \Alg{linearity_3} is nonadaptive and always accepts if $f$ is linear. Suppose now that $f$ is $\eps$-far from linear and fix any adversarial strategy. We show that~\Alg{linearity_3} rejects with probability at least $2/3$. 

 Consider the \emph{last} round of \Alg{linearity_3}. For points $x_1, \dots, x_q \sim \{0,1\}^d$ sampled in Step~\ref{step:query_x_i} of this last round, let $Y$ denote the fraction of nonempty sets $\{i_1, \dots, i_{k}\} \subseteq [q]$ such that $k$ is even and $(x_{i_1}, \dots, x_{i_{k}})$ violates linearity. Recall the event $G$ defined in \Lem{distinct_and_nonerased}. Let  $\mathbf{1}_{G}$ be the indicator random variable for the event $G$ for the last round.

\begin{claim}\label{clm:x}
Let $X = Y \cdot \mathbf{1}_{G}$, where $Y$ is as defined above. Then $\E[X] \geq \frac{\eps}{2}$. 
\end{claim}
\begin{proof}
For all nonempty $\{i_1, \dots, i_k \} \subseteq [q]$, such that $k$ is even, let $Y_{i_1,\dots, i_{k}}$ be the indicator
for the event that $(x_{i_1}, \dots, x_{i_{k}})$ violates linearity. By \Thm{fourier} and the fact that $k$ is even, 
\[\E[Y_{i_1,\dots, i_{k}}] = \Pr_{x_{i_1},\dots,  x_{i_{k}} \sim \{0,1\}^d}[ (x_{i_1}, \dots, x_{i_{k}})\textnormal{ violates linearity}] \geq \eps.\] 
We obtain a lower bound on $\E[Y]$ by linearity of expectation.
\begin{equation}\label{eq:y}
    \E[Y]= \frac{1}{2^{q-1} - 1}\sum_{\{i_1, \dots, i_k\} \subseteq [q],k \textnormal{ even}} \E[Y_{i_1,\dots, i_{k}}] \geq \eps.
\end{equation}

Observe that $X=Y$ when $G$ occurs, and $X=0$ otherwise. By the law of total expectation,
\begin{align*}
 \E[X] &=\E[X|G]\cdot\Pr[G]+ \E[X|\overline{G}]\cdot\Pr[\overline{G}] 
 = \E[X|G]\cdot\Pr[G]=\E[Y|G]\cdot\Pr[G] \\
&=\E[Y]- \E[Y|\overline{G}]\cdot\Pr[\overline{G}]\geq \eps - 1\cdot (\eps/2) = \eps/2,
\end{align*}
where the inequality follows from  \Eqn{y}, the fact that $Y\leq 1$, and \Lem{distinct_and_nonerased}. 
\end{proof}

Fix any round of \Alg{linearity_3} and the value of $j$ used in this round (as defined in Step~\ref{step:define_j}). Let $X'$ be defined as $X$, but for this round instead of the last one.
The round is \emph{special} if $X' \geq 2^{-j}$. Let $p'_j = \Pr[X' \geq 2^{-j}]$ and $p_j = \Pr[X\geq 2^{-j}]$. 
Then  $p'_j\geq p_j$, since the number of erasures only increases with each round. For each $j$, there are $k_j = \frac{8 \ln 5}{2^j \eps}$ rounds of \Alg{linearity_3} that are run with this particular value of $j$. Since \Alg{linearity_3} uses independent random coins for each round, the probability that no round is special is at most
\begin{equation*}
    \prod_{j=1}^{\log \frac{8}{\eps}}(1-p'_j)^{k_j} \leq \prod_{j=1}^{\log \frac{8}{\eps}}(1-p_j)^{k_j} \leq \frac{1}{5},
\end{equation*}
where the last inequality follows by \Lem{work} applied with $\delta=1/5$ and $\alpha=\eps/2$ and \Clm{x}. Therefore, with probability at least $\frac{4}{5}$, \Alg{linearity_3} has a special round.

Consider a special round of \Alg{linearity_3} and fix the value of $j$ for this round. We show that \Alg{linearity_3} rejects in the special round 
with probability at least $5/6$. We call a sum $\bigoplus_{i\in I}x_i$ {\em violating} if the tuple $(x_i \colon i \in I)$ violates linearity. Since $G$ occurred, all $q$ points queried in Step~\ref{step:query_x_i} of \Alg{linearity_3} were successfully obtained. So, the algorithm will reject as soon as it successfully obtains a violating sum. Since $G$ occurred, there are at least $2^{q-1}-1$ distinct sums that can be queried in Step~\ref{step:sample-I}, all of them nonerased at the beginning of the round. \Alg{linearity_3} makes at most $q + 4\cdot 2^j$ queries in this round, and thus the fraction of these sums erased during the round is at most
\begin{align*}
    t\cdot\frac{q + 4\cdot 2^j}{2^{q-1}-1}
   &\leq  t\cdot \Big(\frac{1}{2^{q/2}} + \frac{3\cdot 2^j}{2^{q-2}} \Big)
   \leq 
  \frac{t}{t \cdot \frac{50}{\eps}} + \frac{12t \cdot 2^j}{t^2 \cdot (\frac{50}{\eps})^2} \\
   &=   \frac{8}{50 \cdot \frac{8}{\eps}} + \frac{12 \cdot 8^2\cdot 2^j}{50^2t \cdot (\frac{8}{\eps})^2} 
   \leq 
 \frac{0.16}{2^j} + \frac{0.3072}{ 2^j} 
      \leq 
     \frac{1}{2 \cdot 2^j},
\end{align*}
where in the first inequality we used that $\frac{q}{2^{q-1}-1} \leq \frac{1}{2^{q/2}}$  for $q \geq 9$ and that $2^{q-1}-1\geq (4/3)\cdot 2^{q-2}$ for $q\geq 3$ (note that $q \geq 2\log(50t/\eps)\geq 2\log 100>13$), in the second inequality we used $q \geq 2\log(50t/\eps)$, and in the third inequality we used $2^j \leq \frac{8}{\eps}$ and $t\geq 1$. 

Since the round is special, at least a $2^{-j}$ fraction of the sums that can be queried in Step~\ref{step:sample-I} are violating.
Thus,
the fraction of the sums that are violating and nonerased before each iteration of Steps~\ref{step:sample-I}-\ref{step:reject} in this round is at least $2^{-j}-2^{-j-1}=2^{-j-1}.$

Then, each iteration of Steps~\ref{step:sample-I}-\ref{step:reject} rejects with probability at least $2^{-j-1}.$ Since there are $4\cdot 2^j$ iterations with independently chosen sums, the probability that the special round accepts is at most 
\begin{align*}
(1-2^{-j-1})^{4 \cdot 2^j} \leq \eee^{-2}\leq  1/ 6.
\end{align*}
That is, the probability that \Alg{linearity_3} rejects in the special round is at least $5/6$. Since the special round exists with probability at least $\frac{4}{5}$, \Alg{linearity_3} rejects with probability at least $\frac{4}{5}\cdot \frac{5}{6} = \frac{2}{3}$. 
\end{proof}

\section{An online-erasure-resilient linearity tester with $O(t/\eps)$ queries} \label{sec:linearity_appendix}

In this section, we present our online-erasure-resilient linearity tester with query complexity $O(t/\eps)$ (\Alg{linearity}) and prove  \Thm{linearity} below. \Thm{linearity} together with \Thm{linearity_3} implies \Thm{linearity_min}. 

\begin{theorem}\label{thm:linearity}
There exist a constant $c_0 \in (0,1)$ and a 1-sided error, nonadaptive, $t$-online-erasure-resilient $\eps$-tester for linearity of functions $f\colon \{0,1\}^d \to \{0,1\}$ that  works for $t \leq c_0\cdot\eps \cdot  2^{d/4}$ and makes $O(\frac{t}{\eps})$ queries.
\end{theorem}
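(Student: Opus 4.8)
The plan is to prove \autoref{thm:linearity} by presenting \Alg{linearity}, a second online-erasure-resilient linearity tester whose query complexity is $O(t/\eps)$, and hence $O(1/\eps)$ — optimal — for constant~$t$; together with \autoref{thm:linearity_3} it yields the $O\bigl(\min\bigl(\frac1\eps\log\frac t\eps,\frac t\eps\bigr)\bigr)$ bound of \autoref{thm:linearity_min}. Like \Alg{linearity_3}, the tester builds a reserve of $q=\Theta(\log\frac t\eps)$ uniformly and independently sampled points, queried once, and then repeatedly probes $f$ at sums $\bigoplus_{i\in I}x_i$ for freshly sampled uniformly random nonempty even-size subsets $I\subseteq[q]$, rejecting as soon as it successfully obtains a violating tuple. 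The only structural change from \Alg{linearity_3} is that the adaptive work-investment schedule (\autoref{lem:work}) is replaced by a flat budget of $\Theta(t/\eps)$ probes; since $\log\frac t\eps\le\frac t\eps$, the total number of queries is $q+\Theta(t/\eps)=O(t/\eps)$.

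The analysis rests on two facts, mirroring the proof of \autoref{thm:linearity_3}. The first is the analogue of \autoref{lem:distinct_and_nonerased}: over the randomness of the reserve and for every adversarial strategy, with probability at least $1-\eps/8$ all $2^{q-1}-1$ reserve sums are pairwise distinct, all reserve points are successfully obtained, and throughout the execution the adversary erases at most an $\eps/4$ fraction of the reserve sums. Distinctness is a union bound over $\binom{2^{q-1}}{2}$ pairs, which needs $2^{2q}\ll 2^d$; the erasure bound uses that the adversary's total budget is $t$ times the $O(t/\eps)$ queries, i.e.\ $O(t^2/\eps)$ erasures, against $2^{q-1}-1=\Theta((t/\eps)^2)$ reserve sums, so for a suitable choice of the constant in $q$ and a small enough $c_0$ the erased fraction is at most $\eps/4$ whenever $t\le c_0\cdot\eps\cdot 2^{d/4}$. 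The second fact is \autoref{thm:fourier}: a fresh even-subset sum is violating with probability at least $\eps$ (the subset has a random even size, and for each fixed even size the bound of \autoref{thm:fourier} applies), while it is erased with probability at most $\eps/4$; since $\Pr[\text{violating}\wedge\text{erased}]\le\Pr[\text{erased}]$, a single probe makes the tester reject with probability at least $3\eps/4$. Tracking, as in the analysis of a \emph{special round} in the proof of \autoref{thm:linearity_3}, the fraction of sums that are violating and have never been erased, one then shows that $\Theta(t/\eps)$ probes reject with probability at least $2/3$.

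I expect the main obstacle to be this last step — showing that a flat budget of $\Theta(t/\eps)$ probes against the reserve suffices, i.e.\ that with constant probability the reserve carries enough violating sums that the adversary's online erasures cannot suppress all of them without the adaptive schedule of \autoref{lem:work}. This is precisely what dictates the larger reserve and the stronger requirement $t\le c_0\cdot\eps\cdot 2^{d/4}$ (rather than $t\le c_0\cdot\eps^{5/4}\cdot 2^{d/4}$). The remaining ingredients — the good-events lemma and the reduction to \autoref{thm:fourier} — are adaptations of the corresponding parts of the proof of \autoref{thm:linearity_3}, and the tester's nonadaptivity and one-sided error are immediate from its description.
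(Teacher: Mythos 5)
There is a genuine gap, and it is exactly the step you flag as ``the main obstacle'': it is not a technicality to be filled in, it is the heart of the proof, and the argument you sketch for it is invalid. Your plan keeps a \emph{single} small reserve of $q=\Theta(\log\frac t\eps)$ points and probes it with a flat budget of $\Theta(t/\eps)$ even-subset sums. \autoref{thm:fourier} gives only a first-moment statement: the \emph{marginal} probability (over the reserve and the subset jointly) that one probe hits a violating tuple is at least $\eps$. But all of your probes share the same reserve, so the relevant quantity is the conditional quality $Y$ of the realized reserve (the fraction of even subsets that are violating), and $\E[Y]\geq\eps$ alone is compatible with, say, $Y=1$ with probability $\eps$ and $Y=0$ otherwise; in that case $\E[(1-Y)^m]\geq 1-\eps$ for every probe budget $m$, so no flat number of probes reaches success probability $2/3$. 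Your step ``a single probe rejects with probability at least $3\eps/4$, hence $\Theta(t/\eps)$ probes reject with probability $2/3$'' multiplies dependent events as if they were independent. This is precisely the difficulty that forces the work-investment schedule (\autoref{lem:work}) in the analysis of \Alg{linearity_3}, and removing that schedule without replacing it by a concentration statement for $Y$ (i.e., $\Pr[Y=\Omega(\eps)]=\Omega(1)$) leaves the proof incomplete; nothing in \autoref{thm:fourier} or your good-events lemma supplies such concentration, and proving it for even-subset sums over a logarithmic-size reserve would require a second-moment (covariance) analysis that you do not attempt.

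The paper's proof of \autoref{thm:linearity} takes a structurally different route that sidesteps this issue. Its tester (\Alg{linearity}) uses a \emph{large} reserve of $q=88t/\eps$ points and only BLR pair sums $x_i\oplus x_j$, probed $24/\eps$ times. The missing concentration is then provable: \autoref{lem:violating_sums} shows via Chebyshev's inequality, with covariance bounds from \autoref{clm:witness_upper} and the tight first-moment bound of Bellare et al.\ for pairs, that with probability at least $3/4$ at least an $\frac\eps4$ fraction of the $\binom q2$ pair sums are violating; this second-moment argument is exactly where $q=\Theta(t/\eps)$ (rather than $\Theta(\log\frac t\eps)$) is needed, since the variance-to-squared-mean ratio scales like $1/(q\eps)$. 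Resilience to the adversary then comes from counting: the total erasure budget is at most $2qt=O(t^2/\eps)$, a vanishing fraction of the $\binom q2=\Theta(t^2/\eps^2)$ pair sums, so before each probe an $\Omega(\eps)$ fraction of sums is both violating and nonerased, and $24/\eps$ independent probes reject with constant probability. If you want to salvage your small-reserve, even-subset design, the task is to prove an analogue of \autoref{lem:violating_sums} for even-subset tuples of a logarithmic-size reserve; as written, your proposal asserts the conclusion of such a lemma without proving it.
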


\begin{algorithm} %
	\caption{An Online-Erasure-Resilient Linearity Tester}
	\begin{algorithmic}[1]
		\Require{$\eps \in (0,\frac{1}{2})$, erasure budget $t\in\mathbb{N}$, access to function $f: \{0,1\}^d \rightarrow \{0,1\}$ via a $t$-online-erasure oracle.}
		
		\State Let $c = 88$ and $q = \:\frac{ct}{\eps}$. 
		
		\State\textbf{for all} {$i \in [q]$}:\label{step:loop_query_x_i_2}

		\State  \indent Sample $x_i \sim \{0,1\}^d$ and query $f$ at $x_i$. \label{step:query_x_i_2}
		
		\State\textbf{repeat} $\frac{24}{\eps}$ times:\label{step:query_sum}
		
		\State  \indent Sample a uniform $(i, j)$ such that $i,j\in[q]$ and $i <j$, then  query $f$ at $x_i \oplus x_j$.\label{step:sample-ij}
		
		\State  \indent {\textbf{Reject}} if $x_i, x_j$, and $x_i\oplus x_j$ are nonerased and $f(x_i) + f(x_j) \neq f(x_i\oplus x_j)$.\label{step:check_I}

		\State \textbf{Accept.}
	\end{algorithmic}
	\label{alg:linearity}
\end{algorithm}

In the analysis of \Alg{linearity}, we let $x_i$, for all $i \in [q]$, be random variables denoting the points queried in~Step~\ref{step:query_x_i_2}. 
Recall that a pair $(x, y)$ of domain points  \emph{violates} linearity if $f(x) + f(y) \neq f(x \oplus y)$.
The proof of \Thm{linearity} crucially relies on the following lemma about the concentration of pairs violating linearity. %

\begin{lemma}\label{lem:violating_sums}
Suppose $f$ is $\eps$-far from linear and $q = \frac{ct}{\eps}$. Let $Y$ denote the number of pairs $(i,j) \in [q]^2$ such that $i < j$ and $(x_i,x_j)$ violates linearity. Then $\Pr[Y \leq \frac{\eps}{4}\cdot \binom{q}{2}]\leq 0.25$.  %
\end{lemma}

We prove \Lem{violating_sums} in \Sec{violating_sums}. In the rest of this section, we first prove an auxiliary lemma (\Lem{distinct-and-nonerased}) and then use Lemmas~\ref{lem:violating_sums}~and~\ref{lem:distinct-and-nonerased} to prove \Thm{linearity}.

We say a query $u \in \{0,1\}^d$ is {\em successfully obtained} if it is nonerased right before it is queried, i.e., the tester obtains $f(u)$ as opposed to the symbol~$\perp$.

\begin{lemma}\label{lem:distinct-and-nonerased}
Let $G_1$ be the (good) event that all $\binom{q}{2}$ sums $x_i \oplus x_j$ of points queried in Step~\ref{step:query_x_i_2} of \Alg{linearity} are distinct, and $G_2$ be the (good) event that all $q$ points $x_i$ queried in Step~\ref{step:query_x_i_2} are successfully obtained. Then $\Pr[\overline{G_1} \cup \overline{G_2}] \le 0.05$ for all adversarial strategies.
\end{lemma}
\begin{proof}
First, we analyze event $G_1$. Consider points $x_i, x_j, x_k, x_\ell \sim \{0,1\}^d$,  where $\{i,j\} \neq \{k, \ell\}$.  Since these points are distributed uniformly and independently, so are the sums $x_i \oplus x_j$ and $x_k \oplus x_\ell$. The probability that two uniformly and independently sampled points $x,y \sim \{0,1\}^d$ are identical is  $\frac{1}{2^d}$. Therefore, by a union bound over all pairs of sums, $\Pr[\overline{G_1}] \leq \binom{q}{2}^2 \cdot \frac{1}{2^d} \leq \frac{q^4}{4\cdot 2^d}.$ 

To analyze $G_2,$ fix an arbitrary adversarial strategy.
In Steps~\ref{step:loop_query_x_i_2}-\ref{step:query_x_i_2}, the algorithm makes at most $q$ queries. Hence the oracle erases at most $qt$ points from $\{0,1\}^d$. Since each point $x_i$ is sampled uniformly from $\{0,1\}^d$,
\begin{equation*}
    \Pr_{x_i \sim \{0,1\}^d} [x_i \textnormal{ is erased when queried}] \leq \frac{qt}{ 2^d}.
\end{equation*}
By a union bound over the $q$ points sampled in Step~\ref{step:query_x_i_2}, we get $\Pr[\overline{G_2}] \leq \frac{q^2t}{2^d}$. We substitute $q = \frac{ct}{\eps}$ and get
\begin{align*}
     \Pr[\overline{G_1} \cup \overline{G_2}] \leq \frac{q^4}{4 \cdot 2^d} + \frac{q^2t}{2^d} \leq \frac{q^4}{2^d} = \frac{c^4 t^4}{\eps^4 \cdot 2^d} \leq  c^4c_0^4 \leq 0.05,
\end{align*}
since $t \leq c_0\cdot\eps \cdot  2^{d/4}$, as stated in \Thm{linearity}, and assuming $c_0$ is sufficiently small.
\end{proof}

\begin{proof}[Proof of \Thm{linearity}] The query complexity of \Alg{linearity} is evident from its description. Clearly, \Alg{linearity} is nonadaptive and accepts if $f$ is linear. Suppose now that $f$ is $\eps$-far from linear and fix an arbitrary adversarial strategy. We show that~\Alg{linearity} rejects with probability at least $\frac{2}{3}$.

We call a sum $x_i \oplus x_j$ queried in Step~\ref{step:sample-ij} {\em violating} if the pair $(x_i, x_j)$ violates linearity. For \Alg{linearity} to reject, %
it must sample $i,j\in[q]$ such that $x_i, x_j,$ and $x_i \oplus x_j$ are successfully obtained and $x_i \oplus x_j$ is a violating sum. 
Let $E_k,$ for all $k\in[24/\eps],$ be the event that in iteration $k$ of Step~\ref{step:query_sum}, the algorithm successfully obtains a violating sum. Let $E=\bigcup_{k\in[24/\eps]} E_k$. 
We define the following good event concerning the execution of the \textbf{for} loop: $G=G_1\cap G_2\cap G_3,$ where $G_1$ and $G_2$ are as defined in \Lem{distinct-and-nonerased}, and $G_3$ is the event that $Y$, defined in \Lem{violating_sums}, is at least $\frac{\eps} 4 \cdot\binom{q}{2}$. By a union bound and Lemmas~\ref{lem:violating_sums}~and~\ref{lem:distinct-and-nonerased}, we know that $\Pr[G]\geq 0.7.$
Therefore, the probability that \Alg{linearity} rejects 
   is at least 
   \begin{align*}
         \Pr[E \cap G]=\Pr[G]\cdot\Pr[E|G]\geq 0.7\cdot\Pr[E|G].
   \end{align*}

Suppose that $G$ occurs for the execution of the \textbf{for} loop. Since $G_2$ occurred, all $q$ points queried in Step~\ref{step:query_x_i_2} of \Alg{linearity} were successfully obtained. So, the algorithm will reject as soon as it successfully obtains a violating sum. Since $G_1$ occurred, there are at least $\binom{q}{2}$ distinct sums that can be queried in Step~\ref{step:sample-ij}. Since $G_3$ occurred, at least $\frac \eps 4\cdot \binom{q}{2}$ of them are violating.
The total number of erasures the adversary makes over the course of the entire execution is at most $2qt$, since \Alg{linearity} queries at most $2q$ points.
Therefore, the fraction of sums that are nonerased and violating before each iteration of Step~\ref{step:sample-ij} is at least
\begin{align*}
\frac \eps 4 -\frac {2qt}{\binom q 2}
\geq\frac \eps 4 -\frac {5t}{q}
=\frac \eps 4 -\frac {5\eps}{c}
=\frac \eps 4 -\frac {5\eps}{88}
> \frac \eps 6. 
\end{align*}
That is, $\Pr[E_k|G]> \eps/6$ for all $k\in[24/\eps].$
Observe that the indicator for the event $E_k$, for every fixed setting of random coins (of the algorithm and the adversary) used prior to iteration $k$ of Step~\ref{step:sample-ij}, 
is a Bernoulli random variable with the probability parameter equal to the fraction of sums that are violating and nonerased right before iteration $k$. %
Since independent random coins are used in each iteration of Step~\ref{step:sample-ij},
\begin{align*}
   \Pr[\overline{E}|G]
\leq \Pr\big[\bigcap_{k\in[24/\eps]}\overline{E_k}\mid G\big]
< (1-\eps/6)^{24/\eps}
\leq \eee^{-4}. 
\end{align*}
This yields that $\Pr[E\cap G]>0.7\cdot(1-\eee^{-4})> 2/3$, completing the proof of the theorem.
\end{proof}

\subsection{Concentration of pairs that violate linearity}\label{sec:violating_sums}

In this section, we prove \Lem{violating_sums} on the concentration of pairs that violate linearity. %
The proof relies on  \Clm{witness_upper} that gives upper bounds on the fraction of pairs $(x,y) \in (\{0,1\}^d)^2$ that violate linearity and on the fraction of triples $(x, y, z) \in \{0,1\}^{d\times 3}$ such that $(x,y)$ and $(x,z)$ violate linearity. The upper bounds are stated as a function of the distance to linearity.
The distance of a function $f$ to linearity, denoted $\eps_f$, is the minimum over all linear functions $g$ over the same domain as $f$ of 
$\Pr_x[f(x)\neq g(x)]$.

 We remark that a tighter upper bound than \Eqn{witness_upper} in \Clm{witness_upper} is shown in \cite[Lemma 3.2]{BellareCHKS96}. Since we only need a looser upper bound, we provide a proof for completeness. 

\begin{claim}\label{clm:witness_upper}
For all $f\colon \{0,1\}^d \to \{0,1\}$, the following hold:
\begin{align}
    \Pr_{x,y\sim \{0,1\}^d}[(x,y) \textnormal{ violates linearity}] &\leq 3\eps_f; \label{eq:witness_upper}\\
        \Pr_{x,y,z\sim \{0,1\}^d}[(x,y) \textnormal{ and } (x, z) \textnormal{ violate linearity}] &\leq \eps_f + 4\eps_f^2.\label{eq:double_witness_upper} 
\end{align}
\end{claim}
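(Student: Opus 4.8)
The plan is to work with a fixed closest linear function $g$ to $f$, and let $D = \{x \in \{0,1\}^d : f(x) \neq g(x)\}$ be the set of points where $f$ and $g$ disagree, so that $|D|/2^d = \eps_f$. The key observation is that, since $g$ is linear, $g(x) + g(y) = g(x \oplus y)$ for all $x, y$; hence, if the pair $(x,y)$ violates linearity for $f$, then $f$ must differ from $g$ on at least one of the three points $x$, $y$, $x \oplus y$, i.e., at least one of these points lies in $D$. The second ingredient is that when $x, y \sim \{0,1\}^d$ are uniform and independent, each of $x$, $y$, and $x \oplus y$ is (marginally) uniformly distributed over $\{0,1\}^d$.

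For \autoref{eq:witness_upper}, I would simply apply a union bound: $\Pr[(x,y) \text{ violates linearity}] \le \Pr[x \in D] + \Pr[y \in D] + \Pr[x \oplus y \in D] = 3\eps_f$, using the uniformity of each of the three points.

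For \autoref{eq:double_witness_upper}, I would split on whether $x \in D$. If $x \in D$, which happens with probability $\eps_f$, I bound the probability of the joint event by $1$. If $x \notin D$, then for $(x,y)$ to violate linearity we need $y \in D$ or $x \oplus y \in D$, and for $(x,z)$ to violate linearity we need $z \in D$ or $x \oplus z \in D$. Conditioned on any fixed value $x = a \notin D$, the variables $y$ and $z$ remain uniform and independent; the first event depends only on $y$ and the second only on $z$, so they are conditionally independent, and each has probability at most $2\eps_f$ by a union bound (again using that $a \oplus y$ is uniform when $y$ is). Hence the conditional probability of the joint event is at most $4\eps_f^2$, and summing the two cases gives
\[
\Pr_{x,y,z}[(x,y)\text{ and }(x,z)\text{ violate linearity}] \le \Pr[x \in D]\cdot 1 + \Pr[x \notin D]\cdot 4\eps_f^2 \le \eps_f + 4\eps_f^2.
\]

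There is no real obstacle here; the only point requiring a moment of care is the conditional-independence step — namely that after conditioning on the value of $x$, the event concerning $(x,y)$ is measurable with respect to $y$ alone and the event concerning $(x,z)$ with respect to $z$ alone, so that their probabilities multiply. Everything else is a union bound combined with the marginal uniformity of XOR-sums of independent uniform vectors.
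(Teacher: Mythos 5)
Your proposal is correct and follows essentially the same route as the paper: fix the closest linear function, let $D$ be the disagreement set, use a union bound over $x$, $y$, $x \oplus y$ for the first inequality, and condition on whether $x \in D$ for the second. The only cosmetic difference is in bounding the conditional probability given $x \notin D$: you factor it as a product of two probabilities each at most $2\eps_f$ via conditional independence of the $y$-event and the $z$-event, whereas the paper union-bounds over the four combinations, each of probability $\eps_f^2$ — the same $4\eps_f^2$ bound either way.
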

\begin{proof}
Let $g$ be the closest linear function to $f$. Let $S \subseteq \{0,1\}^d$ be the set of points on which $f$ and $g$ differ, i.e., $S = \{x \in \{0,1\}^d \: | \: f(x)\neq g(x)\}$. Then $|S| = \eps_f \cdot 2^d$.

For a pair $(x, y)$ to violate linearity, at least one of $x, y$, and  $x\oplus y$ must be in $S$. By a union bound,
\begin{align*}
  \Pr_{x,y \sim\{0,1\}^d}[(x,y)
  \textnormal{ violates linearity for $f$}] 
 &\leq \Pr_{x,y\sim \{0,1\}^d}[x \in S \vee y \in S \vee x\oplus y \in S]\\ 
     &\leq 3\Pr_{x\sim\{0,1\}^d}[x \in S] = 3\eps_f,
\end{align*}
completing the proof of \Eqn{witness_upper}. 

To prove \Eqn{double_witness_upper}, let $E_{x,y,z}$ be the event that $(x,y)$ and $(x,z)$ violate linearity. By the law of total probability,
\begin{align*}   %
      \Pr_{x,y,z\sim \{0,1\}^d}&[E_{x,y,z}] \\
      &=      \Pr[E_{x,y,z} \: | \: x\in S]\Pr[x \in S]+ \Pr[E_{x,y,z} \: | \: x\notin S]\Pr[x \notin S] \nonumber \\
      &\leq 1\cdot \eps_f  + \Pr[E_{x,y,z} \: | \: x\notin S] \\ 
      &\leq \eps_f + \Pr[(y,z \in S) \vee (x \oplus y, z \in S) \vee (y, x \oplus z \in S) \vee (x \oplus y, x \oplus z \in S)\mid x \notin S]\\
      &\leq \eps_f + 4\Pr[y,z \in S\mid x \notin S] 
      = \eps_f + 4\eps_f^2,
\end{align*}
where, in the last line, we used symmetry and a union bound, and then independence of $x, y$, and $z$. 
\end{proof}

\begin{proof}[Proof of \Lem{violating_sums}]
We prove the lemma by applying Chebyshev's inequality to $Y$.
For all $i, j\in[q]$, where $i < j,$ let $Y_{i,j}$ be the indicator 
for the event that $(x_i, x_j)$ violates linearity.
Then $Y = \sum_{i < j \in [q]} Y_{i,j}$. 

First, we obtain a lower bound on $\E[Y]$.  By linearity of expectation and symmetry,
\begin{equation*}
    \E[Y]= \sum_{i < j \in [q]} \E[Y_{i,j}] 
    = \binom{q}{2}\cdot \Pr_{x_i, x_j \sim \{0,1\}^d}[(x_i, x_j) \textnormal{ violates linearity}] \geq \binom{q}{2}\eps_f,
\end{equation*}
where the inequality is due to \cite[Theorem 1.2]{BellareCHKS96}.

Next, we obtain an upper bound on $\Var[Y]$. We have
\[
  \Var[Y] = \sum_{i < j} \Var[Y_{i,j}] + \sum_{\substack{i<j,k<l \\ \{i,j\} \neq \{k,\ell\}   }} \Cov[Y_{i,j}Y_{k,l}].
\]
By~\Eqn{witness_upper}, we get $\Var[Y_{i,j}] 
\leq \E[Y^2_{i,j}]=\E[Y_{i,j}] \leq 3\eps_f$. Therefore,
\begin{equation}\label{eq:variance}
    \sum_{i < j} \Var[Y_{i,j}] \leq 3\binom{q}{2}\eps_f.
\end{equation}
When $\{i,j\} \cap \{k,\ell\} = \emptyset$, the random variables $Y_{i,j}$ and $Y_{k,\ell}$ are independent, and thus $\Cov[Y_{i,j}Y_{k,l}] = 0$. Consider the case when $|\{i,j\} \cap \{k,\ell\}| = 1$. Suppose that $i = k$. Then
\begin{align*}
    \Cov[Y_{i,j}Y_{i,\ell}] &= \E[Y_{i,j}Y_{i,\ell}]-\E[Y_{i,j}]\E[Y_{i,\ell}]
    \leq \eps_f + 4\eps_f^2 - \eps_f^2 = \eps_f + 3\eps_f^2,
\end{align*}
where the inequality follows from \Eqn{double_witness_upper} and \cite[Theorem 1.2]{BellareCHKS96}. By symmetry,

\begin{equation}\label{eq:covariance}
  \sum_{\substack{i<j,k<l \\ \{i,j\} \neq \{k,\ell\}   }} \Cov[Y_{i,j}Y_{k,l}] = 6 \sum_{\substack{i < j < \ell}} \Cov[Y_{i,j}Y_{i,l}] \leq 6\binom{q}{3}(\eps_f + 3\eps_f^2) \leq 15\binom{q}{3} \eps_f,
\end{equation}
where the last inequality holds since $\eps_f \leq \frac{1}{2}$. 
Combining \Eqn{variance} with \Eqn{covariance}, we get that
$\Var[Y] \leq 18\binom{q}{3}\eps_f.$

Finally, we use $\eps_f \geq \eps$, our lower bound on $\E[Y],$ and Chebyshev's inequality:
\begin{align*}  %
\Pr \bigg[Y \leq \binom{q}{2} \frac{\eps}{4} \bigg] &\leq \Pr \bigg[Y \leq \binom{q}{2}\frac{\eps_f}{4} \bigg] \leq  
                                                      \Pr\bigg[Y \leq \frac{\E[Y]}{4}\bigg]\\
                                                    & \leq \Pr\bigg[\bigg|Y -\E[Y]\bigg| \geq \frac 3 4 \cdot{\E[Y]}\bigg]
                                                    \leq \frac{16}9\cdot \frac{\Var[Y]}{\E[Y]^2} 
    \leq \frac {16\cdot 18} 9 \cdot \frac{\binom{q}{3}\eps_f}{\binom{q}{2}^2\eps_f^2} \\
    &= \frac {16\cdot 18} 9\cdot \frac 2 3 \cdot \frac{q-2}{q(q-1) \eps_f} 
  < \frac{22}{q\cdot \eps_f} = \frac{22\eps}{ct\eps_f} \leq \frac{22}{ct} \leq \frac{1}{4},
\end{align*}
since $q= \frac{ct}{\eps}$ and  $c=88$. This completes the proof of \Lem{violating_sums}.
\end{proof}

\section{A lower bound for online-erasure-resilient linearity testing}\label{sec:linearity_lower}

In this section, we prove \Thm{linearity_lower} that 
shows that every $t$-online-erasure-resilient $\eps$-tester for linearity of functions $f \colon \{0,1\}^d\to \{0,1\}$ must make more than $\log_2 t$ queries.

\begin{proof}[Proof of \Thm{linearity_lower}] 
The proof is via Yao's minimax principle for the online-erasures model (stated in \Thm{yao} and \Cor{yao_2}).  
 Let $\cD^+$ be the uniform distribution over all linear Boolean functions on $\{0,1\}^d$ and  $\cD^{-}$ be the uniform distribution over all Boolean functions functions on $\{0,1\}^d$. 
 
 We show that a function $f \sim \cD^-$ is $\frac 1 4$-far from linear with probability at least $6/7$.
 Let $g$ be a linear function, $f \sim \cD^-$, and $\dist(f,g)$ be the fraction of domain points on which $f$ and $g$ differ. Then, $\E[\dist(f,g)] = \frac{1}{2}$. By the Hoeffding bound, $\Pr_{f\sim \cD^-}[\dist(f,g) \leq \frac{1}{4}] \leq \eee^{-\frac{2^d}{8}}$.
By a union bound over the $2^d$ linear functions, $\Pr_{f\sim \cD^-}[f \text{ is } \frac{1}{4}\textnormal{-far from linear}] \geq 1 - 2^d \cdot \eee^{-\frac{2^d}{8}}$. For $d$ large enough, this probability is at least $6/7$. 
 
We fix the following  strategy for a $t$-online-erasure oracle $\cO$: after responding to each query, erase $t$ sums of the form $\bigoplus_{x \in T}x$, where  $T$ is a subset of the queries made so far, choosing the subsets $T$ in some fixed order.
If at most $\log_2 t$ queries are made, the adversary erases all the sums of queried points.

 Let $A$ be a deterministic algorithm that makes $q \leq \log_2 t$ queries to the oracle $\cO$. Assume w.l.o.g.\ that $A$ does not repeat queries. 
 We describe two random processes $\cP^+$ and $\cP^-$ that interact with %
 algorithm $A$ in lieu of oracle $\cO$ and provide query answers consistent with a random function from $\cD^+$ and $\cD^-$, respectively. 
 For each query of $A$, both processes $\cP^+$ and $\cP^-$ return $\perp$ if the value has been previously erased by $\cO$; otherwise, they return $0$ or $1$ with equal probability. Thus, the distribution over query-answer histories when $A$ interacts with $\cP^+$ is the same as when $A$ interacts with $\cP^-$.  
 
Next, we describe how the processes $\cP^+$ and $\cP^-$ assign values to the locations of $f$ that were either not queried by $A$ or erased when queried, and show that they generate $\cD^+$ and $\cD^-$, respectively. 
After $A$ finishes its queries, $\cP^-$ sets the remaining unassigned locations (including the erased locations) of the function to be $0$ or $1$ with equal probability.  Clearly, $\cP^-$ generates a function from the distribution $\cD^-$. 
 
 To describe $\cP^+$ fully, first let $Q \subseteq \{0,1\}^d$ denote the queries of $A$ that are answered with a value other than $\perp$. Since $q \leq \log_2 t$, by our choice of the oracle $\cO$, the sum of any subset of vectors in $Q$ is not contained in $Q$. 
Hence, the vectors in $Q$ are linearly independent. Then,  $\cP^+$ completes $Q$ to a basis $B$ for $\{0,1\}^d$ and sets the value of $f$ on all vectors in $B\setminus Q$ independently to $0$ or $1$ with equal probability.

 Since $B$ is a basis, each vector $y \in \{0,1\}^d$ can be expressed as a linear combination of vectors in $B$
 (with coefficients in $\{0,1\}$), that is, $y=\bigoplus_{x \in T}x$ for some $T\subseteq B$.
 The process $\cP^+$ sets $f(y) =\sum_{x \in T}f(x)$, where addition is mod 2. 
 The function $f$ is linear and agrees with all values previously assigned by $\cP^+$ to the vectors in $Q$. Moreover, $f$ is distributed according to  $\cD^+$, since one can obtain a uniformly random linear function by first specifying a basis for $\{0,1\}^d$, and then setting the value of $f$ to be $0$ or $1$ with equal probability for each basis vector. 
 
Thus, $\cP^+$ generates linear functions, $\cP^-$ generates functions that are $\frac 14$-far from linear with probability at least $\frac 67$, and the query-answer histories for any deterministic algorithm $A$ that makes at most $\log_2 t$ queries and runs against our $t$-online-erasure oracle $\cO$ are identical under $\cP^+$ and $\cP^-$. Consequently,
 \Cor{yao_2} implies the desired lower bound.
\end{proof}

\section{An online-erasure-resilient quadraticity tester}

In this section, we state our online-erasure-resilient quadraticity tester (\Alg{quadratic_t}) and prove \Thm{quadraticity}. 
The main idea behind \Alg{quadratic_t} and its representation as a two-player game appear in \Sec{techniques}, accompanied by explanatory figures for the case when $t=1$. 
We now give a high level overview of \Alg{quadratic_t}.
For a function $f \colon \{0,1\}^d \to \{0,1\}$ and $x,y,z \in \{0,1\}^d,$ let 
\begin{align*}
T_f(x, y, z) = \sum_{\emptyset \neq S \subseteq \{x,y,z\}} f\bigl( \bigoplus_{u\in S} u \bigr), 
\end{align*}
where the first sum is mod $2$. We say a triple $(x, y, z)$ \emph{violates} quadraticity if $T_f(x,y,z) = 1$. 
The tester of Alon et al.~samples three vectors $x,y,z \in \{0,1\}^d$ uniformly and independently at random and rejects if $(x,y,z)$ violates quadraticity. Our tester looks for the same kind of violations as the tester of Alon et al.

The main challenge in the design of our online-erasure-resilient tester is to ensure it can  query all three such points and their sums in the presence of a $t$-online-erasure adversary. In each iteration of the \textbf{repeat} loop of \Alg{quadratic_t}, the following steps are performed. 
For each $\ell \in [t+1]$, we first query a \emph{reserve} of uniformly and independently sampled points  $x_i^{(\ell)}$ for $i \in [(t+1)^2(2t+1)^t]$. 
Next, for each $\ell \in [t+1]$, we query a set of points that we visualize as being the nodes of a $(t+1)$-ary tree of depth $t$. 
There is a one-to-one correspondence between the nodes of such a tree and vectors of length up to $t + 1$ over the alphabet $[t+1]$.
For $m \in [t]$, we represent by $y_{(\ell,j_1,\dots,j_m)}$, the sampled point visualized as a node at depth $m$ in the $\ell$-th tree, where the $j_i$'s specify the unique path from the root to that node in the tree.
Now, for $\ell \in [t+1]$, and for each node $y_{\bj}$ in the $\ell$-th tree, where $\bj$ is shorthand for $(\ell,j_1,\dots,j_m)$, we associate with that node a subset $S_{\bj}$ of points from the reserve, and query the points $y_{\bj} \oplus x$ for each $x$ in $S_{\bj}$. The set $S_{\bj}$ is a subset of a specified cardinality of the set $S_{\bj^{(-1)}}$, where $S_{\bj^{(-1)}}$ is the set associated with the parent node of $y_{\bj}$ in the $\ell$-th tree.
Finally, the algorithm queries a point $z$ sampled uniformly and independently at random from $\{0,1\}^d$, and samples a uniformly random leaf $y_{\bj}$ of a uniformly random tree $\ell \in [t+1]$. 
The set $S_{\bj}$ associated with the leaf $y_{\bj}$ has, by construction, $t+1$ points in it. 
All the points in $S_{\bj}$, again, by construction, also belong to the sets $S_{\bj'}$ associated with the nodes $y_{\bj'}$ on the path from the root to the leaf $y_{\bj}$ of the $\ell$-th tree. 
Our algorithm queries $y_{\bj'}\oplus z$ for all such nodes $y_{\bj'}$.
It then samples $x$ uniformly at random from $S_{\bj}$ and queries $x \oplus z$. Finally, it samples a uniformly random node $y_{\bj'}$ on the path from the root to the leaf $y_{\bj}$ and queries $x \oplus y_{\bj'} \oplus z$.
Observe that, by design, the point $x \oplus y_{\bj'}$ has already been queried in an earlier step.
The algorithm rejects if all the points
involved in the sum $T_f(x, y_{\bj'}, z)$  are nonerased and the triple $(x, y_{\bj'}, z)$ violates quadraticity.

\Alg{quadratic_t} uses the following notation.
For a vector $\bj = (\ell, j_1, \dots, j_m)$, where $m \in [0,t]$ and $\ell, j_1, \dots, j_t \in [t+1]$, we use the convention that $\bj = (\ell)$ for $m=0$. For $ k \in[0,m]$, let $\bj^{(k)} = (\ell, j_1, \dots, j_k)$ be the vector containing the first $k+1$ entries of~$\bj$. Finally, let $\bj^{(-1)}$ be the vector $\bj$ with its last entry removed. If $\bj = (\ell)$, then $\bj^{(-1)}$ is the empty vector $\emptyset$.

\begin{algorithm} %
	\caption{A $t$-Online-Erasure-Resilient Quadraticity Tester}
	\begin{algorithmic}[1]
	
		\Require{$\eps \in (0,1)$, access to function $f: \{0,1\}^d \rightarrow \{0,1\}$ via a $t$-online-erasure oracle.}
		
		\State Let $I = (t+1)^{2}(2t+1)^t$, $J = \frac{(t+1)^{(t+1)}-1}{t}$, and  $\alpha =  \min \bigl( \frac{\eps}{2}, \frac{7}{(18 \cdot IJ(t+1))^{I J (t+1)}} \bigr).$
		\label{step:def_constants}
		
		\State\textbf{repeat} $4c_t/\alpha$ times: \Comment{\textsf{$c_t$ is a constant from \Lem{erasures} that depends only on $t$.}}
		
		\State \indent \textbf{for all} $\ell \in [t+1]$:
		
		\State \indent \indent Query $f$ at independent $x_1^{(\ell)}, \dots, x_I^{(\ell)} \sim \{0,1\}^d$, and let $S_\emptyset = \{x_1^{(\ell)}, \dots, x_I^{(\ell)}\}$. \label{step:query_y}
		
		\State \indent \indent   \textbf{for all} integer $m \in [0,t]$:

		\State \indent  \indent \indent   \textbf{for all} $(j_1, j_2, \dots, j_m) \in [t+1]^{m}$: \Comment{\textsf{When $m = 0$, the loop is run once.}} \label{step:query_x}
		
		\State \indent \indent \indent  \indent Let $\bj = (\ell, j_1, \dots, j_m)$ and query $f$ at $y_{\bj}  \sim \{0,1\}^d$. 
\Comment{\textsf{$\bj = (\ell)$ for $m = 0$.}}		
		\State \indent \indent \indent  \indent Let $S_{\bj}$ be a uniformly random subset of $S_{\bj^{(-1)}}$ of size $(t+1)(2t+1)^{t-m}$. \label{step:random_subset}
		
		\State \indent \indent \indent  \indent Query $x \oplus y_{\bj}$ for all $x \in S_{\bj}$. \label{step:query_xy}
		
		\State \indent \indent \indent  \indent Remove $S_{\bj}$ from $S_{\bj^{(-1)}}$.

		\State \indent Sample $z \sim \{0,1\}^d$ and query $f$ at $z$.  \label{step:query_z}
		
		\State \indent Sample $\bj = (\ell, j_1, \dots, j_{t}) \sim [t+1]^{(t+1)}$ and query $f$ at $y_{\bj^{(m)}}   \oplus z$ for all $m \in [0,t]$. \label{step:query_indices}

		\State \indent Suppose $S_{\bj}   = \{x_1^{(\ell)}, x_2^{(\ell)}, \dots, x_{t+1}^{(\ell)}\}$.  Sample $i \sim [t+1]$ and query $f$ at $x_i^{(\ell)} \oplus z$.

		\label{step:query_zy}
		
		\State \indent Sample integer $m \sim [0,t]$ and query $f$ at $x_{i}^{(\ell)} \oplus y_{\bj^{(m)}}   \oplus  z$.\label{step:3sum}

		\State \indent \textbf{Reject} if $T_f(x_{i}^{(\ell)}, y_{\bj^{(m)}} , z) = 1$. \Comment{\textsf{All points needed for computing $T_f$ are nonerased.}}
		
		\State \textbf{Accept}. 
	\end{algorithmic}
	\label{alg:quadratic_t}
\end{algorithm}

\begin{proof}[Proof of \Thm{quadraticity}]

If $f$ is quadratic, then \Alg{quadratic_t} always accepts. Suppose that $f$ is $\eps$-far from quadratic. Fix an adversarial strategy and a round of \Alg{quadratic_t}. We show that \Alg{quadratic_t} rejects with probability  $\Omega(\eps)$ in this round.  

 Observe that \Alg{quadratic_t} makes queries of three types: {\em singletons} (of the form $x_i^{(\ell)},y_{\bj^{(m)}}$, and $z$), {\em doubles} (of the form $x_i^{(\ell)}\oplus y_{\bj^{(m)}}$,$x_i^{(\ell)} \oplus z$, and $y_{\bj^{(m)}}\oplus z$), and {\em triples} (of the form $x_i^{(\ell)}\oplus y_{\bj^{(m)}}\oplus z$), where $i \in [I], \bj = (\ell, j_1, \dots, j_t) \in [t+1]^{(t+1)}, m \in [0, t]$.  We call points of the form $x_i^{(\ell)}\oplus y_{\bj^{(m)}}$, $x_i^{(\ell)} \oplus z$, and $y_{\bj^{(m)}}\oplus z$ {\em double decoys}, and points of the form $x_i^{(\ell)}\oplus y_{\bj^{(m)}} \oplus z$ {\em triple decoys}. We refer to double and triple decoys simply as decoys. Only some of the decoys become actual queries.

Let $\mathcal{G}$ denote the {\em good} event that, for the fixed round, all of the following hold:
\begin{itemize}
    \item all singleton queries are successfully obtained; 
    \item all double decoys of the form $x_i^{(\ell)} \oplus y_{\bj^{(m)}}$ are nonerased right before $y_{\bj^{(m)}}$ is queried;
    \item all double and triple decoys involving $z$ (as well as $x_i^{(\ell)}$ and/or $y_{\bj^{(m)}}$) are nonerased right before $z$ is queried.
\end{itemize}

To lower bound the rejection probability of the algorithm, we consider the event that all triples of the form  $(x_i^{(\ell)}, y_{\bj^{(m)}}, z)$, where $i \in [I], \bj = (\ell, j_1, \dots, j_t) \in [t+1]^{(t+1)}, m \in [0, t]$, violate quadraticity, and all queries in the round are successfully obtained. 
 \begin{definition}[Witness]
The singleton queries form a \emph{witness} if $T_f(x_i^{(\ell)},y_{\bj^{(m)}}, z) = 1$ for all $i \in [I], \bj = (\ell, j_1, \dots, j_t) \in [t+1]^{(t+1)}, m \in [0, t]$, and, in addition, all singletons and all decoys are distinct. 
 \end{definition}
 Let $W$ be the event that the singleton queries form a witness. Let $\alpha$ be as defined in Step~\ref{step:def_constants}. 

 \begin{lemma}[Probability of successful singleton queries]\label{lem:witness} If $f\colon \{0,1\}^d\to\{0,1\}$ is $\eps$-far from being quadratic, then $\Pr[W \cap \mathcal{G}] \geq \alpha/2$. 
\end{lemma}

In other words, \Lem{witness} shows that for every adversarial strategy, with probability at least  $\frac{\alpha}{2}$,  the tester successfully obtains singleton queries that form a witness and, in addition, right before each singleton is queried, the decoys involving that singleton are nonerased. The proof of \Lem{witness} (in \Sec{lem_witness}) relies on the following key structural result about the fraction of large structures where all triples of a certain form violate quadraticity. %

\begin{lemma}[Probability of singletons forming a violation structure]\label{lem:quadratic_t}
    Let $I,J,t\in{\mathbb{N}}$.
	Suppose $f \colon \{0,1\}^d\to \{0,1\}$ is $\eps$-far from being quadratic. For points $x_i^{(\ell)}, y_{j}^{(\ell)}, z \sim \{0,1\}^d$, where $(i,j,\ell) \in [I]\times[J]\times[t+1]$,
 	\begin{equation}\label{eq:violation}
 	    	\Pr\biggl[ \bigcap_{\substack{i \in [I], j\in [J], \ell \in [t+1]}} [T_f(x_i^{(\ell)}, y_{j}^{(\ell)}, z) = 1] \biggr] \geq \alpha, 
 	\end{equation}
 	where $\alpha= \min \bigl( \frac{\eps}{2}, \frac{7}{(18 \cdot IJ(t+1))^{I J (t+1)}} \bigr).$
\end{lemma}

We prove \Lem{quadratic_t} in \Sec{quadratic_witness}, building on a result of \cite{AlonKKLR05}. To prove \Lem{witness}, we use \Lem{quadratic_t} with $I = (t+1)^2(2t+1)^t$ and $J = \sum_{m=0}^{t}(t+1)^m = \frac{(t+1)^{(t+1)}-1}{t}$, which is the number of nodes in each tree.

Next, in \Lem{erasures}, we show that the probability of successfully obtaining all double and triple queries in the round, given the event $W\cap \mathcal{G}$, depends only on $t$. The proof of \Lem{erasures} appears in  \Sec{lem_erasures}.
 \begin{lemma}[Probability of all double and triple queries being nonerased]\label{lem:erasures}
The probability that all queries made in one round of \Alg{quadratic_t} are successfully obtained, conditioned on the event $W \cap \mathcal{G}$, is at least $1/c_t$, where $c_t$ depends only on $t$.
\end{lemma}

Let $H$ denote the event that all queries in the round are successfully obtained, and recall that $W$ is the event that the singleton queries form a witness. The probability that \Alg{quadratic_t} rejects in the given round is at least $\Pr[W \cap H]$. Note that
\begin{equation*}
    \Pr[W \cap H] \geq \Pr[W \cap H \cap \mathcal{G}] = \Pr[H \:|\: W \cap \mathcal{G}]\Pr[W \cap \mathcal{G}].
\end{equation*} By~\Lem{witness}~and~\Lem{erasures}, we have that \Alg{quadratic_t} rejects with probability at least $\frac{1}{c_t} \cdot \frac{\alpha}{2}$ for a fixed round. Thus, after $\frac{4c_t}{\alpha}$ rounds, \Alg{quadratic_t} rejects with probability at least~$\frac{2}{3}$. 
\end{proof}

\subsection{Proof of \Lem{witness}} \label{sec:lem_witness}

In this section, we prove \Lem{witness} that bounds from below the probability that the singleton queries are successfully obtained and form a witness, and, in addition, right before each singleton is queried, the decoys involving that singleton are nonerased. We show that if the fraction of large violation structures is at least~$\alpha$ (\Lem{quadratic_t}), then, despite the adversarial erasures, the probability of successful singleton queries, as described above, is at least $\alpha/2$.

\begin{proof}[Proof of \Lem{witness}]
    
The key idea of the proof is to keep track of \emph{active} witnesses during the execution of the round. To simplify notation, let $K = (I + J)(t+1) + 1$ and denote by $u_1, \dots, u_{K}$ the singleton queries made by \Alg{quadratic_t} in the given round.  

\begin{definition}[Active witness] For $i \in [K]$, let $u_1, \dots, u_i$ denote the singleton queries made by \Alg{quadratic_t} %
up until
a given timepoint of the fixed round. We say a witness $(v_1,\dots, v_{K}) \in (\{0,1\}^d)^{K}$ is  {\em $i$-active} if
\begin{enumerate}
    \item The first $i$ entries of the tuple are equal to $u_1, u_2, \dots, u_i$.
    \item All decoys involving $u_j$, where $j \leq i$, were nonerased right before $u_j$ was queried. 
\end{enumerate}
 
\end{definition}

Furthermore, let $A_i$ be a random variable denoting the number of active witnesses right after the $i$-th singleton query, and let $B_i$ denote the number of active witnesses right before the $i$-th singleton query. Let $B_1$ denote the number of witnesses at the beginning of the round, such that all singletons and decoys for the witness are nonerased. Note that $\Pr[W \cap \mathcal{G}] = \Pr[A_{K} = 1] = \E[A_{K}]$. To lower bound $\E[A_{K}]$, we first show a lower bound on $B_1$, obtained in turn by a lower bound on the total fraction of witnesses. We then bound the difference between $A_i$ and $B_{i+1}$ and show a relationship between $\E[A_i]$ and $\E[B_i]$ for general $i$. All expectations in this proof are over the choice of singletons $u_1, \dots, u_K \sim \{0,1\}^d$. 

\begin{claim} \label{clm:quadratic_witness}
Let $\alpha$ be as defined in \Lem{quadratic_t}.
If $f$ is $\eps$-far from being quadratic then
\begin{equation*}
    \Pr_{u_1, \dots,u_{K} \sim \{0,1\}^d }[(u_1,\dots,u_{K}) \text{ is a witness}] \geq \frac{7\alpha}{8}. 
\end{equation*}
\end{claim}
\begin{proof}
    The probability that the event of \Lem{quadratic_t} is true for the singleton queries $u_1, \dots, u_{K}$ is at least $\alpha$.  Let $S$ denote the set of all singletons and decoys associated with the $K$-tuple $(u_1, \dots, u_{K})$. The number of triple decoys associated with the $K$-tuple is at most $IJ(t+1)$, whereas the number of double decoys associated with the $K$-tuple is at most $I(J+1)(t+1)$. Therefore, $|S| \leq 3IJ(t+1)$. Any two elements in $S$ are uniformly and independently distributed. Thus, the probability that two elements in $S$ are identical is $\frac{1}{2^d}$. By a union bound over all pairs of elements in $S$, the probability that not all elements in $S$ are distinct is at most $\frac{|S|^2}{2^d}$. For $d$ large enough, we have $\frac{|S|^2}{2^d} \leq \frac{\alpha}{8}$. Hence, with probability at least $\alpha - \frac{\alpha}{8} = \frac{7\alpha}{8}$, the singleton queries $u_1, \dots, u_{K}$ form a witness. 
\end{proof}

\begin{claim} For all adversarial strategies, $B_1 \geq \frac{3\alpha}{4} \cdot 2^{Kd}$.  \label{clm:b1}
\end{claim}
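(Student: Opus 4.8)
The plan is to bound $B_1$ — the number of witnesses at the start of the round all of whose singletons and associated decoys are nonerased — by subtracting from the total number of witnesses (lower-bounded by \autoref{clm:quadratic_witness}) the number of witness tuples that are "spoiled" by erasures already in place at the beginning of the round. First I would observe that, by \autoref{clm:quadratic_witness}, the number of $K$-tuples $(v_1,\dots,v_K) \in (\{0,1\}^d)^K$ that are witnesses is at least $\frac{7\alpha}{8}\cdot 2^{Kd}$. It remains to show that at most $\frac{\alpha}{8}\cdot 2^{Kd}$ of these fail the nonerasure requirement because some singleton $v_j$ or some decoy associated with the tuple has already been erased by the oracle before the round began.

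The key quantitative input is that the total number of queries made by \Alg{quadratic_t} over all rounds is bounded by a quantity depending only on $t$ and $\alpha$ — roughly $\frac{4c_t}{\alpha}$ rounds, each making $\poly(I,J,t)$ queries — so the oracle has performed at most some number $E = E(t,\alpha)$ of erasures before the fixed round. Now I would count, for each erased point $p$, how many witness $K$-tuples have $p$ among their singletons or decoys. If $p$ is forced to be a singleton $v_j$, the remaining $K-1$ coordinates range freely, giving at most $2^{(K-1)d}$ tuples per choice of $j$, hence at most $K\cdot 2^{(K-1)d}$ tuples. If $p$ is forced to equal a decoy, say $v_i^{(\ell)}\oplus v_{j'}$ or $v_i^{(\ell)}\oplus v_{j'}\oplus z$ (in the notation of the witness structure), then fixing $p$ imposes one linear constraint on the free coordinates, so again at most $O(K^3)\cdot 2^{(K-1)d}$ tuples are affected per erased point (the $K^3$ accounts for the choice of which decoy slot $p$ occupies). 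Summing over all $E$ erased points, the number of spoiled witness tuples is at most $E\cdot \poly(K)\cdot 2^{(K-1)d}$, which is $o(2^{Kd})$ once $d$ is large enough, and in particular at most $\frac{\alpha}{8}\cdot 2^{Kd}$. Subtracting gives $B_1 \ge \bigl(\frac{7\alpha}{8}-\frac{\alpha}{8}\bigr)2^{Kd} = \frac{3\alpha}{4}\cdot 2^{Kd}$.

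The main obstacle I anticipate is making the erasure-counting argument airtight: one must be careful that the erasure budget bound $E$ really is finite and independent of $d$ — this follows because the number of rounds $4c_t/\alpha$ and the per-round query count depend only on $t$ (via $I$, $J$, $\alpha$ in the worst case $\alpha = \Theta(1/(IJ(t+1))^{IJ(t+1)})$), not on $d$ — and that every decoy associated with a witness tuple is the image of that tuple under an \emph{affine-linear} map in the $v_j$'s, so that fixing the decoy value to an erased point $p$ cuts down the count of free tuples by exactly a factor $2^d$ (or leaves it unchanged if the constraint is degenerate, which only helps). Since the adversary's erasures do not depend on the future randomness of the algorithm, the set of erased points before the round is fixed when we take the expectation over $u_1,\dots,u_K$, so this counting is valid for every adversarial strategy. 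The "$d$ large enough" clause mirrors the one already used in \autoref{clm:quadratic_witness}, so it introduces nothing new.
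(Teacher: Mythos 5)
Your proposal is correct and is essentially the paper's own proof: lower-bound the total number of witnesses by $\tfrac{7\alpha}{8}\cdot 2^{Kd}$ via \autoref{clm:quadratic_witness}, bound the number of erasures already made by the oracle through the total query budget $\tfrac{4c_t|S|}{\alpha}$ (independent of $d$), charge each erased point at most $|S|\cdot 2^{(K-1)d}$ (i.e., $\poly(K)\cdot 2^{(K-1)d}$) deactivated witnesses since fixing any singleton or decoy slot to the erased value imposes one non-trivial linear constraint, and absorb the loss into $\tfrac{\alpha}{8}\cdot 2^{Kd}$ for $d$ large enough. The only cosmetic remark is that your ``degenerate constraint'' caveat never actually arises: every decoy is the XOR of a nonempty set of distinct singletons, so fixing its value always cuts the count by exactly a factor of $2^d$, matching the paper's per-erasure bound.
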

\begin{proof}
    Recall that $B_1$ counts the number of witnesses at the beginning of the round, such that  all singletons and decoys for the witness are nonerased. \Alg{quadratic_t} makes at most $|S|$ queries in each round, so it makes at most $\frac{4c_t|S|}{\alpha}$ queries in total. Therefore, the oracle can erase at most $\frac{4t\cdot |S| \cdot c_t}{\alpha}$ points from $\{0,1\}^d$. Each erasure can deactivate at most $|S| \cdot 2^{(K-1)d}$ witnesses, since the point erased could be any one of the $|S|$ singletons and decoys associated with the witness. By \Clm{quadratic_witness}, we obtain $B_1 \geq \frac{7\alpha}{8} \cdot 2^{Kd} - \frac{4t \cdot |S|^2\cdot c_t}{\alpha} \cdot 2^{(K-1)d}$. Since $|S| \leq 3IJ(t+1)$, for $d$ large enough, we have $\frac{4t \cdot |S|^2 \cdot c_t}{\alpha} \cdot 2^{(K-1)d} \leq \frac{\alpha}{8} \cdot 2^{Kd}$. Therefore, $B_1 \geq \frac{3\alpha}{4}\cdot 2^{Kd}$. 
\end{proof}

\begin{claim} \label{clm:difference}
For all $i \in [K-1]$ and all adversarial strategies, 
\begin{align*}
    A_i - B_{i+1} \leq (t+1)^2(2t+1)^t \cdot |S| \cdot 2^{(K-1-i)d}.
\end{align*}
\end{claim}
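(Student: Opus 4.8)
The plan is to bound $A_i - B_{i+1}$ by controlling separately (a) how many erasures the oracle can make in the window between the $i$-th singleton query and the $(i+1)$-st, and (b) how many active witnesses a single erasure can destroy, and then multiplying the two.

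For (a), I would read off from \Alg{quadratic_t} which queries are issued strictly between $u_i$ and $u_{i+1}$. If $u_i$ is one of the reserve points $x^{(\ell)}_{i'}$ (including the last one $x^{(\ell)}_I$), the next singleton is queried immediately and nothing intervenes. If $u_i = y_{\bj}$ is a tree node at depth $m$, the intervening queries are exactly the doubles $x \oplus y_{\bj}$ with $x \in S_{\bj}$ from Step~\ref{step:query_xy}, of which there are $|S_{\bj}| = (t+1)(2t+1)^{t-m} \le (t+1)(2t+1)^t$. Counting also the $t$ erasures made right after $u_i$ itself, the oracle performs at most $t\bigl(1 + (t+1)(2t+1)^t\bigr) \le (t+1)^2(2t+1)^t$ erasures in the window; note this is exactly the constant $I$ of Step~\ref{step:def_constants}.

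For (b), take a witness $(v_1,\dots,v_K)$ that is active right after $u_i$ but not right before $u_{i+1}$. One of its at most $|S|$ associated singletons and decoys, call it $p$, must be erased in the window, and $p$ must be one whose erasure breaks the definition of ``active'' (so $p$ is a combination involving a singleton among the already-fixed $v_1,\dots,v_i$, or $p$ is one of the doubles $x \oplus v_i$ probed in the window). Fix $p$ and fix the role it plays, i.e.\ a set $R \subseteq \{v_1,\dots,v_K\}$ with $|R| \le 3$ and $p = \bigoplus_{v \in R} v$. Since $v_1,\dots,v_i$ are pinned to $u_1,\dots,u_i$, this equation determines the xor of the coordinates of $R$ whose index exceeds $i$; there are at most two such coordinates, so the number of free coordinates drops to at most $K - 1 - i$, giving at most $2^{(K-1-i)d}$ destroyed witnesses per pair $(p,R)$. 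Summing over the at most $|S|$ roles and the at most $(t+1)^2(2t+1)^t$ erased points in the window gives $A_i - B_{i+1} \le (t+1)^2(2t+1)^t \cdot |S| \cdot 2^{(K-1-i)d}$.

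The main obstacle is the degenerate case in (b) where $R \subseteq \{v_1,\dots,v_i\}$: then $p = \bigoplus_{v \in R} u$ is a single fixed point shared by every active witness, so a naive count would let one erasure destroy all of them. Ruling this out is exactly where the random subsets $S_{\bj}$ enter: the only way such a $p$ can be ``relevant but not yet probed'' inside the window is as a double $x \oplus y_{\bj}$ with $y_{\bj}$ queried no later than $u_i$; if $x \notin S_{\bj}$ the algorithm never issues that query, so the bookkeeping does not count it, and if $x \in S_{\bj}$ the query was already made (at the latest during the window immediately after $u_i = y_{\bj}$, which is part of a successful active history), so it cannot be spoiled anew. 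Making this precise --- specifying, as a function of the random choices already made, exactly which decoys of a witness count as ``live'' at each moment --- is the part I expect to require the most care.
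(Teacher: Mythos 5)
You take essentially the same route as the paper: the same bound $((t+1)(2t+1)^t+1)\cdot t\le(t+1)^2(2t+1)^t$ on the erasures made in the window, and the same per-erasure count of at most $|S|$ roles, each contributing at most $2^{(K-1-i)d}$ completions of the fixed prefix once the xor constraint is imposed, so the core of the argument is correct. Two clarifications: the degenerate case needs none of the $S_{\bj}$ machinery (that randomness enters only in \Lem{erasures}) --- a witness that is active right after $u_i$ can be deactivated during the window only by erasing one of its \emph{remaining} singletons or decoys, i.e.\ a point $\bigoplus_{v\in R}v$ with some index in $R$ exceeding $i$, because the activity conditions attached to combinations supported entirely on $\{u_1,\dots,u_i\}$ refer to erasure status right before those earlier singletons were queried, timepoints already in the past; in particular, whether the double queries issued inside the window come back as $\perp$ is not part of the active-witness bookkeeping at all (it is handled separately in \Lem{erasures}), so your ``cannot be spoiled anew'' step is neither needed nor quite right. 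Also, $R$ can have three coordinates of index exceeding $i$ (a triple decoy from a tree not yet reached, together with $z$), not ``at most two,'' but the bound $2^{(K-1-i)d}$ per role only requires that there be at least one such coordinate, so your count stands.
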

\begin{proof}
Observe that \Alg{quadratic_t} makes at most $(t+1)(2t+1)^t$ decoy queries between any two singleton queries (this is the number of double queries made in Step~\ref{step:query_xy} for $m=0$).  Therefore, in the period right after the $i$-th singleton query and before the $(i+1)$-st singleton query, the oracle can perform at most $((t+1)(2t+1)^t+1)\cdot t \leq (t+1)^2(2t+1)^t$ erasures. Let $u_1, u_2, \dots, u_i$ be the first $i$ singleton queries of the algorithm. For each erasure, the oracle can deactivate at most $|S|\cdot 2^{(K-1-i)d}$ witnesses whose first $i$ entries are $u_1, u_2, \dots, u_i$, by erasing one of the remaining singletons or decoys. Therefore, the number of active witnesses between right after the $i$-th singleton query and right before the $(i+1)$-st singleton query can decrease by at most $(t+1)^2(2t+1)^t \cdot |S| \cdot 2^{(K-1-i)d}$. 
\end{proof}

\begin{claim} For all $i\in [K]$, it holds that $\E[A_i] = \frac{1}{2^d}\E[B_i]$. 
\label{clm:expectation}
\end{claim}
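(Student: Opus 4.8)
The plan is to prove the identity by conditioning on the execution of \Alg{quadratic_t} up to the moment right before the $i$-th singleton query $u_i$ is issued, and then exploiting that $u_i$ is a fresh, independent, uniformly random point of $\{0,1\}^d$. First I would fix the algorithm's coin tosses and the adversary's responses up to that moment. This determines the earlier singletons $u_1,\dots,u_{i-1}$, all intervening decoy queries and their answers, the set of points erased so far, and hence both the value of $B_i$ and the collection $\mathcal{W}$ of witnesses that are active right before the $i$-th singleton query, so that $|\mathcal{W}| = B_i$.

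The key step is to show that, once $\mathcal{W}$ is fixed, a witness $(v_1,\dots,v_K)\in\mathcal{W}$ is counted in $A_i$ (i.e.\ becomes active right after the $i$-th singleton query) if and only if $v_i = u_i$. For the ``if'' direction: membership in $\mathcal{W}$ already guarantees that the first $i-1$ coordinates agree with $u_1,\dots,u_{i-1}$, that all decoys attached to $u_1,\dots,u_{i-1}$ were nonerased at the relevant times, and that every remaining singleton and every remaining decoy of the witness is currently nonerased --- in particular the point $v_i$ itself and every decoy involving the $i$-th slot (which the algorithm only queries \emph{after} $u_i$) are remaining and hence nonerased; since the oracle does not act between the moment right before and the moment right after the query $u_i$, setting $u_i = v_i$ makes $u_i$ successfully obtained, keeps all decoys attached to $u_i$ nonerased, and does not disturb any remaining-singleton/decoy condition, so the witness is active right after. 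For the ``only if'' direction: being active right after the $i$-th query forces $v_i = u_i$ (the first $i$ coordinates must match) and, again because the erased set is unchanged across the query and $u_i = v_i$ was successfully obtained, all remaining singletons and decoys were already nonerased right before the query, so the witness lies in $\mathcal{W}$. I would also record the harmless fact that for any witness in $\mathcal{W}$ the coordinate $v_i$ is distinct from $u_1,\dots,u_{i-1}$ (all $K$ coordinates of a witness are distinct), so that $\Pr_{u_i}[u_i = v_i] = 2^{-d}$ regardless of the history.

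With this correspondence the computation is immediate: conditioning on the history and using linearity of expectation over $u_i \sim \{0,1\}^d$,
\[
\E\bigl[A_i \,\big|\, \text{history before } u_i\bigr] \;=\; \sum_{(v_1,\dots,v_K)\in\mathcal{W}} \Pr_{u_i}[u_i = v_i] \;=\; \frac{|\mathcal{W}|}{2^d} \;=\; \frac{B_i}{2^d}.
\]
Averaging over the history gives $\E[A_i] = \tfrac{1}{2^d}\E[B_i]$, as claimed. (If some earlier singleton was not successfully obtained, then $\mathcal{W}$ is empty and both sides are $0$, so the identity still holds.)

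The only real subtlety --- and the step I expect to be the main obstacle --- is the bookkeeping behind ``active right before'' versus ``active right after'' the $i$-th query: one must check carefully that no oracle erasure occurs between these two moments and that the singletons and decoys whose nonerasure is in question are precisely the ``remaining'' ones at both times. Once this is nailed down, the probabilistic content is just the one-line conditioning-plus-linearity argument above, and everything else is routine.
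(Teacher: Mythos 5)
Your proposal is correct and is essentially the paper's own argument: the paper decomposes $B_i = \sum_{v \in \{0,1\}^d} B_{i,v}$ by the value $v$ of the $i$-th coordinate, writes $A_i = \sum_v B_{i,v}\mathbf{1}(v)$, and uses that the fresh uniform singleton query is independent of the pre-query state, which is exactly your conditioning-on-the-history computation giving $\E[A_i \mid \text{history}] = B_i/2^d$. Your explicit ``active right before vs.\ right after'' bookkeeping (no erasure occurs between the two moments, and the decoys involving the $i$-th slot are among the remaining, hence nonerased, ones) merely spells out the step the paper asserts implicitly when it equates $A_i$ with $\sum_v B_{i,v}\mathbf{1}(v)$.
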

\begin{proof} For $v \in \{0,1\}^d$, let $B_{i,v}$ denote the number of  witnesses that are active right before the $i$-th singleton query and whose $i$-th entry is equal to $v$. Then, $B_i = \sum_{v \in \{0,1\}^d} B_{i,v}$. Let $\mathbf{1}(v)$ be the indicator random variable for the event that the $i$-th singleton query is equal to~$v$. Then,  
\[
    \E[A_i] =   \E\Big[\sum_{v \in \{0,1\}^d} B_{i,v}\mathbf{1}(v)\Big]
    = \E[\mathbf{1}(v)]\E\Big[\sum_{v \in \{0,1\}^d} B_{i,v}\Big] 
    = \frac{1}{2^d}\E[B_i]\,. 
    \qedhere
\]
\end{proof}

We combine the claims above to complete the proof of the lemma. By \Clm{expectation},
\begin{align*}
    \E[A_{K}] &= \frac{1}{2^d}\E[B_{K}] = \frac{1}{2^d}\E[A_{K-1} + B_{K} - A_{K-1}] = \frac{1}{2^d}\E[A_{K-1}] - \frac{1}{2^d}\E[A_{K-1}-B_{K}] \\
    &= \frac{1}{2^{2d}}\E[B_{K-1}] - \frac{1}{2^d}\E[A_{K-1}-B_{K}] 
    = \dots = \frac{1}{2^{Kd}}\E[B_1] - \sum_{i=1}^{K-1} \frac{\E[A_i - B_{i+1}]}{2^{(K-i)d}}.
\end{align*}
By \Clm{b1},
\begin{equation*}
     \frac{1}{2^{Kd}}\E[B_1] \geq \frac{3\alpha}{4}. 
\end{equation*}
In addition, \Clm{difference} yields that
\begin{align*}
    \sum_{i=1}^{K-1} \frac{\E[A_i - B_{i+1}]}{2^{(K-i)d}} &\leq  \sum_{i=1}^{K-1}\frac{(t+1)^2(2t+1)^t\cdot |S| \cdot 2^{(K-1-i)d}}{2^{(K-i)d}}\\ 
    &\leq \frac{K \cdot (t+1)^2(2t+1)^t\cdot |S|}{2^d} \leq \frac{\alpha}{4},
\end{align*}
where the last inequality holds for large enough $d$. We obtain that
\[
\Pr[W \cap \mathcal{G}] = \E[A_{K}] \geq \frac{3\alpha}{4} - \frac{\alpha}{4} = \frac{\alpha}{2}\,. \qedhere \]
\end{proof}

\subsection{Proof of \Lem{quadratic_t}} \label{sec:quadratic_witness}

In this section, we prove \Lem{quadratic_t} on the fraction of large violation structures for which all triples $(x_i^{(\ell)}, y_{j}^{(\ell)}, z)$, where  $(i,j,\ell) \in [I]\times [J]\times [t+1]$, %
violate quadraticity. Our proof builds on a result of \cite{AlonKKLR05}.  

\begin{proof}[Proof of \Lem{quadratic_t}]
Let $\eta$ denote the fraction of violating triples for $f$, i.e.,
\begin{equation*}
   \eta := \Pr_{x,y,z \sim \{0,1\}^d}[T_f(x,y,z) = 1].
\end{equation*}
The distance of $f$ to quadraticity, denoted by $\eps_f$, is the minimum of $\Pr_x[f(x)\neq g(x)]$ over all quadratic functions $g$ over the same domain as $f$.
Using this notation, we state a result from \cite{AlonKKLR05} for the special case of quadraticity.

\begin{claim}[Theorem 1 of \cite{AlonKKLR05}]\label{clm:bat}
For all $f$, we have $\eta \geq \min( \frac{7}{3}\eps_f, \frac{1}{40})$.
\end{claim}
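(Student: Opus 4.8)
I note first that \autoref{clm:bat} is the special case $k=2$ of the robustness theorem of Alon et al.\ for the $(k{+}1)$-flat low-degree test, so I would reprove it the way such tests are analyzed — by \emph{self-correction}. Assume $\eta<\tfrac1{40}$ (otherwise there is nothing to prove); the goal is to exhibit a quadratic function within distance $\tfrac37\eta$ of $f$. The candidate is the self-corrected function $\widehat f\colon\{0,1\}^d\to\{0,1\}$ given by
\[
\widehat f(a)\ =\ \operatorname{Maj}_{y_1,y_2,y_3\sim\{0,1\}^d}\ \sum_{\emptyset\ne S\subseteq\{y_1,y_2,y_3\}} f\Bigl(a\oplus\bigoplus_{i\in S}y_i\Bigr),
\]
with ties broken arbitrarily; this is the natural ``vote'' because every quadratic $g$ satisfies $g(a)=\sum_{\emptyset\ne S\subseteq\{y_1,y_2,y_3\}}g(a\oplus\bigoplus_{i\in S}y_i)$, the vanishing of the third discrete derivative of a degree-$\le2$ function. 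The whole analysis rests on the algebraic identity that the $8$-term ``derivative at $a$'' decomposes into two test statistics:
\[
\sum_{S\subseteq\{y_1,y_2,y_3\}} f\Bigl(a\oplus\bigoplus_{i\in S}y_i\Bigr)\ =\ T_f(a\oplus y_1,\,y_2,\,y_3)\ +\ T_f(a,\,y_2,\,y_3) ,
\]
with addition mod $2$ (the three terms $f(y_2),f(y_3),f(y_2\oplus y_3)$ common to the two statistics cancel).

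\emph{Closeness.} Fix $a$. Since $a\oplus y_1$ is uniform and independent of $y_2,y_3$, the first term on the right is nonzero with probability exactly $\eta$, and the second with probability $q_a:=\Pr_{y_2,y_3}[T_f(a,y_2,y_3)\ne0]$; hence the vote at $a$ disagrees with $f(a)$ with probability $p_a\le\eta+q_a$, and averaging over $a$ (using $\E_a[q_a]=\eta$) gives $\E_a[p_a]\le2\eta$. As $\widehat f(a)\ne f(a)$ forces $p_a\ge\tfrac12$, a Markov bound gives $\dist(f,\widehat f)\le4\eta$; a more careful accounting of $\sum_a p_a$ (as Alon et al.\ do) sharpens the constant to $\tfrac37$.

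\emph{Rigidity.} Next one shows that when $\eta$ is below an absolute constant, $\widehat f$ is actually quadratic. Applying the same identity twice with independent directions expresses the XOR of two votes at a single point $a$ as a sum of a bounded number of test statistics on uniformly random triples, so it vanishes except with probability $O(\eta)$; an averaging argument then upgrades this to: for \emph{every} $a$, the vote equals $\widehat f(a)$ with probability $1-O(\eta)$. Substituting this pointwise-robust vote into each of the seven $\widehat f$-values inside $T_{\widehat f}(x,y,z)$ — at a cost of $O(\eta)$ per substitution — and collapsing the resulting combination of $f$-values down to a few more test statistics shows $T_{\widehat f}(x,y,z)=0$ for all $x,y,z$ (it is a fixed bit that equals $0$ with positive probability once $\eta$ is small enough). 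Since the identity $T_h\equiv0$ on all triples characterizes degree-$\le2$ functions without a constant term, $\widehat f$ is quadratic, so $\eps_f\le\dist(f,\widehat f)\le\tfrac37\eta$, i.e.\ $\eta\ge\tfrac73\eps_f$; together with the trivial case $\eta\ge\tfrac1{40}$ this gives $\eta\ge\min(\tfrac73\eps_f,\tfrac1{40})$.

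The hard part is the rigidity step. Closeness is essentially a one-line averaging argument, but showing that $\widehat f$ is \emph{exactly} quadratic requires bootstrapping from on-average correctness of the vote to pointwise correctness and then verifying that $\widehat f$ passes the test with probability $1$; this bootstrapping closes only when $\eta$ is below an absolute constant, and pushing that constant up to $\tfrac1{40}$ while driving the closeness constant down to $\tfrac37$ is precisely the quantitative work in the proof of Alon et al.
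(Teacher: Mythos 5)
First, a point of context: the paper does not prove \autoref{clm:bat} at all --- it is imported as Theorem~1 of \cite{AlonKKLR05} --- so the only comparison possible is with the proof in that reference. Your overall architecture (self-correction by majority vote over random triples of directions, a closeness step, and a rigidity step showing the corrected function is exactly quadratic once $\eta$ is below an absolute constant) is indeed how that theorem is proved, your identity decomposing the eight-term derivative as $T_f(a\oplus y_1,y_2,y_3)+T_f(a,y_2,y_3)$ is correct, and the resulting bound $\dist(f,\widehat f)\le 4\eta$ is fine.

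The genuine gap is the sentence claiming that ``a more careful accounting of $\sum_a p_a$ sharpens the constant to $\tfrac37$.'' That sharpening cannot come out of the majority-vote closeness argument: all that argument provides is that every point with $\widehat f(a)\ne f(a)$ satisfies $p_a\ge\tfrac12$ while $\E_a[p_a]\le 2\eta$, which yields $\dist(f,\widehat f)\le 4\eta$ and can never yield a multiplicative constant below $2$ (a disagreement point is only guaranteed to contribute $\tfrac12$ to the average, and probabilities cannot exceed $1$). Moreover, in the regime $\eta<\tfrac1{40}$ the inequality $\dist(f,\widehat f)\le\tfrac37\eta$ is, given rigidity, equivalent to the statement being proved, so as written you are assuming the crux. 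The factor $\tfrac73$ has to be obtained by a separate direct argument in the small-distance regime, in the spirit of \autoref{clm:small_eps} of this paper: write $f=g\oplus\mathbbm{1}_D$ for a nearest quadratic $g$, so that $T_g\equiv 0$ and a triple violates whenever exactly one of the seven queried points (which are pairwise independent and uniform) lands in $D$; this gives $\eta\ge 7\eps_f-2\binom{7}{2}\eps_f^2=7\eps_f(1-6\eps_f)\ge\tfrac73\eps_f$ once $\eps_f\le\tfrac19$. The self-correction machinery (closeness $\dist(f,\widehat f)\le 4\eta$ plus the rigidity bootstrapping you sketch) is then used only to show that $\eta<\tfrac1{40}$ forces $\eps_f\le 4\eta<\tfrac1{10}$, i.e.\ puts you inside the regime where the counting bound applies; combining the two cases gives $\eta\ge\min\bigl(\tfrac73\eps_f,\tfrac1{40}\bigr)$. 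So your plan is salvageable, but the $\tfrac73$ must be routed through this exactly-one-bad-point counting lemma rather than through the corrector's closeness constant.
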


 Let $\eta'$ denote the left-hand side of \Eqn{violation}, that is, the probability that for $x_i^{(\ell)}, y_j^{(\ell)}, z \sim \{0,1\}^d$,  all triples $(x_i^{(\ell)}, y_j^{(\ell)}, z)$ are violating, where  $(i,j,\ell) \in [I]\times [J]\times [t+1]$. \Clm{quadratic_cauchy} lower bounds $\eta'$ in terms of $\eta$ for all values of $\eps_f$. \Clm{small_eps} lower bounds $\eta'$ for small values of $\eps_f$. We  combine these results and use \Clm{bat} to conclude the proof of the lemma. 
    
\begin{claim} \label{clm:quadratic_cauchy}
For all $f$  and points $x_i^{(\ell)}, y_j^{(\ell)}, z \sim \{0,1\}^d$, where $i \in [I], j\in[J], \ell\in[t+1]$, it holds that
\begin{equation*}
  \Pr\Big[ \bigcap_{i \in [I], j\in[J], \ell\in[t+1]}[T_f(x_i^{(\ell)}, y_j^{(\ell)}, z) = 1] \Big] \geq  \eta^{IJ(t+1)}.
\end{equation*}
\end{claim}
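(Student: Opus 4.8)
The approach is a rectangular tensor-power argument: I peel off the sampled vectors one group at a time, using that expectations of products factor over independent coordinates, and repeatedly invoking the elementary inequality $\E[X^k] \ge (\E[X])^k$, valid for every nonnegative random variable $X$ and integer $k \ge 1$ by convexity of $s \mapsto s^k$ (Jensen's inequality; for $k=2$ this is exactly Cauchy--Schwarz, whence the label). Throughout, I read $T_f$ as a $\{0,1\}$-valued indicator, so that the probability of an intersection of events $[T_f(\cdot)=1]$ equals the expectation of the corresponding product.

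First I would condition on $z$. For a fixed value of $z$, the batches $\{x_i^{(\ell)}, y_j^{(\ell)} : i\in[I], j\in[J]\}$ over distinct $\ell\in[t+1]$ are independent and identically distributed, and the $\ell$-th factor of $\prod_{i,j,\ell} T_f(x_i^{(\ell)}, y_j^{(\ell)}, z)$ depends only on the $\ell$-th batch, so
\[
\Pr\Big[ \bigcap_{i,j,\ell}[T_f(x_i^{(\ell)}, y_j^{(\ell)}, z) = 1] \Big] = \E_z\big[\Phi(z)^{t+1}\big], \qquad \Phi(z) := \E_{x_1,\dots,x_I,\, y_1,\dots,y_J}\Big[\textstyle\prod_{i \in [I],\, j\in[J]} T_f(x_i, y_j, z)\Big].
\]
The core step is to show $\Phi(z) \ge q_z^{IJ}$ for each fixed $z$, where $q_z := \Pr_{x,y}[T_f(x,y,z)=1] = \E_{x,y}[T_f(x,y,z)]$. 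Since $y_1,\dots,y_J$ are i.i.d.\ and the $j$-th factor $\prod_i T_f(x_i,y_j,z)$ depends only on $y_j$, the inner expectation over the $y_j$'s factors, giving $\Phi(z) = \E_{x_1,\dots,x_I}\big[(\E_y[\prod_i T_f(x_i,y,z)])^J\big]$. Applying $\E[X^J]\ge(\E[X])^J$ with $X=\E_y[\prod_i T_f(x_i,y,z)]$ over the choice of $x_1,\dots,x_I$, then exchanging the order of expectation and factoring over the i.i.d.\ $x_i$'s, yields $\Phi(z) \ge \big(\E_y[(\E_x T_f(x,y,z))^I]\big)^J$. One further application of $\E[X^I]\ge(\E[X])^I$ with $X=\E_x T_f(x,y,z)$ over the choice of $y$ gives $\Phi(z) \ge (\E_{x,y}T_f(x,y,z))^{IJ} = q_z^{IJ}$.

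Combining these facts finishes the proof: since $0\le\Phi(z)\le 1$ and $\Phi(z)\ge q_z^{IJ}$ pointwise, we get $\Pr[\bigcap\cdots] = \E_z[\Phi(z)^{t+1}] \ge \E_z[q_z^{IJ(t+1)}] \ge (\E_z[q_z])^{IJ(t+1)} = \eta^{IJ(t+1)}$, where the last inequality is once more $\E[X^k]\ge(\E[X])^k$ and the final equality holds because $\E_z[q_z] = \E_{x,y,z}[T_f(x,y,z)] = \eta$. The computation is entirely routine; the only point that requires care is the bookkeeping of which variables are shared among the events — $z$ is common to all of them, and within a single batch $\ell$ the vector $x_i$ is shared across all $j$ while $y_j$ is shared across all $i$ — so that each factorization and each Jensen step is applied to the correct group of independent variables. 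I do not anticipate any genuine obstacle.
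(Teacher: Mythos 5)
Your proof is correct and follows essentially the same route as the paper's: condition on the shared variable $z$ (and then on the other shared variables within a batch), use independence to factor the product into identical factors, and repeatedly apply the inequality $\E[X^k]\ge(\E[X])^k$ — which the paper invokes in the form of H\"older's inequality — to peel off the exponents $t+1$, $I$, and $J$. The only difference is cosmetic bookkeeping in the order of peeling, so the two arguments are the same in substance.
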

\begin{proof}
    The proof uses the H\"{o}lder's inequality as its key ingredient.

    \begin{claim}[H\"{o}lder's inequality]\ Let\ $p, q \geq 1$\ such\ that\
      $\frac 1p+\frac 1q=1$.\ For\ all\ vectors\ $(a_1, \dots, a_n),$
 $(b_1, \dots, b_n) \in \R^{n}$,  %
    \begin{equation*}
        \sum_{i \in [n]}  |a_ib_i| \leq \Big( \sum_{i \in [n]}  |a_i|^{p} \Big)^{1/p} \Big( \sum_{i \in [n]}  |b_i|^{q} \Big)^{1/q}.
    \end{equation*}
    \end{claim}

    For $x_i^{(\ell)}, y_j^{(\ell)}, z \sim \{0,1\}^d$, let $T_{ij}^{(\ell)}$ be the event $T_f(x_i^{(\ell)}, y_j^{(\ell)}, z) = 1$. Then
    \begin{align*}
          \Pr\Big[ \bigcap_{i \in [I], j \in [J],  \ell \in [t+1]} T_{ij}^{(\ell)}\Big] &= \sum_{u \in \{0,1\}^d}     \Pr\Big[  \bigcap_{i \in [I], j \in [J],  \ell \in [t+1]} T_{ij}^{(\ell)} \:|\:  z = u\Big]\Pr[ z = u] \\
          &= \frac{1}{2^{d}}\sum_{u \in \{0,1\}^d}     \Pr\Big[  \bigcap_{i \in [I], j \in [J],  \ell \in [t+1]} T_{ij}^{(\ell)} \:|\: z=u\Big] \\
          &= \frac{1}{2^{d}}\sum_{u \in \{0,1\}^d}     \Pr\Big[ \bigcap_{i \in [I], j \in [J]} T_{ij}^{(1)} \:|\:z = u\Big]^{t+1} \\
          &\geq \Big( \frac{1}{2^{d}} \sum_{u \in \{0,1\}^d}     \Pr\Big[ \bigcap_{i \in [I], j \in [J]} T_{ij}^{(1)} \:|\:z = u\Big] \Big)^{t+1} \\
          &= \Pr\biggl[ \bigcap_{ i \in [I], j \in [J]} T_{ij}^{(1)}\biggr]^{t+1},
    \end{align*}
    where the first equality holds by the law of total probability; the third equality holds because, conditioned on $z$ taking a specific value,  the events $\bigcap_{i \in [I],j\in[J]}T_{ij}^{(\ell)}$ for $\ell\in[t+1]$ are independent and have the same probability; the inequality follows from the H\"{o}lder's inequality. 
    
    We use a similar argument to obtain
    \begin{equation*}
        \Pr\Big[ \bigcap_{ i \in [I], j \in [J]} T_{ij}^{(1)}\Big] \geq \Big(\Pr\Big[ \bigcap_{ j \in [J]} T_{1j}^{(1)}\Big]\Big)^{I},
    \end{equation*}
    where we condition on the values of $y_1^{(1)}, \dots, y_J^{(1)}$, and $z$. Similarly, by conditioning on the values of $x_1^{(1)}$ and $z$, we obtain
     \begin{equation*}
        \Pr\Big[ \bigcap_{ j \in [J]} T_{1j}^{(1)}\Big] \geq (\Pr[T_{11}^{(1)}])^{J}.
    \end{equation*}
    Since $\Pr[T_{11}^{(1)}] = \eta$, the claim follows. 
\end{proof}

Next, we consider the case when $\eps_f$ is small. For $u_1, u_2, u_3 \in \{0,1\}^d$, let $\mathrm{span}(u_1, u_2, u_3)$ be the set of points $\bigoplus_{i \in T} u_i$ for $ \emptyset \neq T \subseteq [3]$ and 
 \begin{align*}
      S = \bigcup_{i \in [I], j\in [J], \ell \in [t+1]} \mathrm{span}(x_i^{(\ell)}, y_j^{(\ell)}, z) .
 \end{align*}
 The set $S$ has $(I+J)(t+1)+1$ singletons, at most $I(J+1)(t+1)$ double sums, and at most $IJ(t+1)$ triple sums. Therefore, $|S| \leq 3IJ(t+1)$.  
 
 \begin{claim} \label{clm:small_eps}
 Suppose $\eps_f \leq \frac{1}{2|S|}$. Then $\eta' \geq \frac{\eps_f}{2}$.
 \end{claim}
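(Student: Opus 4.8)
The plan is to trade on the smallness of $\eps_f$ so that, once we condition on $z$ landing on a point where $f$ disagrees with its closest quadratic approximant, with good probability this is the \emph{only} disagreement among all points appearing in the relevant spans, which makes every triple a violation at once. Fix a quadratic function $g$ closest to $f$ and set $E = \{w \in \{0,1\}^d : f(w) \neq g(w)\}$, so $|E| = \eps_f \cdot 2^d$. Since $T_g(x,y,z)=0$ identically, for all $x,y,z$ we have $T_f(x,y,z) \equiv \bigl|\{\emptyset \neq T \subseteq \{x,y,z\} : \bigoplus_{u\in T} u \in E\}\bigr| \pmod 2$; that is, $(x,y,z)$ violates quadraticity precisely when an odd number of its seven subset-sums lie in $E$.

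Next I would lower-bound $\eta'$ by the probability of the good event $\mathcal{E}$ that $z \in E$, that no point of $S$ other than $z$ lies in $E$, and that all points of $S$ are pairwise distinct. Under $\mathcal{E}$, every triple $(x_i^{(\ell)}, y_j^{(\ell)}, z)$ has all seven of its subset-sums in $S$, pairwise distinct, exactly one of which --- namely $z$ --- lies in $E$; hence each such triple violates quadraticity, so $\eta' \geq \Pr[\mathcal{E}]$. To bound $\Pr[\mathcal{E}]$, first note $\Pr[z \in E] = \eps_f$. Conditioned on $z \in E$, each of the at most $|S|-1$ remaining points of $S$ is an XOR involving at least one of the freshly sampled uniform points $x_i^{(\ell)}$ or $y_j^{(\ell)}$, hence is itself marginally uniform over $\{0,1\}^d$ and lies in $E$ with probability $\eps_f$; a union bound shows that, conditioned on $z \in E$, the probability that some such point is in $E$ is at most $(|S|-1)\eps_f \leq \tfrac{1}{2}$, using the hypothesis $\eps_f \leq \tfrac{1}{2|S|}$. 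Combining, $\eta' \geq \eps_f\bigl(1-(|S|-1)\eps_f\bigr) \geq \eps_f/2$.

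The place requiring care --- and what I would treat as a technicality rather than the real difficulty --- is the distinctness clause inside $\mathcal{E}$: without it, a subset-sum of some triple could coincide with $z$ (for instance, $x_i^{(\ell)} = y_j^{(\ell)}$ forces the triple sum to equal $z$), in which case the number of subset-sums in $E$ could be even and the triple need not be a violation. Exactly as in \autoref{clm:quadratic_witness} and \autoref{clm:b1}, the probability that the (at most $3IJ(t+1)$) points of $S$ fail to be pairwise distinct is at most $|S|^2/2^d$, which is negligible for $d$ large enough, so it is absorbed without harming the bound $\eta' \geq \eps_f/2$. Everything else is the short union-bound estimate above.
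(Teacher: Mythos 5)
Your proof is correct and takes essentially the same route as the paper's: fix a closest quadratic $g$, lower-bound $\eta'$ by the probability that $z$ is the unique point of $S$ at which $f$ and $g$ disagree, and bound that probability by $\eps_f\bigl(1-(|S|-1)\eps_f\bigr) \geq \eps_f/2$ using pairwise independence of the points of $S$ (the paper via inclusion--exclusion, you via conditioning on $z \in E$) together with the hypothesis $\eps_f \leq \frac{1}{2|S|}$. The only difference is that you explicitly add pairwise distinctness of the points of $S$ to the good event and absorb its $|S|^2/2^d$ failure probability for $d$ large enough, a degenerate case the paper's own proof of this claim leaves implicit.
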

 \begin{proof}
 Let $g$ be a closest quadratic function to $f$. Any two elements of $S$ are uniformly and independently distributed in $\{0,1\}^d$. Then, for $x_i^{(\ell)}, y_j^{(\ell)}, z \sim \{0,1\}^d$, we have
 \begin{align*}
     \eta' &\geq \Pr[f(z) \neq g(z) \text{ and } f(u) = g(u) \:\forall\: u \in S\setminus \{z\}] \\
     &\geq \Pr[f(z) \neq g(z)] - \sum_{u \in S\setminus\{z\}}[f(z) \neq g(z) \text{ and } f(u) \neq g(u)] \\
     &\geq \eps_f - (|S|-1)\eps_f^2 = \eps_f(1 - (|S|-1)\eps_f).
 \end{align*}
 If $\eps_f \leq \frac{1}{2|S|}$, then $1 - (|S|-1)\eps_f \geq 1 - |S| \cdot \frac{1}{2|S|} \geq \frac{1}{2}$, which concludes the proof. 
 \end{proof}
 
 Suppose $\frac{1}{2|S|} \leq \eps_f \leq \frac{3}{7\cdot 40}$. In this case, by~\Clm{bat}~and~\Clm{quadratic_cauchy} and using the fact that $|S| \leq 3IJ(t+1)$, we have 
 \begin{align*}
 \eta' \geq \Big(\frac{7}{3}\eps_f\Big)^{IJ(t+1)} \geq \Big(\frac{7}{6\cdot |S|}\Big)^{IJ(t+1)} \geq  \frac{7}{(18  \cdot IJ(t+1))^{IJ(t+1)}}.
 \end{align*}
 Finally, if $\eps_f \geq \frac{1}{40}$, then again by~\Clm{bat}~and~\Clm{quadratic_cauchy}, we have $\eta' \geq \frac{1}{40^{(IJ(t+1))}}$. We have obtained that $\eta' \geq \min \bigl( \frac{\eps}{2}, \frac{7}{(18 \cdot IJ(t+1))^{I J (t+1)}} \bigr)$. 
\end{proof}

\subsection{Proof of \Lem{erasures}} \label{sec:lem_erasures}

In this section, we prove \Lem{erasures}, which shows that conditioned on the good event $W \cap \mathcal{G}$, all queries in one round of \Alg{quadratic_t} are successfully obtained. In particular, this probability only depends on the per-query erasure budget $t$. 

\begin{proof}[Proof of \Lem{erasures}]
    We first prove a lower bound on the probability that the queries made in Step~\ref{step:query_xy} are successfully obtained.
\begin{claim}\label{clm:connect}
Conditioned on the event $W\cap \mathcal{G}$, the probability that in one execution of Step~\ref{step:query_xy} at level $m \in [0,t]$, all queries in the step are successfully obtained is at least
\begin{equation*}
    \Big(\frac{1}{(t^2+t)(2t+1)^{t-m}+1} \Big)^{(t+1)(2t+1)^{t-m}}.
\end{equation*}
\end{claim}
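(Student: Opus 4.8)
The plan is to bound, step by step, the probability that each of the $(t+1)(2t+1)^{t-m}$ queries of the form $x \oplus y_{\bj}$ (for $x \in S_{\bj}$) made in one execution of Step~\ref{step:query_xy} at level $m$ lands on a nonerased point, given the good event $W \cap \mathcal{G}$. Recall that $W\cap\mathcal{G}$ guarantees that, right before $y_{\bj}$ is queried in Step~\ref{step:query_x}, \emph{all} double decoys of the form $x \oplus y_{\bj}$ with $x$ in the reserve set $S_{\bj^{(-1)}}$ are nonerased, and that all these decoys are distinct (they are distinct since they belong to a witness). So the only erasures that can hurt us are the ones the oracle performs \emph{after} $y_{\bj}$ is queried but before (and during) the sequence of queries in Step~\ref{step:query_xy}. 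The oracle gets a budget of $t$ erasures after answering $y_{\bj}$, and a budget of $t$ after each of the $|S_{\bj}| = (t+1)(2t+1)^{t-m}$ queries made inside Step~\ref{step:query_xy}; so the total number of points the oracle can erase during this window is $t\cdot\big((t+1)(2t+1)^{t-m}+1\big) = (t^2+t)(2t+1)^{t-m}+t \le (t^2+t)(2t+1)^{t-m}+1$ — this is where the denominator in the claim comes from.

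The key point is that $S_{\bj}$ is chosen in Step~\ref{step:random_subset} as a \emph{uniformly random} subset of $S_{\bj^{(-1)}}$ of size $(t+1)(2t+1)^{t-m}$, and this choice is made by the algorithm only \emph{after} $y_{\bj}$ is queried. From the adversary's viewpoint, it has marked at most $(t^2+t)(2t+1)^{t-m}+1$ of the double decoys $\{x \oplus y_{\bj} : x \in S_{\bj^{(-1)}}\}$ as (eventually) erased, but it does not know which of them the algorithm will actually query, because $S_{\bj}$ is random. I would argue by revealing the elements of $S_{\bj}$ one at a time and querying the corresponding decoy immediately: conditioned on the history so far (and on $W\cap\mathcal{G}$), each newly revealed reserve element is uniform among the not-yet-revealed elements of $S_{\bj^{(-1)}}$, of which there are at least $|S_{\bj^{(-1)}}| - |S_{\bj}| = (t+1)(2t+1)^{t-m+1}-(t+1)(2t+1)^{t-m} = (t+1)(2t+1)^{t-m}\cdot 2t$ remaining at the start, hence always at least this many; the number of these whose decoy has been (or will be) erased by the oracle is at most $(t^2+t)(2t+1)^{t-m}+1$. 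Thus the probability that the next queried decoy is erased is at most $\frac{(t^2+t)(2t+1)^{t-m}+1}{(t+1)(2t+1)^{t-m}}$... wait — this is too crude; I need the bound $\frac{1}{(t^2+t)(2t+1)^{t-m}+1}$ per query, which suggests the intended argument conditions more carefully, e.g., on exactly which decoys the oracle erases and using that $S_{\bj}$ avoids \emph{all} of them with the stated probability. Concretely, I would instead bound the probability that $S_{\bj}$ is disjoint from the set $E$ of decoys the oracle erases in this window, $|E| \le (t^2+t)(2t+1)^{t-m}+1$; since $S_{\bj}$ is a uniform size-$(t+1)(2t+1)^{t-m}$ subset of a set of size $(t+1)(2t+1)^{t-m+1}$, a standard union-bound / sequential-revealing estimate gives that each of the $(t+1)(2t+1)^{t-m}$ elements of $S_{\bj}$ avoids $E$ with probability at least $\frac{1}{(t^2+t)(2t+1)^{t-m}+1}$ conditioned on the previous ones also avoiding it, which multiplies to the claimed bound. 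The subtlety is that the oracle's erasures in $E$ are chosen adaptively, interleaved with the reveals of $S_{\bj}$; so the clean way is: at the moment just before the $k$-th element of $S_{\bj}$ is revealed, the oracle has committed to at most $(t^2+t)(2t+1)^{t-m}+1$ erased decoys total over the whole window (a fixed upper bound independent of the algorithm's future coins), the $k$-th element is uniform over the at-least-$(t+1)(2t+1)^{t-m}\cdot 2t \ge (t^2+t)(2t+1)^{t-m}+1$ not-yet-revealed reserve elements, and hence its decoy is nonerased with probability at least $1 - \frac{(t^2+t)(2t+1)^{t-m}+1}{(t+1)(2t+1)^{t-m}\cdot 2t}$; one then checks this is at least $\frac{1}{(t^2+t)(2t+1)^{t-m}+1}$...

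Let me instead state the plan at the level the paper needs: I would (i) fix the good event $W\cap\mathcal{G}$, so all relevant double decoys are distinct and were nonerased at the moment $y_{\bj}$ was queried; (ii) observe that the only erasures relevant to Step~\ref{step:query_xy} at level $m$ are the at most $N := (t^2+t)(2t+1)^{t-m}+1$ erasures occurring in the time window between the query to $y_{\bj}$ and the last query of Step~\ref{step:query_xy}; (iii) reveal the elements of $S_{\bj}$ sequentially, querying the corresponding decoy right away, and show by induction on the number of elements revealed that, conditioned on all previously queried decoys being nonerased, the next one is nonerased with probability at least $\frac{1}{N} = \frac{1}{(t^2+t)(2t+1)^{t-m}+1}$ — this uses that a uniformly random element of the remaining reserve (of size $\ge (t+1)(2t+1)^{t-m}\cdot 2t$, which is $\ge N$) avoids the fixed set of $\le N-1$ further-erased decoys with probability $\ge 1 - \frac{N-1}{N} = \frac{1}{N}$; (iv) multiply over the $(t+1)(2t+1)^{t-m}$ elements of $S_{\bj}$ to obtain the claimed lower bound. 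The main obstacle is handling the adaptivity of the adversary correctly in step (iii): the set of "further-erased" decoys is not fixed in advance, so the clean argument is to upper-bound its cardinality by $N-1$ uniformly (this bound holds regardless of adaptivity since the total window budget is $N$), and then note that conditioning on the adversary's choices only fixes this set without changing the uniformity of the next revealed reserve element, which is determined by the algorithm's independent internal coins in Step~\ref{step:random_subset}. I expect verifying the inequality $(t+1)(2t+1)^{t-m}\cdot 2t \ge (t^2+t)(2t+1)^{t-m}+1$ (needed so the per-step failure probability estimate is valid) to be a short but necessary calculation.
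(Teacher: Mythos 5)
Your overall strategy is the paper's: reveal the elements of $S_{\bj}$ one at a time, use the fact that the adversary cannot tell which reserve elements lie in $S_{\bj}$ until their sums with $y_{\bj}$ are queried, bound the number of decoys the adversary can erase in the relevant window, and multiply the per-query success probabilities. There is, however, a concrete quantitative gap in your count of the parent reserve. You take $|S_{\bj^{(-1)}}| = (t+1)(2t+1)^{t-m+1}$, so that after excluding the $s' := (t+1)(2t+1)^{t-m}$ elements of $S_{\bj}$ at least $2ts'$ candidates remain. But the algorithm removes each chosen subset from its parent after processing it, so by the time the last sibling $j_m = t+1$ at level $m$ is reached, the parent has been depleted to $(t+1)(2t+1)^{t-m+1} - t\,s' = (t+1)^2(2t+1)^{t-m} = (t+1)s'$ elements (and for $m=0$ the reserve starts at $I = (t+1)^2(2t+1)^t = (t+1)s'$, not $(t+1)(2t+1)^{t+1}$). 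In the worst case the pool from which the $i$-th element of $S_{\bj}$ is drawn therefore has only $(t+1)s' - (i-1) \ge ts'+1$ elements, not $\ge 2ts'$, while up to $ti \le ts'$ of the corresponding decoys may be erased. This is exactly why the per-query success probability in the claim is the tiny $\frac{1}{ts'+1} = \frac{1}{(t^2+t)(2t+1)^{t-m}+1}$ rather than the roughly $\frac12$ your numbers suggest; with the corrected pool size your sequential argument reproduces the paper's chain
$\prod_{i=1}^{s'}\bigl(1-\frac{ti}{s-i+1}\bigr) \ge \bigl(1-\frac{ts'}{s-s'+1}\bigr)^{s'} \ge \bigl(\frac{1}{ts'+1}\bigr)^{s'}$ with $s \ge (t+1)s'$.

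Two smaller points. First, the inequality $(t^2+t)(2t+1)^{t-m}+t \le (t^2+t)(2t+1)^{t-m}+1$ that you use to match the claim's denominator is false for $t\ge 2$; the resolution is that erasures made after the \emph{last} query of Step~\ref{step:query_xy} are irrelevant, so at most $ti \le ts'$ erasures matter before the $i$-th query, and $ts' < ts'+1$. Second, even granting your (incorrect) premise the argument would not quite close: with pool size $2ts'$ and $ts'+1$ erased decoys the per-step bound is $\frac{ts'-1}{2ts'}$, which for $t=1$, $s'=2$ equals $\frac14 < \frac13 = \frac{1}{ts'+1}$. So getting the reserve count right is genuinely needed, not merely cosmetic. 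Your treatment of adaptivity (bounding the adversary's erasures uniformly and using that the next revealed reserve element is uniform over what remains) matches the paper and is fine.
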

\begin{proof}
Fix the values of $\ell,m, j_1, \dots, j_{m}$ for the given %
execution
of Step~\ref{step:query_xy} and let $\bj = (\ell,j_1, \dots, j_{m})$. 
We can assume, without loss of generality, that Step~\ref{step:query_x} considers the tuples $(j_1, \dots, j_m)$ in the lexicographic order.
When $j_m = 1$, then right before Step~\ref{step:query_xy} is executed, we have $|S_{\bj^{(-1)}}| = (t+1)(2t+1)^{t-m+1}$. The size of the set $S_{\bj^{(-1)}}$ decreases as the value of $j_m$ increases, so that for $j_m = t+1$, we have  
\begin{align}
    |S_{\bj^{(-1)}}| &= (t+1)(2t+1)^{t-m+1} - t(t+1)(2t+1)^{t-m} \nonumber \\
    &= (t+1)(2t+1)^{t-m}(2t+1-t) = (t+1)^2 (2t+1)^{t-m}.\nonumber
\end{align}
Thus, right before Step~\ref{step:query_xy} is executed, the size of  $S_{\bj^{(-1)}}$ is at least $(t+1)$ times the size of the subset $S_{\bj}$.  

Let $s$ denote the size of $S_{\bj^{(-1)}}$ right before the execution of Step~\ref{step:random_subset}. In  Step~\ref{step:random_subset} we sample a uniformly random subset $S_{\bj}$ of $S_{\bj^{(-1)}}$ of size $s' = (t+1)(2t+1)^{t-m}$ and in Step~\ref{step:query_xy} we query $x \oplus y_{\bj}$ for all $x \in S_{\bj}$. Conditioned on the event $W \cap \mathcal{G}$, all sums $x \oplus  y_{\bj}$, where $x\in S_{\bj}$, are distinct and nonerased right before $y_{\bj}$ is queried. On the $i$-th query of Step~\ref{step:query_xy}, the tester selects a uniformly random $x$ out of $s-(i-1)$ elements. Right before the $i$-th query of the tester, the oracle can have erased at most $ti$ of the sums $x \oplus y_{\bj}$, for $x \in S_{\bj^{(-1)}}$, since the adversary is not aware of the points belonging to $S_{\bj}$ until their sums with $ y_{\bj}$ are queried in Step~\ref{step:query_xy}. Therefore, the probability that the tester successfully obtains the sum on its $i$-th query is at least $1 - \frac{ti}{s-i+1}$, where $i \in [s']$. We argued that, for all values of $j_m$, right before the execution of Step~\ref{step:query_xy}, it always holds that $s \geq (t+1)s'$. Therefore, the probability that the tester successfully obtains a sum is always positive. In particular, using the bound $s \geq (t+1)s'$, the probability that all queries in Step~\ref{step:query_xy} are successfully obtained is at least
\begin{equation*}
        \prod_{i=1}^{s'} \Big( 1 - \frac{ti}{s-i + 1} \Big) 
        \geq \Big( 1 - \frac{ts'}{s-s' + 1} \Big)^{s'} 
        \geq \Big( \frac{1}{(t+1)s' - s' + 1} \Big)^{s'}
        =  \Big(\frac{1}{s't+1} \Big)^{s'}.
    \end{equation*}
Substituting $s' = (t+1)(2t+1)^{t-m}$ concludes the proof. 
\end{proof}
 
In Steps~\ref{step:query_y}-\ref{step:query_z}, the tester makes only singleton and double queries. Conditioned on $W \cap \mathcal{G}$, all singleton queries are successfully obtained. All double queries are made in Step~\ref{step:query_xy}. By \Clm{connect}, the probability that all double queries in Step~\ref{step:query_xy} are successfully obtained depends only on $t$. 
    
    Before $z$ is queried in Step~\ref{step:query_z}, conditioned on the event $W \cap \mathcal{G}$, all double and triple sums of the form $y_{\bj^{(m)}}  \oplus z$, $x  \oplus z$, and $x  \oplus y_{\bj^{(m)}} \oplus  z$, where $m \in [0,t]$, $\bj = (\ell, j_1 \dots, j_t) \in [t+1]^{(t+1)}$, and $x \in S_{\bj}$, are distinct and nonerased. 
    After $z$ is queried, the oracle can perform at most $t$ erasures. Therefore, there exists $\ell \in [t+1]$, say $\ell=1$, such that all double and triple sums $y_{\bj^{(m)}}  \oplus z$, $x  \oplus z$, and $x  \oplus y_{\bj^{(m)}} \oplus  z$, where $m \in [0,t]$, $\bj = (1, j_1 \dots, j_t) \in [t+1]^{(t+1)}$, and $x \in S_{\bj}$, are nonerased after $z$ is queried. With probability $\frac{1}{t+1}$, the tester samples $\ell=1$ in Step~\ref{step:query_indices}. 
    By a similar reasoning, with probability $\frac{1}{(t+1)^{t}}$ the tester samples values of $j_1, \dots, j_{t}$ in Step~\ref{step:query_indices}, say $j_1 = \dots = j_t = 1$, such that all queries $y_{\bj^{(m)}} \oplus z$, where $m \in [0,t]$, are successfully obtained. In addition, right before  $y_{\bj^{(t)}} \oplus z$ is queried, all sums of the form $x \oplus z$ and $x\oplus y_{\bj^{(m)}} \oplus z$, where $\bj = (1, \dots, 1, j_t) \in [t+1]^{(t+1)}$, $m \in [0,t]$, and $x \in S_{\bj}$, are nonerased.

    After  $y_{\bj^{(t)}} \oplus z$ is queried, the oracle performs at most $t$ erasures, so there exists $x \in S_{\bj}$ such that $x \oplus z$ and $x\oplus y_{\bj^{(m)}} \oplus z$, where $\bj = (1, 1, \dots, 1)$ and $m \in [0,t]$, are nonerased. 
    With probability $\frac{1}{t+1}$, the tester samples this $x$ in Step~\ref{step:query_zy}, and with probability $\frac{1}{t+1}$, it samples $m \in [0,t]$ such that the triple sum $x\oplus y_{\bj^{(m)}} \oplus z$ is successfully obtained in Step~\ref{step:3sum}. Thus, with probability $\frac{1}{(t+1)^{t+3}}$, all queries in Steps~\ref{step:query_indices}-\ref{step:3sum} are successfully obtained. 
    Therefore the probability that all queries in one round of \Alg{quadratic_t} are successfully obtained, conditioned on $W \cap \mathcal{G}$, is $1/c_t$, where $c_t$ is a constant depending only on $t$. 
\end{proof}

\section{Online-erasure-resilient sortedness testing}\label{sec:sortedness}

In this section, we prove \Thm{sortedness_real} on the impossibility of general online-erasure-resilient sortedness testing and \Thm{sortedness_uniform} on online-erasure-resilient sortedness testing of sequences with few distinct values. 

\subsection{Impossibility of online-erasure-resilient sortedness testing}

In this section, we prove \Thm{sortedness_real} which shows that online-erasure-resilient testing of sortedness of integer sequences is impossible.

\begin{proof}[Proof of \Thm{sortedness_real}]
	By Yao's principle (see \Cor{yao_2}), it is enough to give a pair of distributions, one over monotone functions over $[n]$ and the other one over functions over $[n]$ that are far from monotone, and an adversarial strategy, such that there is no deterministic algorithm that can distinguish the distributions with high probability, when given access to them via a $1$-online-erasure oracle that uses the given strategy. 
	
	\noindent Let $n \in \mathbb{N}$ be even. Consider the following distributions on functions $f:[n] \to [n]$. 
	\medskip
	
	\noindent\textbf{$\calD^+$ distribution:} Independently for all $i \in [n/2]$:  %
	\begin{itemize}
		\item %
		$f(2i-1) \gets 2i-1$ and $f(2i) \gets 2i -1$, with probability $1/3$.
		\item $f(2i-1) \gets 2i-1$ and $f(2i) \gets 2i$, with probability $1/3$.
		\item $f(2i-1) \gets 2i$ and $f(2i) \gets 2i$, with probability $1/3$. %
	\end{itemize}
	
	\noindent\textbf{$\calD^-$ distribution:} Independently for all $i \in [n/2]$: %
	\begin{itemize}
		\item $f(2i-1) \gets 2i$ and $f(2i) \gets 2i -1$, with probability $1/3$. %
		\item $f(2i-1) \gets 2i-1$ and $f(2i) \gets 2i$, with probability $2/3$. %
	\end{itemize}
	
	Every function sampled from the distribution  $\calD^+$ is monotone. 
	
	For a function sampled from the distribution $\calD^-$, the expected number of index pairs $(2i-1, 2i)$ such that the respective function values are $2i$ and $2i-1$ is equal to $n/6$. 
	By a Chernoff bound, the probability that the number of such index-pairs is less than $n/12$ is at most $\exp(-n/48)$.
	In other words, with probability at least $1 - \exp(-n/48)$, a function sampled from the $\calD^-$ distribution is $1/12$-far from being monotone.
	
	Consider a deterministic adaptive algorithm $T$ with two-sided error for online-erasure-resilient $\frac{1}{12}$-testing of monotonicity of functions from $[n]$ to $[n]$. 
	Assume that $T$ is given access to a function sampled from $\calD^+$ or $\calD^-$ with equal probability, where the access is via a $1$-online-erasures oracle.
	The oracle uses the following strategy.
	For each $i \in [n/2]$, if $T$ queries the point $2i-1$, then the oracle erases the function value at the point $2i$; if $T$ queries the point $2i$ then the oracle erases the function value at the point $2i-1$. The oracle erases exactly one function value between queries. Given this strategy, the tester $T$ never sees a violation to monotonicity. 
	
	The distribution of function values restricted to any particular index is identical for distributions $\calD^+$ and $\calD^-$. Hence, the tester $T$ cannot distinguish the distributions, no matter how many queries it makes. 
\end{proof}

\subsection{Online-erasure-resilient testing of sortedness with small $r$}

In this section, we prove \Thm{sortedness_uniform} by showing that $O(\frac{\sqrt{r}}{\eps})$ uniform queries are sufficient for $t$-online-erasure-resilient $\eps$-testing of sortedness, when $r$, the number of distinct values in the sequence, is small. 
Pallavoor et al.~\cite{PRV18} gave a 1-sided error $\eps$-tester for sortedness of real-valued sequences containing at most $r$ distinct values that makes $O(\frac{\sqrt{r}}{\eps})$ uniform and independent queries. 
The proof of \cite[Theorem 1.5]{PRV18} is the starting point for our analysis. 

\begin{proof}[Proof of \Thm{sortedness_uniform}] 
We call $(u, v) \in [n]^2$ a \emph{violating} pair if $u < v$ and $f(u) > f(v)$. The tester, given input $\eps \in (0,1)$ and oracle access to a function $f$ with image size at most $r$, queries $f$ on $\frac{2c\sqrt{r}}{\eps}$ points selected uniformly and independently and rejects if and only if it queries $u, v \in [n]$ such that $(u,v)$ is a violating pair.
Since the decision made by the tester does not depend on the specific values being queried, but only on the ordering of the queried values, we can assume without loss of generality that the input function is of the form $f:[n] \to [r]$. Clearly, the tester accepts all sorted functions. We show that for $c = 32$, if $f$ is $\eps$-far from sorted, the tester rejects with probability at least $\frac{2}{3}$. 

Suppose $f$ is $\eps$-far from sorted. Define the violation graph $G([n], E)$ to be the graph whose edges are the violating pairs $(u, v) \in [n]^2$. Dodis et al.~\cite[Lemma 7]{DGLRRS99} show that $G$ has a matching $M$  of size at least $\eps n /2$. For each range value $i \in [r]$, let $ M_i$ be the set of edges  $(u, v) \in M$ whose lower endpoint $u$ has value $i$, i.e., $f(u) = i$. Let $L_i$ consist of the $|M_i|/2$ smallest indices amongst the lower endpoints of the edges in $M_i$. Let $U_i$ consist of the $|M_i|/2$ largest indices amongst the upper endpoints of the  edges in $M_i$. It is not hard to see that, for every $i\in[r]$, all pairs in $L_i \times U_i$ are violating. By construction, 
\begin{equation}\label{eq:matching}
    \sum_{i \in [r]} |L_i| = \sum_{i \in [r]} |U_i| = \sum_{i \in [r]} |M_i|/2 = \frac{|M|}{2} \geq \frac{\eps n }{4}. 
\end{equation}
We show that with probability at least $\frac{2}{3}$, for some $i \in [r]$, the tester samples a nonerased element both from $L_i$ and from $U_i$, in which case it rejects.

Our proof relies on the following claim from \cite{GGLRS00}, for which we first introduce some notation. Let $L_1,\dots, L_r$ be disjoint subsets of $[n]$. Let $p_i = \frac{|L_i|}{n}$ and $\rho = \sum_{i \in [r]} p_i$. Suppose we sample $c\sqrt{r}/ \rho$ uniform points from $[n]$. Let $I$ be a random variable denoting the set of indices $i \in [r]$ such that the sample contains at least one element from the set $L_i$. 

\begin{claim}[Claim 1 of \cite{GGLRS00}] \label{clm:gglrs}
With probability $1 - \eee^{-c}$ over the choice of the sample, 
\begin{equation*}
    \sum_{i \in I} p_i \geq \frac{\rho}{\sqrt{r}}.
\end{equation*}
\end{claim}

 Split the sample into two parts of size $\frac{c\sqrt{r}}{\eps}$ each. Let $L_i' \subseteq L_i$ be the set of nonerased indices of $L_i$ after the tester queries the first part of the sample, and define $U_i'$ similarly. Let $p_i' = \frac{|L_i' |}{n}$, $q_i' = \frac{|U_i' |}{n}$, and $\rho' = \sum_{i \in [r]} \min(p_i', q_i')$. Without loss of generality assume $p_i' \leq q_i'
$ for all $i \in [r]$. The adversary performs at most $\frac{c\sqrt{r}t}{\eps}$ erasures after the first part of the sample is queried, therefore 
\begin{equation*}
    \rho' = \sum_{i \in [r]} \frac{|L_i' |}{n} \geq  \Big(\sum_{i \in [r]} \frac{|L_i |}{n}\Big) - \frac{c\sqrt{r}t}{\eps n} \geq \frac{\eps}{4} - \frac{c\sqrt{r}t}{\eps n} \geq \frac{\eps}{8},
\end{equation*}
where we use \Eqn{matching} and the assumption that $\sqrt{r} \leq r \leq \frac{\eps^2 n}{8c^2 t}$.
Let $I'$ be a random variable denoting the set of indices $i \in [r]$ such that at least one element of $L_i'$ is contained in the first part of the sample. 
By \Clm{gglrs}, with probability $1- \eee^{-\frac{c}{8}}$ over the first part of the sample,
\begin{equation}\label{eq:pi}
     \sum_{i \in I'} p_i' \geq \frac{\eps}{8\sqrt{r}}.
\end{equation}
The probability that the second part of the sample does not include a nonerased element from $\cup_{i \in [r]} U_i'$ is
\begin{equation*}
    \Big ( 1 - \sum_{i\in [r]} q_i' + \frac{c\sqrt{r}t}{\eps n} \Big )^\frac{c\sqrt{r}}{\eps} \leq \Big ( 1 -  \frac{\eps}{8\sqrt{r}} + \frac{\eps}{16\sqrt{r}} \Big )^\frac{c\sqrt{r}}{\eps} \leq \eee^{-\frac{c}{16}},
\end{equation*}
where we use \Eqn{pi} to lower bound $\sum_{i\in [r]} q_i'$, and the assumption that $r < \frac{\eps^2 n}{ct}$ to upper bound the number of erasures. 
Therefore, the probability that for some $i \in [r]$ the tester samples a nonerased element both from $L_i$ and $U_i$ and rejects is at least
\begin{equation*}
 (1 - \eee^{-\frac{c}{8}}) \cdot (1 - \eee^{-\frac{c}{16}})\geq \frac{2}{3},
\end{equation*}
where the inequality holds for  $c= 32$. 
\end{proof}

\section{Impossibility of online-erasure-resilient Lipschitz testing}\label{sec:lipschitz}
In this section, we prove \Thm{lipschitz}.
We start by showing that $1$-online-erasure-resilient Lipschitz testing  of functions $f:[n] \to \{0,1,2\}$ is impossible. 
\begin{proof}[Proof of \Thm{lipschitz}, part 1]
We prove the theorem using Yao's principle. The proof is analogous to that of \Thm{sortedness_real}
once the hard distributions are described. In particular, the strategy of the oracle is identical to that used in the proof of \Thm{sortedness_real}.

Let $n \in \mathbb{N}$ be a multiple of $4$. The following distributions are over functions $f:[n] \to \{0,1,2\}$. \\

\noindent\textbf{Distribution $\mathcal{D}^+$:} Independently, for all $i \in[n]$ such that $i \bmod 4 \equiv 1$:
\begin{itemize}
    \item $f(i) \gets 0$ and $f(i+1) \gets 1$, with probability $1/2$.
    \item $f(i) \gets 1$ and $f(i+1) \gets 2$, with probability $1/2$.
\end{itemize}
Independently, for all $i \in[n]$ such that $i \bmod 4 \equiv 3$:
\begin{itemize}
    \item $f(i) \gets 1$ and $f(i+1) \gets 0$, with probability $1/2$.
    \item $f(i) \gets 2$ and $f(i+1) \gets 1$, with probability $1/2$.
\end{itemize}

\noindent\textbf{Distribution $\mathcal{D}^-$:} Independently, for all $i$ such that $i \bmod 4 \equiv 1$:
\begin{itemize}
    \item $f(i) \gets 0$ and $f(i+1) \gets 2$, with probability $1/2$.
    \item $f(i) \gets 1$ and $f(i+1) \gets 1$, with probability $1/2$.
\end{itemize}
Independently, for all $i$ such that $i \bmod 4 \equiv 3$:
\begin{itemize}
    \item $f(i) \gets 2$ and $f(i+1) \gets 0$, with probability $1/2$.
    \item $f(i) \gets 1$ and $f(i+1) \gets 1$, with probability $1/2$.
\end{itemize}

Every function sampled from the distribution  $\calD^+$ is Lipschitz. For a function sampled from the distribution $\calD^-$, the expected distance to being Lipschitz is $1/4$.
The rest of the proof is similar to that of \Thm{sortedness_real}.
\end{proof}

Next, we prove the second part of \Thm{lipschitz} showing that $1$-online-erasure-resilient Lipschitz testing of functions $f:\{0,1\}^d \to \{0,1,2\}$ is impossible. 

\begin{proof}[Proof of \Thm{lipschitz}, part 2] The proof is analogous to the proof of the first part of the theorem once the hard distributions are described.

The following distributions are over functions $f:\{0,1\}^d \to \{0,1,2\}$. Let $e_1 \in \{0,1\}^d$ denote the first standard basis vector.

\subparagraph*{Distribution $\mathcal{D}^+$:} Independently, for all $x \in \{0,1\}^d$ such that $x_1 = 0$:
\begin{itemize}
    \item $f(x) \gets 0$ and $f(x \oplus e_1) \gets 1$, with probability $1/2$.
    \item $f(x) \gets 1$ and $f(x \oplus e_1) \gets 2$, with probability $1/2$.
\end{itemize}

\subparagraph*{Distribution $\mathcal{D}^-$:} Independently, for all $x \in \{0,1\}^d$ such that $x_1 = 0$:
\begin{itemize}
    \item $f(x) \gets 0$ and $f(x \oplus e_1) \gets 2$, with probability $1/2$.
    \item $f(x) \gets 1$ and $f(x \oplus e_1) \gets 1$, with probability $1/2$.
\end{itemize}
Every function sampled from the distribution  $\calD^+$ is Lipschitz. For a function sampled from the distribution $\calD^-$, the expected distance to being Lipschitz is $1/4$.
The rest of the proof is similar to that of \Thm{sortedness_real}. This completes the proof of \Thm{lipschitz}.
\end{proof}

\section{Computation in the presence of online corruptions} \label{sec:corruptions}

This section discusses implications of our  results to computation in the presence of online corruptions.

\subsection{Online-corruption-resilience from online-erasure-resilience in testing}\label{sec:two_sided}

In this section, we prove \Lem{no_erasures}, showing that an algorithm  that accesses its input via an online-erasure oracle and, with high probability, encounters no erasures and outputs a correct answer, is also correct with high probability for all adversarial strategies (and for the same computational task) when it accesses its input via an online-corruption oracle.

\begin{proof}[Proof of \Lem{no_erasures}]
 Fix an input function $f$ and a strategy of the $t$-online-corruption oracle for this function. Let $\cO^{(c)}$ denote an oracle that uses this strategy. Map the strategy of $\cO^{(c)}$ to a strategy for a $t$-online-erasure oracle $\cO^{(e)}$, so that whenever $\cO^{(c)}$ modifies the value at a point $x$ to $\cO^{(c)}(x) \neq f(x)$, the online-erasure oracle erases the value at the same point, i.e.,  it sets $\cO^{(e)}(x) = \perp$. 
 
 Fix the random coins of an execution of $T$, where access to $f$ is via $\cO^{(e)}$, and for which all queries are successfully obtained and $T$ outputs a correct answer. Let $T^{(e)}$ be the execution of $T$ with those coins. Similarly, let $T^{(c)}$ be the execution of $T$ with the same coins, when access to $f$ is via $\cO^{(c)}$. Since $T^{(e)}$ encounters no erasures, $T^{(c)}$ will make the same queries as $T^{(e)}$ and get the same query answers. Consequently, $T^{(c)}$ will also output the same (correct) answer as  $T^{(e)}$. Therefore, when $T$ accesses $f$ via an online-corruption oracle, it outputs a correct answer with probability at least $2/3$.
\end{proof}

\subsection{Online-corruption-resilient linearity tester}\label{sec:linearity_corrupt}

In this section, we show the existence of a 2-sided error, online-corruption-resilient tester for linearity.

\begin{corollary}\label{cor:linearity_corruptions}
There exists 
$c_0 \in (0,1)$ and a nonadaptive,~$t$-online-corruption-resilient $\eps$-tester for linearity of functions~$f\colon \{0,1\}^d \to \{0,1\}$ that  works for $t \leq c_0\cdot\eps^{5/4} \cdot  2^{d/4}$ and makes $O(\frac{1}{\eps}\log \frac{t}{\eps})$ queries.
\end{corollary}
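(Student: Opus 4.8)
The plan is to obtain \autoref{cor:linearity_corruptions} by invoking \autoref{lem:no_erasures}, which upgrades an online-erasure-resilient algorithm to an online-corruption-resilient one as soon as the erasure-resilient algorithm, against every adversarial strategy, is simultaneously (i) correct with probability at least $2/3$ and (ii) free of erased reads with probability at least $2/3$. The tester of \autoref{thm:linearity_3} has property (i) after a mild probability boost, but it fails (ii): its work-investment loop reuses each reserve $x_1,\dots,x_q$ for up to $\Theta(2^j)$ correlated sum-queries, and once the reserve has been revealed the adversary knows all $2^{q-1}-1$ sums that may be requested, so a union bound over those sum-queries cannot be pushed below $1/6$. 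Hence the first step is to replace \autoref{alg:linearity_3} by a streamlined $t$-online-erasure-resilient linearity tester --- in essence the core reserve-and-sum procedure underlying \autoref{alg:linearity_3}, shorn of the work-investment wrapper --- that performs $O(1/\eps)$ independent iterations, each of which builds a fresh reserve of $q=\Theta(\log(t/\eps))$ uniform points, queries exactly one uniformly random nonempty even-size subset-sum $\bigoplus_{i\in I}x_i$ of the reserve, and rejects if $(x_i : i\in I)$ is a violating tuple while $x_i$ for $i\in I$ and $\bigoplus_{i\in I}x_i$ are all nonerased. By \autoref{thm:fourier}, when $f$ is $\eps$-far from linear such a random subset-sum is a violating tuple with probability at least $\eps$; the adversary can erase only $t(q+1)=O(t\log(t/\eps))$ points during one iteration, a negligible fraction of the $2^{q-1}-1=\Omega((t/\eps)^2)$ candidate sums, so each iteration rejects with probability at least $\eps/2$, and $O(1/\eps)$ iterations reject with probability at least $2/3$. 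The query complexity is $O(q/\eps)=O(\frac{1}{\eps}\log\frac{t}{\eps})$, as required.

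The second step is to check property (ii) for the streamlined tester and then to amplify and apply \autoref{lem:no_erasures}. Every query of the tester is either a freshly sampled uniform point, erased at query time with probability at most $\frac{O(q/\eps)\, t}{2^d}$, or a uniformly random subset-sum of an already-revealed reserve, erased with probability at most $\frac{t(q+1)}{2^{q-1}-1}$; summing over the $O(q/\eps)$ queries, using $t\le c_0\eps^{5/4}2^{d/4}$ (so that $2^d$ dominates the total number of queries times the erasure budget), and, if needed, enlarging the hidden constant in $q$ by a fixed factor, the probability of ever reading $\perp$ drops below any prescribed constant, in particular below $1/6$, without affecting the asymptotic query complexity. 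Running this tester a constant number of times and rejecting if any run rejects then raises the rejection probability on $\eps$-far inputs to at least $5/6$, while a union bound over the constantly many runs keeps the probability of an erased read below $1/6$. The resulting tester has $1$-sided error: on a linear input it always accepts (the correct answer) and reads no $\perp$ with probability at least $5/6$; on an $\eps$-far input it rejects (the correct answer) with probability at least $5/6$ and reads no $\perp$ with probability at least $5/6$, so by a union bound it is correct and erasure-free with probability at least $2/3$. This matches the hypothesis of \autoref{lem:no_erasures}, which therefore delivers a nonadaptive $t$-online-corruption-resilient $\eps$-tester for linearity with the same $O(\frac{1}{\eps}\log\frac{t}{\eps})$ query complexity, establishing \autoref{cor:linearity_corruptions}.

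I expect the main obstacle to be reconciling property (ii) with the target query bound: the tester of \autoref{thm:linearity_3} attains $O(\frac{1}{\eps}\log\frac{t}{\eps})$ queries precisely by amortizing a single reserve over many correlated sum-queries, which is exactly what makes it prone to erased reads, so one cannot simply feed it into \autoref{lem:no_erasures}. The fix is the trade-off above --- replacing the logarithmic number of sum-queries per reserve with a logarithmic number of reserves --- which leaves the query count unchanged, still certifies rejection on far inputs via \autoref{thm:fourier}, and shrinks the per-query erasure probability enough for a global union bound to close. The remaining ingredients (the probability boost, the union bounds, and the invocation of \autoref{lem:no_erasures}) are routine.
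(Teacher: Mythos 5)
Your proposal is correct in its overall architecture, which matches the paper's: exhibit a $t$-online-erasure-resilient linearity tester with the target query complexity that (after constant-factor amplification) rejects far inputs with probability at least $5/6$ \emph{and} reads an erased value with probability at most $1/6$, then invoke \Lem{no_erasures}. Where you diverge is in how the low erasure-read probability is achieved. You argue that the work-investment tester of \Thm{linearity_3} is irreparably prone to erased reads and therefore replace it with a ``one sum per fresh reserve'' tester run $O(1/\eps)$ times. That tester is sound (conditioning \autoref{thm:fourier} on the size of the random even subset $I$ gives the $\eps$ violation bound, and the per-iteration erasure-hit probability $O(tq/2^{q-1})$ is negligible), and the query count is still $O(\frac1\eps\log\frac t\eps)$. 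The paper's fix is lighter: it keeps \Alg{linearity_3} verbatim and merely enlarges the reserve parameter from $q=2\log\frac{50t}{\eps}$ to $q=2\log\frac{3000t}{\eps^2}$, which makes the pool of $2^{q-1}-1$ candidate sums quadratic in $1/\eps$ rather than linear; the fraction of sums erased per round then drops to $O(\eps^2)$, and the union bound over the $O(1/\eps)$ sum queries per round and $O(1/\eps)$ rounds closes at $1/6$. So your diagnosis that ``a union bound over those sum-queries cannot be pushed below $1/6$'' is too strong --- it fails only for the original choice of $q$, not for the work-investment structure as such. One small omission in your write-up, needed by both routes: you should also rule out collisions among the $2^{q-1}-1$ subset-sums (the analogue of event $G_1$ in \Lem{distinct_and_nonerased}), since a single erasure could otherwise kill many subsets $I$ at once; the probability of any collision is at most $2^{2q}/2^d$, which the hypothesis $t\le c_0\eps^{5/4}2^{d/4}$ renders negligible. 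With that detail added, your argument goes through and yields the same corollary; its only cost relative to the paper's is having to re-prove correctness of a new erasure-resilient tester rather than reusing \Thm{linearity_3} with one constant changed.
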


\begin{proof} %

Consider \Alg{linearity_3}, modified to use $q=2\log \frac{Ct}{\eps^2}$ for $C=3000$. Note that the larger the size $q$ of the reserve, the higher is the probability that \Alg{linearity_3} detects a violation to linearity. For the choice of  $q=2\log \frac{3000t}{\eps^2}$, \Alg{linearity_3} is also a 1-sided error online-erasure-resilient tester. Furthermore, for this choice of $q$,  \Alg{linearity_3} rejects with probability at least $5/6$ (as opposed to $2/3$) when $f$ is $\eps$-far from linear. 

By \Lem{no_erasures}, it remains to show that, for all adversarial strategies, \Alg{linearity_3} queries an erased point during its execution with probability at most $1/6$.
Fix an adversarial strategy and one round of \Alg{linearity_3}. (Recall that each iteration of the outer \textbf{repeat} loop in Steps~\ref{step:repeat}-\ref{step:reject} is called a {\em round}). Let $G$ be the event defined in \Lem{distinct_and_nonerased} for the fixed round. Then $\Pr[\overline{G}] \leq \frac{\eps}{800}$. 

Let $B$ denote the event that the algorithm queries an erased point during the fixed round. Then
\begin{align*}
    \Pr[B] = \Pr[B \mid G]\Pr[G] + \Pr[B \mid \overline{G}]\Pr[\overline{G}] \leq \Pr[B \mid G] + \Pr[\overline{G}].
\end{align*}
To upper bound $\Pr[B \mid G]$ it suffices to upper bound the probability that the algorithm queries an erased sum in some iteration of Step~\ref{step:sample-I} of the fixed round. Since $G$ occurred, there are at least $2^{q-1}-1$ distinct sums that can be queried in Step~\ref{step:sample-I}, all of them nonerased at the beginning of the round. \Alg{linearity_3} makes at most $q + 4\cdot 2^j \leq q + \frac{32}{\eps}$ queries in this round. Thus, the fraction of these sums erased during the round is at most
\begin{align*}
    t\cdot\frac{q + \frac{32}{\eps}}{2^{q-1}-1}
   &\leq  t\cdot \Big(\frac{1}{2^{q/2}} + \frac{70}{\eps \cdot 2^{q}} \Big)
   \leq 
 \frac{\eps^2}{C} + \frac{70\eps^3}{4t \cdot C^2 }  
   \leq   \eps^2 \Bigl( \frac{1}{C} +  \frac{8.75}{C^2} \Bigr)
   \leq \frac{\eps^2}{88\cdot 32},
\end{align*}
where in the first inequality we used that $\frac{q}{2^{q-1}-1} \leq \frac{1}{2^{q/2}}$  for $q \geq 9$ and that $(2^{q-1}-1)/32\geq  2^{q}/70$ for $q\geq 5$ (note that $q \geq 2\log(3000t/\eps^2)\geq  25$), in the second inequality we used $q \geq 2\log(3000t/\eps^2)$, and in the third inequality we used $\eps \leq \frac{1}{2}$. 

Since there are at most $4 \cdot 2^j \leq \frac{32}{\eps}$ iterations of Step~\ref{step:sample-I}, we obtain by a union bound that  
\begin{align*}
    \Pr[B \mid G] \leq \frac{32}{\eps} \cdot \frac{\eps^2}{88\cdot 32} = \frac{\eps}{88}.
\end{align*}
Consequently, $\Pr[B] \leq \frac{\eps}{88} + \frac{\eps}{800} \leq \frac{\eps}{80}$. The number of rounds of \Alg{linearity_3} is at most
\begin{align*}
    \sum_{j=1}^{\log 8/\eps} \frac{8\ln 5}{2^j \eps } = \frac{8\ln 5}{\eps} \sum_{j=1}^{\log \frac{8}{\eps}}\frac{1}{2^j} \leq \frac{8 \ln 5}{\eps}. 
\end{align*}
Therefore, \Alg{linearity_3} queries an erased point during its execution with probability at most $\frac{8\ln 5}{\eps} \cdot \Pr[B] \leq \frac{1}{6}$. This concludes the proof.  
\end{proof}

\subsection{Online-corruption-resilient tester of sortedness with few distinct values}\label{sec:sortedness_corrupt}

In this section, we show  the existence of nonadaptive online-corruption-resilient tester for sortedness of sequences with few distinct values. 

\begin{corollary} \label{cor:sortedness_corrupt}
There exists a nonadaptive, $t$-online-corruption-resilient $\eps$-tester for sortedness of $n$-element sequences with at most $r$ distinct values. The tester makes $O(\frac{\sqrt{r}}{\eps})$ uniform and independent queries and works when $r<\frac{\eps^2 n}{c_0 t}$, where $c_0 > 0$ is a specific constant.
\end{corollary}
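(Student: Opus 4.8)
The plan is to follow the template of \Cor{linearity_corruptions}: start from the online-erasure-resilient sortedness tester of \autoref{thm:sortedness_uniform}, boost its constants slightly, argue that it almost never queries an erased point, and then apply \Lem{no_erasures}.

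First I would take the tester from the proof of \autoref{thm:sortedness_uniform}, which queries $\frac{2c\sqrt r}{\eps}$ uniform and independent points and rejects iff two queried points form a violating pair, and replace the constant $c=32$ used there by a larger absolute constant. Inspecting the last line of that proof, the rejection probability when $f$ is $\eps$-far from sorted is at least $(1-e^{-c/8})(1-e^{-c/16})$, which exceeds $5/6$ once $c$ is a sufficiently large constant. This yields a $1$-sided error, nonadaptive, $t$-online-erasure-resilient tester that still makes $O(\sqrt r/\eps)$ uniform queries, still works whenever $r < \frac{\eps^2 n}{c_1 t}$ for an appropriate constant $c_1$ (the analysis of \autoref{thm:sortedness_uniform} only needs $r \le \frac{\eps^2 n}{8c^2 t}$), and now rejects every $\eps$-far sequence with probability at least $5/6$.

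Second I would bound, for an arbitrary $t$-online-erasure strategy, the probability that this tester ever queries an erased point. The tester makes $m = O(\sqrt r/\eps)$ uniform and independent queries; before its $i$-th query the oracle has erased at most $t(i-1)\le tm$ of the $n$ domain points, and since the tester is nonadaptive its $i$-th query is uniform over $[n]$, so the probability it hits an erased point is at most $tm/n$. A union bound over the $m$ queries gives probability at most $\frac{tm^2}{n} = O\!\left(\frac{tr}{\eps^2 n}\right)$, which is at most $1/6$ provided the constant $c_0$ in the hypothesis $r < \frac{\eps^2 n}{c_0 t}$ is large enough (in particular $c_0 \ge c_1$ and $c_0$ dominating the implied constant above).

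Finally I would combine the two estimates. If the input is sorted, the tester always accepts, which is correct, and by the previous paragraph it queries no erased value with probability at least $5/6$. If the input is $\eps$-far from sorted, then by a union bound it both rejects (the correct answer) and queries no erased value with probability at least $1-\tfrac16-\tfrac16=\tfrac23$. Thus, for every adversarial strategy, with probability at least $2/3$ the tester outputs a correct answer and queries no erased value during its execution, so \Lem{no_erasures} converts it into a nonadaptive $t$-online-corruption-resilient $\eps$-tester for sortedness of $n$-element sequences with at most $r$ distinct values, making $O(\sqrt r/\eps)$ uniform and independent queries and working whenever $r < \frac{\eps^2 n}{c_0 t}$. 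The only real obstacle is the bookkeeping of constants — the enlarged $c$ needed to reach rejection probability $5/6$ tightens the constraint on $r$, and $c_0$ must be chosen to simultaneously validate \autoref{thm:sortedness_uniform} and force $tm^2/n \le 1/6$ — but this is routine; the one genuinely new ingredient beyond \autoref{thm:sortedness_uniform} and \Lem{no_erasures} is the elementary observation that a nonadaptive tester making few uniform queries rarely hits an online-erased point.
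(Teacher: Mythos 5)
Your proposal is correct and follows essentially the same route as the paper: amplify the erasure-resilient tester of \autoref{thm:sortedness_uniform} to rejection probability $5/6$, bound the probability of ever querying an erased point by a union bound (at most $O(tr/(\eps^2 n)) \le 1/6$ for suitable $c_0$, exactly the paper's $\frac{4c^2 rt}{\eps^2 n}$ with $c_0 = 24c^2$), and invoke \Lem{no_erasures}. The only difference is cosmetic: you spell out the constant-boosting step that the paper dismisses as an easy modification.
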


\begin{proof} %

Recall that the online-erasure-resilient sortedness tester makes $\frac{2c\sqrt{r}}{\eps}$ uniform and independent queries to $f \colon [n]\to [r]$. The tester always accepts when $f$ is sorted. We can easily modify the tester, without affecting its asymptotic query complexity, to reject with probability at least $5/6$ (as opposed to $2/3$) when $f$ is $\eps$-far from sorted. Then, by \Lem{no_erasures}, it suffices to show that, for all adversarial strategies, this tester queries an erased point with probability at most $1/6$. The adversary performs at most $\frac{2c\sqrt{r}t}{\eps}$ erasures during the execution of the algorithm. Since the tester performs its queries uniformly at random, the probability that a given query is erased is at most $\frac{2c\sqrt{r}t}{\eps n}$. By a union bound, the probability that some query of the tester is erased is at most $\frac{4c^2 r t}{\eps^2 n}$. Set $c_0=24c^2$. By our assumption that $r < \frac{\eps^2 n}{24c^2 t}$, we obtain that the probability that the tester samples an erased point is at most $\frac{1}{6}$. 
\end{proof}

\subsection{Finding witnesses in the presence of online corruptions}\label{sec:detect}

In this section, we prove \Lem{detect}. Specifically, we show how to modify a nonadaptive, 1-sided error, online-erasure-resilient tester for a property $\cP$ to get an algorithm that accesses the input via an online-corruption oracle and outputs a witness demonstrating the violation of $\cP$.

\begin{proof}[Proof of \Lem{detect}]
    Let $T$ be a nonadaptive, 1-sided error, $t$-online-erasure-resilient tester for $\mathcal{P}$. Let $f$ be $\eps$-far from $\cP$. Fix a strategy of the $t$-online-corruption oracle, and let $\cO^{(c)}$ denote the oracle running this strategy. Define a a $t$-online-erasure oracle $\cO^{(e)}$ that follows the same strategy as  $\cO^{(e)}$, except that $\cO^{(e)}$ erases the value at each point $x$ that $\cO^{(c)}$ modifies,
i.e., it sets $\cO^{(e)}(x) = \perp$. 
    
    If the tester $T$ accesses $f$ via $\cO^{(e)}$, it rejects with probability at least $2/3$.
    Fix the random coins of an execution of $T$ for which it rejects. 
    Let $T^{(e)}$ be the execution of $T$ with those coins (where access to $f$ is via $\cO^{(e)}$). Similarly, let $T^{(c)}$ be the execution of $T$ with the same coins when access to $f$ is via $\cO^{(c)}$. 
    Since $T$ is nonadaptive, $T^{(c)}$ and $T^{(e)}$ make the same queries. By definition of $\cO^{(e)}$, if $T^{(e)}$ obtains $O^{(e)}(x) = f(x)$ for a query $x$, then $T^{(c)}$ obtains $O^{(c)}(x) = f(x)$ for the same query. 
  
    Since $T^{(e)}$ is a 1-sided error tester, it can reject only if it successfully obtains values of $f$ on some queried points $x_1, \dots, x_k$ such that no $g \in \cP$ has $g(x_i) =f(x_i)$ for all $i \in [k]$. Then, for the same queries,  $T^{(c)}$ obtains $\cO^{(c)}(x_1) = f(x_1), \dots, \cO^{(c)}(x_k)=f(x_k)$.  Therefore, $T^{(c)}$ finds a witness of $f \notin \cP$. We modify $T$ to output a witness if it finds one. (Note that $T^{(c)}$ and $T^{(e)}$ might output different sets of points violating $\cP$, since $T^{(c)}$ will obtain additional values $\cO^{(c)}(x) \in \R$ for the queries $x$ for which $T^{(e)}$ obtained $\cO^{(e)}(x) = \perp$.) Now, if we run the modified $T$ with access to $f$ via $\cO^{(c)}$, it outputs a witness of $f \notin \cP$ with probability at least $2/3$. 
\end{proof}

\section{Yao's minimax principle for online erasure oracles}\label{sec:yao}
	
	We extend Yao's minimax principle \cite{Yao77} to the setting of online erasures, so that it can be used  to prove lower bounds on the query complexity of randomized online-erasure-resilient testers. In the standard property testing model, Yao's principle allows one to prove a lower bound on query complexity by describing a distribution on inputs that is hard to test, in the sense that a deterministic algorithm needs many queries to decide, with high probability over the input, whether an input sampled from the distribution has the property or is far from it. In the setting of online erasures, we show that one can prove a lower bound by describing a distribution on inputs as well as an oracle erasure strategy such that every deterministic algorithm that accesses the input selected from the distribution via the erasure oracle needs many queries to test a specified property with high  probability. 
	
	\begin{theorem}
		\label{thm:yao}
		To prove a lower bound $q$ on the worst-case query complexity of online-erasure-resilient randomized algorithms that perform a specified computational task with error probability strictly less than~$\frac{1}{3}$, it is enough to give a distribution $\cD$ on inputs of size $n$ and a (randomized) oracle erasure strategy $\mathcal{S}$, such that every deterministic algorithm making at most $q$ queries to an input drawn from $\cD$ via an online-erasure oracle that uses strategy $\mathcal{S}$, errs with probability at least $\frac{1}{3}$.
	\end{theorem}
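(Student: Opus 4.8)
The plan is to mimic the classical proof of Yao's principle, with the one twist that the random coins of the oracle erasure strategy $\mathcal{S}$ are treated on the same footing as the random draw of the input from $\cD$. Concretely, I would argue by contradiction. Suppose some randomized online-erasure-resilient algorithm $A$ makes at most $q$ queries on every input and every setting of its own coins, and has error probability strictly below $\frac13$ for the task against \emph{every} adversarial strategy (the probability being over $A$'s coins and the oracle's coins). Since $A$ is a distribution over deterministic $q$-query algorithms $\{A_r\}_r$ indexed by its random string $r$, the goal is to extract from this distribution a single deterministic $q$-query algorithm $A_{r^\ast}$ that has error probability below $\frac13$ against the specific pair $(\cD,\mathcal{S})$, contradicting the hypothesis of the theorem.

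First I would instantiate the assumed guarantee on $A$ with the strategy $\mathcal{S}$: for every input $f$ in the support of $\cD$, the probability that $A$ errs on $f$ when the oracle runs $\mathcal{S}$ is below $\frac13$, where the probability is over $A$'s coins $r$ and the coins $s$ of $\mathcal{S}$. Next I would average this inequality over $f\sim\cD$ and, by linearity of expectation, exchange the expectation over $r$ with the one over $(f,s)$, obtaining
\[
\E_r\Big[\Pr_{f\sim\cD,\ s}\big[A_r \text{ errs on } f \text{ when the oracle runs } \mathcal{S}\big]\Big] < \frac13 .
\]
An averaging argument then yields a fixing $r^\ast$ with $\Pr_{f\sim\cD,\ s}[A_{r^\ast}\text{ errs on }f\text{ when the oracle runs }\mathcal{S}] < \frac13$. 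Since $A_{r^\ast}$ is deterministic and still makes at most $q$ queries, this contradicts the hypothesis that every such algorithm errs with probability at least $\frac13$ on an input drawn from $\cD$ accessed via an oracle using strategy $\mathcal{S}$. Hence no such $A$ exists, so any randomized online-erasure-resilient algorithm achieving error below $\frac13$ for the task must make more than $q$ queries, which is the claimed lower bound.

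The step that needs the most care — and the only place the online/adaptive nature of the model enters — is justifying that conditioning on $A$'s coins hands the oracle no extra power and does not conflict with the rule that the oracle cannot see future coin tosses. The point is that $\mathcal{S}$ is fixed \emph{before} any algorithm is considered and is, by definition, a function of the input and the query history only; it never refers to the algorithm's identity or coins. Thus the interaction between $A_{r^\ast}$ and the oracle running $\mathcal{S}$ is exactly an instance of ``deterministic $\le q$-query algorithm versus oracle using $\mathcal{S}$,'' to which the hypothesis applies verbatim. In fact, because the oracle only ever reacts to past queries, fixing $r^\ast$ in advance is harmless: for each realization of the oracle's coins $s$, the oracle's behavior is determined by the now-deterministic query sequence, so bundling $s$ with the input randomness is legitimate. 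If one prefers a fully deterministic adversary in the end, one can additionally fix the coins $s$ of $\mathcal{S}$ by one more averaging step, but this is not needed for the statement as phrased.
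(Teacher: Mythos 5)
Your proposal is correct and is essentially the paper's argument: both treat the randomized tester as a distribution over deterministic $q$-query decision trees, bundle the oracle's coins with the input randomness (legitimate since the strategy depends only on the input and the query history), exchange the order of expectations, and conclude by averaging. The paper merely packages this as an explicit min-max inequality over distributions on input–strategy pairs (noting finiteness of the strategy space to justify the exchange of sums), whereas you run the same averaging step directly as a proof by contradiction, fixing the coins $r^\ast$.
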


	\begin{proof} %

		We prove a min-max inequality similar to Yao's original min-max inequality \cite{Yao77}. The inequality relates the error complexities of the algorithms, rather than their query complexity, which we will fix beforehand. 
		Consider a randomized algorithm for the specified task
  that, given access to an input object of size $n$
  via a $t$-online-erasure oracle, makes at most $q$ queries, and may be incorrect with some probability. We can represent the randomized algorithm as a probability distribution $\tau$ on the set of all (possibly incorrect) deterministic algorithms $A$ with query complexity at most $q$. Let $\mathcal{A}_q$ denote the set of such algorithms, and let $\tau(A)$ denote the probability of drawing algorithm $A$ from the set $\mathcal{A}_q$. 
		
		Denote by $X_n$ the set of inputs of size $n$ for the computational task.
		Let $\mathcal{S}_{q,n}$ be the set of deterministic erasure strategies of the $t$-online-erasure oracle for inputs of size $n$ against algorithms of query complexity at most $q$. An adversarial strategy can be represented by a decision tree that, based on the input and the queries of the algorithm, indicates which input entries to erase next. Since we are only considering inputs of fixed size and algorithms of bounded query complexity, the set of oracle strategies has finite cardinality.
		Oracle strategies can be randomized, so we consider a distribution on deterministic strategies. For ease of notation, we consider a joint distribution on inputs and oracle strategies, i.e., we  let $\sigma$ be a distribution on the input-strategy pairs $X_n \times \mathcal{S}_{q,n}$. Given an input $x \in X_n$, an adversarial strategy $s \in \mathcal{S}_{q,n}$, and an algorithm $A \in \cA_q$,
		let $I(A,x, s)$ be the indicator function for $A$ being incorrect on $x$, i.e., $I(A,x,s) = 1$ if $A$ is incorrect on $x$ against an oracle that uses strategy $s$, and $I(A,x,s) = 0$ otherwise. 
		
		We 
  prove
  the following min-max inequality
		\begin{equation}
		\min_\tau \max_{(x, s) \in X_n \times \mathcal{S}_{q,n}} \sum_{A \in \mathcal{A}_q} I(A,x,s)\tau(A)  \geq \max_\sigma \min_{A \in \mathcal{A}_q}\sum_{(x, s) \in X_n \times \mathcal{S}_{q,n}} I(A,x,s)\sigma(x,s).
		\label{eq:yao_error}
		\end{equation}
		
		We comment on how \Eqn{yao_error} implies \Thm{yao}. Suppose there exists some distribution $\sigma$ on input-strategy pairs, such that every $q$-query deterministic algorithm errs with probability at least $1/3$ on an input-strategy pair drawn from $\sigma$.
		This implies that the right-hand side of \Eqn{yao_error} is at least $\frac{1}{3}$. (Note that we can express $\sigma$ as a distribution $\mathcal{D}$ on the inputs and a randomized adversarial strategy $\mathcal{S}$).  By \Eqn{yao_error}, every $q$-query randomized algorithm errs with probability at least $\frac{1}{3}$ for every input and adversarial strategy. Therefore, an algorithm for the given computational task that errs with probability strictly less than $\frac{1}{3}$ has to make at least $q$ queries, which implies \Thm{yao}.

		We now prove \Eqn{yao_error}. Fix  
  distribution $\tau$ on algorithms and distribution $\sigma$ on input-strategy pairs. Then
		\begin{align}
		\max_{(x, s) \in X_n \times \mathcal{S}_{q,n}} \sum_{A \in \mathcal{A}_q} I(A,x,s)\tau(A) &= \biggl(\sum_{(x, s) \in X_n \times \mathcal{S}_{q,n}}  \sigma(x,s) \biggr) \max_{(x, s) \in X_n \times \mathcal{S}_{q,n}}  \sum_{A \in \mathcal{A}_q} I(A,x,s)\tau(A) \nonumber \\ 
		&\geq \sum_{(x, s) \in X_n \times \mathcal{S}_{q,n}}\sigma(x,s) \sum_{A \in \mathcal{A}_q} I(A,x,s)\tau(A)  \nonumber \\ 
		&= \sum_{A \in \mathcal{A}_q} \tau(A) \sum_{(x, s) \in X_n \times \mathcal{S}_{q,n}} I(A,x,s)\sigma(x,s) \label{eq:exchange} \\ 
		&\geq \sum_{A \in \mathcal{A}_q} \tau(A) \min_{A \in \mathcal{A}_q}\sum_{(x, s) \in X_n \times \mathcal{S}_{q,n}} I(A,x,s)\sigma(x,s)\nonumber  \\ 
		&= \min_{A \in \mathcal{A}_q}\sum_{(x, s) \in X_n \times \mathcal{S}_{q,n}} I(A,x,s)\sigma(x,s). \nonumber
		\end{align}
		Crucially, the fact that $\mathcal{A}_q, X_n$, and  $\mathcal{S}_{q,n}$ have finite cardinalities
		allows us to exchange the order of the sums in \Eqn{exchange}.
	\end{proof}

\subsection{A version of Yao's minimax with two distributions}\label{sec:yao_2}
In this section, we prove a corollary to \Thm{yao} which refers to the more common usage of Yao's minimax principle for proving lower bounds, where separate distributions on positive and negative instances are defined. The corollary is a generalization of  Claim~5 in \cite{RS06}.

	\begin{definition}The {\em statistical distance} between two discrete distributions $\mathcal{D}_1$ and $\mathcal{D}_2$ is
\begin{align*}
	    CD(\mathcal{D}_1, \mathcal{D}_2) = \max_{S \subseteq \textnormal{support}(\mathcal{D}_1) \cup \textnormal{support}(\mathcal{D}_2)} \Big( \Big| \Pr_{x \sim \mathcal{D}_1} [ x \in S] - \Pr_{x \sim \mathcal{D}_2} [ x \in S] \Big|\Big).
\end{align*}
	\end{definition}
	
\begin{definition} Given a $q$-query deterministic algorithm $\cA$, let $a(x)$ be the string of $q$ answers that $\cA$ receives when making queries to input object $x$. For a distribution $\cD$ and adversarial strategy $\cS$, let $\cD$-view be the distribution on strings $a(x)$ where the input object $x$ is sampled from $\cD$ and accessed via a $t$-online-erasure oracle using strategy~$\cS$. 
\end{definition}

		\begin{corollary}\label{cor:yao_2}
		
To prove a lower bound $q$ on the worst-case query complexity of online-erasure-resilient randomized algorithms for a promise decision problem,
		it is enough to give 
		\begin{itemize}
		\item a randomized adversarial strategy $\cS$,
			\item a distribution $\mathcal{D}^+$ on positive instances of size $n$, and
			
			\item a distribution $\mathcal{D}^-$ on instances of size $n$ that are negative with probability at least $\frac{6}{7}$,
		\end{itemize}
		such that $CD(\cD^+$-view$, \cD^-$-view$) < \frac{1}{6}$ for every deterministic $q$-query algorithm that accesses its input via an online-erasure oracle using strategy $\cS$. 
	\end{corollary}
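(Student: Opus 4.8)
The plan is to derive \autoref{cor:yao_2} from \autoref{thm:yao} by taking the hard input distribution to be the equal mixture $\cD = \tfrac12\cD^+ + \tfrac12\cD^-$ and keeping the randomized adversarial strategy $\cS$ unchanged. By \autoref{thm:yao} it then suffices to show that every deterministic $q$-query algorithm $\cA$ errs with probability at least $\tfrac13$ when run on an input sampled from $\cD$ and accessed via an online-erasure oracle using strategy $\cS$; the probability here is over both the draw from $\cD$ and the internal coins of $\cS$.

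First I would observe that, because $\cA$ is deterministic, its entire transcript --- and in particular its final accept/reject decision --- is a function of the received answer string $a(x)$ alone, since each successive query is determined by the answers obtained so far. Let $S$ be the set of answer strings on which $\cA$ accepts. On the positive side, every instance drawn from $\cD^+$ is a YES instance, so $\cA$ is correct exactly when its view lies in $S$, giving $\Pr_{x\sim\cD^+}[\cA\text{ errs}] = 1 - \Pr_{\cD^+\text{-view}}[a\in S]$. On the negative side, $\cA$ errs whenever the sampled instance is a NO instance and $\cA$ accepts, so
\[
\Pr_{x\sim\cD^-}[\cA\text{ errs}] \;\ge\; \Pr_{\cD^-\text{-view}}[a\in S] \;-\; \Pr_{x\sim\cD^-}[x\text{ is not a NO instance}] \;\ge\; \Pr_{\cD^-\text{-view}}[a\in S] - \tfrac17,
\]
using the hypothesis that a $\cD^-$-sample is negative with probability at least $\tfrac67$.

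Combining the two bounds and applying the definition of $CD$ with the test set $S$,
\begin{align*}
\Pr_{x\sim\cD}[\cA\text{ errs}]
&= \tfrac12\Pr_{x\sim\cD^+}[\cA\text{ errs}] + \tfrac12\Pr_{x\sim\cD^-}[\cA\text{ errs}] \\
&\ge \tfrac12\bigl(1 - \Pr_{\cD^+\text{-view}}[a\in S]\bigr) + \tfrac12\bigl(\Pr_{\cD^-\text{-view}}[a\in S] - \tfrac17\bigr) \\
&= \tfrac12 - \tfrac1{14} - \tfrac12\bigl(\Pr_{\cD^+\text{-view}}[a\in S] - \Pr_{\cD^-\text{-view}}[a\in S]\bigr) \\
&\ge \tfrac12 - \tfrac1{14} - \tfrac12\, CD(\cD^+\text{-view},\cD^-\text{-view})
\;>\; \tfrac12 - \tfrac1{14} - \tfrac1{12} \;=\; \tfrac{29}{84} \;>\; \tfrac13,
\end{align*}
where the strict inequality uses the hypothesis $CD(\cD^+\text{-view},\cD^-\text{-view}) < \tfrac16$. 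Invoking \autoref{thm:yao} with the distribution $\cD$ and the strategy $\cS$ then yields the claimed lower bound $q$.

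The argument is essentially routine and follows the structure of Claim~5 of \cite{RS06}, with $\cS$ carried along unchanged and absorbed into the $\cD^{\pm}$-view distributions. The only delicate points are (i) justifying that the output of a deterministic $q$-query algorithm depends only on its answer string, so that a single accept-set $S$ governs both the positive and the negative analyses --- this is exactly where determinism is used and is why Yao's principle is needed --- and (ii) tracking the slack coming from $\cD^-$-instances being negative only with probability $\tfrac67$, and checking that with the constants $\tfrac16$ and $\tfrac67$ the final bound $\tfrac{29}{84}$ stays strictly above $\tfrac13$.
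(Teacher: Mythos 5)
Your proof is correct, and the arithmetic checks out: $\tfrac12-\tfrac1{14}-\tfrac1{12}=\tfrac{29}{84}>\tfrac{28}{84}=\tfrac13$, so the hypothesis of \autoref{thm:yao} is met by the mixture $\cD=\tfrac12\cD^++\tfrac12\cD^-$ together with the strategy $\cS$. Your lower bound on the error over $\cD^-$, namely $\Pr[\text{NO}\wedge\text{accept}]\ge\Pr[\text{accept}]-\Pr[\text{not NO}]$, is valid, and the observation that a deterministic adaptive algorithm's output is a function of its answer string is exactly what licenses working with a single accept-set $S$ and comparing the view distributions.

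The route differs from the paper's in one step: how the imperfect negativity of $\cD^-$ is absorbed. The paper conditions $\cD^-$ on the event $E$ of being a negative instance to obtain $\cB$, uses the mixture $\tfrac12\cD^++\tfrac12\cB$ (so that every sampled instance satisfies the promise and "correct" is unambiguous), and then must relate $\cB$-view back to $\cD^-$-view; this requires \autoref{clm:bad_event} (Claim 4 of \cite{RS06}), which gives $CD(\cB\text{-view},\cD^-\text{-view})\le\tfrac16$, followed by the triangle inequality to conclude $CD(\cD^+\text{-view},\cB\text{-view})<\tfrac13$ and hence correctness below $\tfrac23$. You instead keep $\cD^-$ unconditioned and pay the non-negative mass as an additive $\tfrac17$ loss in the error probability. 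Your version is more self-contained (no conditioning claim, no triangle inequality) at the cost of slightly less slack in the final constant ($\tfrac{29}{84}$ versus the paper's clean $\tfrac13$); both correctly establish the error bound needed to invoke \autoref{thm:yao}.
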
 

Our proof of \Cor{yao_2} is similar to
an argument in \cite{RS06}. We rely on the following claim from \cite{RS06}.

	\begin{claim}[Claim 4 of \cite{RS06}]\label{clm:bad_event}
	Let $E$ be an event that happens with probability at least $1-\delta$ under the distribution $\cD$ and let $\cB$ be the distribution $\cD$ conditioned on $E$. Then $CD(\cD, \cB) \leq \delta'$, where $\delta' = \frac{1}{1-\delta} - 1$. 
	\end{claim}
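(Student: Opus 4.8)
The plan is to derive \autoref{cor:yao_2} from \autoref{thm:yao} by manufacturing, out of the two given distributions and the strategy $\cS$, a single hard distribution on input--strategy pairs. Concretely, I would let $\cB$ denote $\cD^-$ conditioned on the event $E$ that the sampled instance is negative, and take the hard distribution $\sigma$ to be the equal mixture that with probability $\tfrac12$ draws $x \sim \cD^+$ together with a strategy drawn from $\cS$, and with probability $\tfrac12$ draws $x \sim \cB$ together with a strategy drawn from $\cS$. Since $\cB$ is supported only on negative instances, $\sigma$ is supported only on (positive instance, strategy) and (negative instance, strategy) pairs, so ``error'' is well defined on it, and it is exactly a pair $(\cD,\cS)$ of the kind required by \autoref{thm:yao} (with $\cD = \tfrac12\cD^+ + \tfrac12\cB$). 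It then suffices to show that every deterministic $q$-query algorithm errs with probability at least $\tfrac13$ on a pair drawn from $\sigma$.

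The first substantive step is to control the distance between the views of $\cD^+$ and of $\cB$. Because $\cD^-$ produces a negative instance with probability at least $\tfrac67$, \autoref{clm:bad_event} applied with $\delta = \tfrac17$ gives $CD(\cD^-,\cB) \le \delta' = \tfrac{1}{1-1/7} - 1 = \tfrac16$. For any fixed deterministic $q$-query algorithm $\cA$, the string of answers it receives is a (deterministic) function of the instance together with the realized oracle strategy, so ``$\cD$-view'' is the pushforward of the product distribution $\cD \times \cS$ under this fixed map; since total-variation distance does not increase under post-processing and $CD(\mu \times \cS, \nu \times \cS) = CD(\mu,\nu)$, we get $CD(\cD^-\text{-view}, \cB\text{-view}) \le CD(\cD^-,\cB) \le \tfrac16$. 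Combining this with the hypothesis $CD(\cD^+\text{-view}, \cD^-\text{-view}) < \tfrac16$ and the triangle inequality for $CD$ yields
\[
  CD(\cD^+\text{-view}, \cB\text{-view}) \;<\; \tfrac16 + \tfrac16 \;=\; \tfrac13 .
\]

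Finally I would bound the error of an arbitrary deterministic $q$-query algorithm $\cA$ on $\sigma$. Since $\cA$ is deterministic, whether it accepts depends only on its view; write $p^+$ and $p_\cB$ for its acceptance probabilities under $\cD^+\text{-view}$ and $\cB\text{-view}$, so $|p^+ - p_\cB| < \tfrac13$ by the display above. On a $\cD^+$-instance $\cA$ errs exactly when it rejects, and on a $\cB$-instance (which is always negative) it errs exactly when it accepts, so its error probability on $\sigma$ is $\tfrac12(1-p^+) + \tfrac12 p_\cB = \tfrac12\bigl(1 - (p^+ - p_\cB)\bigr) > \tfrac12\bigl(1 - \tfrac13\bigr) = \tfrac13$. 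Hence every deterministic $q$-query algorithm errs with probability strictly more than $\tfrac13$ on $\sigma$, and \autoref{thm:yao} gives the claimed $q$-query lower bound; this also follows the argument in \cite{RS06}. The main thing to be careful about is the bookkeeping of constants: the proof goes through only because the hypothesis on the views is a \emph{strict} inequality at threshold $\tfrac16$, which together with the $\tfrac16$ loss incurred by conditioning on negativity leaves just enough room to stay strictly below $\tfrac13$; one must also make sure to apply the data-processing step to the joint distribution of instance and independently drawn strategy (not to the instance alone) and to check that $CD$ as defined is genuinely the total-variation distance, so that both the triangle inequality and the data-processing inequality are available.
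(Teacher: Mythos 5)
You have not proved the statement under review. The statement is \autoref{clm:bad_event} itself --- the bound $CD(\cD,\cB)\le \delta'$ between a distribution and its conditioning on a high-probability event --- but your argument invokes exactly this claim as a black box (``\autoref{clm:bad_event} applied with $\delta=\tfrac17$ gives $CD(\cD^-,\cB)\le\tfrac16$'') and instead derives \autoref{cor:yao_2} from \autoref{thm:yao}. As a proof of the corollary your write-up essentially reproduces the paper's own proof of \autoref{cor:yao_2} (with the useful extra observation that the view is a post-processing of the pair (instance, strategy), so the data-processing and product-measure steps are legitimate), but as a proof of the quoted claim it is empty: nowhere do you bound the statistical distance between $\cD$ and $\cD$ conditioned on $E$. (The paper itself does not reprove this either --- it imports it as Claim~4 of \cite{RS06} --- but the task was to supply an argument for it, and your proposal does not.)

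The missing argument is short, and you should include it. Write $p=\Pr_{\cD}[E]\ge 1-\delta$. For any event $S$ in the common support, $\Pr_{\cB}[S]=\Pr_{\cD}[S\cap E]/p$, so $\Pr_{\cB}[S]-\Pr_{\cD}[S]\le \Pr_{\cD}[S\cap E]\bigl(\tfrac1p-1\bigr)\le \tfrac{1}{1-\delta}-1=\delta'$; in the other direction, $\Pr_{\cD}[S]-\Pr_{\cB}[S]\le \Pr_{\cD}[S]-\Pr_{\cD}[S\cap E]\le \Pr_{\cD}[\overline{E}]\le \delta\le \tfrac{\delta}{1-\delta}=\delta'$. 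Taking the maximum over $S$ of the absolute difference yields $CD(\cD,\cB)\le\delta'$, which is the claim.
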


	\begin{proof}[Proof of \Cor{yao_2}]
	Let $A$ be a deterministic $q$-query algorithm for the promise decision problem. Given distributions $\mathcal{D}^+$ and $\mathcal{D}^-$, we define a distribution $\cD$ satisfying the conditions of \Thm{yao}. Let $E$ be the event that  $x \sim \mathcal{D}^-$ is a negative instance, and let $\cB$ be the distribution $\mathcal{D}^-$ conditioned on the event $E$. To obtain a sample from $\cD$, with probability $1/2$ draw a sample from $\cD^+$, and with probability $1/2$ draw a sample from $\cB$. We show that $A$ errs with probability at least $1/3$ when it accesses an input object $x\sim \cD$ via the online-erasure oracle that uses strategy $\cS$. 	Let $Q$ be the set of query-answers strings $a$ on which $A$ accepts.  By the law of total probability, 
	\begin{align*}
	    \Pr_{x\sim \cD}[A(x) \textnormal{ is correct}] &= \frac{1}{2} \Pr_{x\sim \cD^+}[A(x) \textnormal{ accepts}] + \frac{1}{2} \Pr_{x\sim \cB}[A(x) \textnormal{ rejects}] \\
	    &= \frac{1}{2} + \frac{1}{2}\Big( \Pr_{x\sim \cD^+}[A(x) \textnormal{ accepts}] -  \Pr_{x\sim \cB}[A(x) \textnormal{ accepts}]   \Big) \\
	    &= \frac{1}{2} + \frac{1}{2}\Big( \Pr_{a\sim \cD^+\text{-view}}[a \in Q] -  \Pr_{a\sim \cB\text{-view}}[a \in Q]  \Big) \\
	    &\leq \frac{1}{2} + \frac{1}{2}CD(\cD^+\text{-view}, \cB\text{-view}). 
	\end{align*}
	
	Note that $\cB\text{-view}$ is the same as the distribution $\cD^-\text{-view}$ conditioned on the event $E$. By \Clm{bad_event} and the assumption that $\Pr_{x\sim \cD^-}[E] \geq \frac{6}{7}$, we have $CD(\cB\text{-view}, \cD^-\text{-view}) \leq \frac{1}{6}$. By the triangle inequality, 
	\begin{equation*}
	    CD(\cD^+\text{-view}, \cB\text{-view}) \leq  
	    CD(\cD^+\text{-view}, \cD^-\text{-view}) +  
	    CD(\cB\text{-view}, \cD^-\text{-view},) < \frac{1}{6} + \frac{1}{6} = \frac{1}{3}.
	\end{equation*}
	Therefore, $\Pr_{x\sim \cD}[A(x) \textnormal{ is correct}] < \frac{1}{2} + \frac{1}{2} \cdot \frac{1}{3} = \frac{2}{3}$. 
	\end{proof}

\bibliographystyle{tocplain}   %

\bibliography{bibstrings,v019a001,bibtail}

\newcommand{\prelim}{Preliminary version in\ }\newcommand{\prelims}{Preliminary
  versions in\ }
\providecommand{\bibhead}[1]{}
\expandafter\ifx\csname pdfbookmark\endcsname\relax%
  \providecommand{\tocrefpdfbookmark}{}
\else\providecommand{\tocrefpdfbookmark}{%
   \hypertarget{tocreferences}{}%
   \pdfbookmark[1]{References}{tocreferences}}%
\fi

\tocrefpdfbookmark
\begin{thebibliography}{10}

\bibitem{ACCL07}\bibhead{ACCL07}
{\sc Nir Ailon, Bernard Chazelle, Seshadhri Comandur, and Ding Liu}: Estimating
  the distance to a monotone function.
\newblock {\em Random Struct. Algor.}, 31(3):371--383, 2007.
\newblock [\epfmtdoi{10.1002/rsa.20167}]

\bibitem{AlonKKLR05}\bibhead{AlonKKLR05}
{\sc Noga Alon, Tali Kaufman, Michael Krivelevich, Simon Litsyn, and Dana Ron}:
  Testing {Reed-Muller} codes.
\newblock {\em IEEE Trans. Inform. Theory}, 51(11):4032--4039, 2005.
\newblock [\epfmtdoi{10.1109/TIT.2005.856958}]

\bibitem{AroraS03}\bibhead{AroraS03}
{\sc Sanjeev Arora and Madhu Sudan}: Improved low-degree testing and its
  applications.
\newblock {\em Combinatorica}, 23(3):365--426, 2003.
\newblock [\epfmtdoi{10.1007/s00493-003-0025-0}]

\bibitem{AwasthiJMR16}\bibhead{AwasthiJMR16}
{\sc Pranjal Awasthi, Madhav Jha, Marco Molinaro, and Sofya Raskhodnikova}:
  Testing {Lipschitz} functions on hypergrid domains.
\newblock {\em Algorithmica}, 74(3):1055--1081, 2016.
\newblock [\epfmtdoi{10.1007/s00453-015-9984-y}]

\bibitem{BabaiFLS91}\bibhead{BabaiFLS91}
{\sc L{\'{a}}szl{\'{o}} Babai, Lance Fortnow, Leonid~A. Levin, and Mario
  Szegedy}: Checking computations in polylogarithmic time.
\newblock In {\em Proc. 23rd STOC}, pp. 21--31. ACM Press, 1991.
\newblock [\epfmtdoi{10.1145/103418.103428}]

\bibitem{BabaiFL91}\bibhead{BabaiFL91}
{\sc L{\'{a}}szl{\'{o}} Babai, Lance Fortnow, and Carsten Lund}:
  Non-deterministic exponential time has two-prover interactive protocols.
\newblock {\em Comput. Complexity}, 1(1):3--40, 1991.
\newblock [\epfmtdoi{10.1007/BF01200056}]

\bibitem{BeimelKMNSS22}\bibhead{BeimelKMNSS22}
{\sc Amos Beimel, Haim Kaplan, Yishay Mansour, Kobbi Nissim, Thatchaphol
  Saranurak, and Uri Stemmer}: Dynamic algorithms against an adaptive
  adversary: generic constructions and lower bounds.
\newblock In {\em Proc. 54th STOC}, pp. 1671--1684. ACM Press, 2022.
\newblock [\epfmtdoi{10.1145/3519935.3520064}]

\bibitem{BellareCHKS96}\bibhead{BellareCHKS96}
{\sc Mihir Bellare, Don Coppersmith, Johan H{\aa}stad, Marcos~A. Kiwi, and
  Madhu Sudan}: Linearity testing in characteristic two.
\newblock {\em IEEE Trans. Inform. Theory}, 42(6):1781--1795, 1996.
\newblock [\epfmtdoi{10.1109/18.556674}]

\bibitem{BellareGS98}\bibhead{BellareGS98}
{\sc Mihir Bellare, Oded Goldreich, and Madhu Sudan}: Free bits, {PCP}s, and
  nonapproximability--towards tight results.
\newblock {\em SIAM J. Comput.}, 27(3):804--915, 1998.
\newblock [\epfmtdoi{10.1137/S0097539796302531}]

\bibitem{BellareGLR93}\bibhead{BellareGLR93}
{\sc Mihir Bellare, Shafi Goldwasser, Carsten Lund, and Alexander Russell}:
  Efficient probabilistically checkable proofs and applications to
  approximations.
\newblock In {\em Proc. 25th STOC}, pp. 294--304. ACM Press, 1993.
\newblock [\epfmtdoi{10.1145/167088.167174}]

\bibitem{BellareS94}\bibhead{BellareS94}
{\sc Mihir Bellare and Madhu Sudan}: Improved non-approximability results.
\newblock In {\em Proc. 26th STOC}, pp. 184--193. ACM Press, 1994.
\newblock [\epfmtdoi{10.1145/195058.195129}]

\bibitem{Belovs18}\bibhead{Belovs18}
{\sc Aleksandrs Belovs}: Adaptive lower bound for testing monotonicity on the
  line.
\newblock In {\em Proc. 22nd Internat. Conf. on Randomization and Computation
  (RANDOM'18)}, pp. 31:1--10. Schloss Dagstuhl--Leibniz-Zentrum fuer
  Informatik, 2018.
\newblock [\epfmtdoi{10.4230/LIPIcs.APPROX-RANDOM.2018.31}]

\bibitem{BenFLR20}\bibhead{BenFLR20}
{\sc Omri Ben{-}Eliezer, Eldar Fischer, Amit Levi, and Ron~D. Rothblum}: Hard
  properties with (very) short {PCPP}s and their applications.
\newblock In {\em Proc. 11th Innovations in Theoret. Comp. Sci. Conf.
  (ITCS'20)}, pp. 9:1--27. Schloss Dagstuhl--Leibniz-Zentrum fuer Informatik,
  2020.
\newblock [\epfmtdoi{10.4230/LIPIcs.ITCS.2020.9}]

\bibitem{Ben-EliezerJWY22}\bibhead{Ben-EliezerJWY22}
{\sc Omri Ben{-}Eliezer, Rajesh Jayaram, David~P. Woodruff, and Eylon Yogev}: A
  framework for adversarially robust streaming algorithms.
\newblock {\em J. ACM}, 69(2):17:1--33, 2022.
\newblock [\epfmtdoi{10.1145/3498334}]

\bibitem{Ben-EliezerY20}\bibhead{Ben-EliezerY20}
{\sc Omri Ben{-}Eliezer and Eylon Yogev}: The adversarial robustness of
  sampling.
\newblock In {\em Proc. 39th Symp. on Principles of Database Systems
  (PODS'20)}, pp. 49--62. ACM Press, 2020.
\newblock [\epfmtdoi{10.1145/3375395.3387643}]

\bibitem{Ben-SassonSVW03}\bibhead{Ben-SassonSVW03}
{\sc Eli Ben{-}Sasson, Madhu Sudan, Salil~P. Vadhan, and Avi Wigderson}:
  Randomness-efficient low degree tests and short {PCP}s via epsilon-biased
  sets.
\newblock In {\em Proc. 35th STOC}, pp. 612--621. ACM Press, 2003.
\newblock [\epfmtdoi{10.1145/780542.780631}]

\bibitem{BermanRY14}\bibhead{BermanRY14}
{\sc Piotr Berman, Sofya Raskhodnikova, and Grigory Yaroslavtsev}:
  {$L_p$}-testing.
\newblock In {\em Proc. 46th STOC}, pp. 164--173. ACM Press, 2014.
\newblock [\epfmtdoi{10.1145/2591796.2591887}]

\bibitem{BGJRW12}\bibhead{BGJRW12}
{\sc Arnab Bhattacharyya, Elena Grigorescu, Kyomin Jung, Sofya Raskhodnikova,
  and David~P. Woodruff}: Transitive-closure spanners.
\newblock {\em SIAM J. Comput.}, 41(6):1380--1425, 2012.
\newblock [\epfmtdoi{10.1137/110826655}]

\bibitem{BKSSZ10}\bibhead{BKSSZ10}
{\sc Arnab Bhattacharyya, Swastik Kopparty, Grant Schoenebeck, Madhu Sudan, and
  David Zuckerman}: Optimal testing of {Reed-Muller} codes.
\newblock In {\em Proc. 51st FOCS}, pp. 488--497. IEEE Comp. Soc., 2010.
\newblock A version of this paper, by the same title and authors, appeared as a
  chapter in \href{https://doi.org/10.1007/978-3-642-16367-8\_19}{ ``Property
  Testing: Current Research and Surveys''} (Oded Goldreich, ed.), 2010, pp.
  269--275, Springer LNCS vol. 6390.
\newblock [\epfmtdoi{10.1109/FOCS.2010.54}]

\bibitem{BlaRY14}\bibhead{BlaRY14}
{\sc Eric Blais, Sofya Raskhodnikova, and Grigory Yaroslavtsev}: Lower bounds
  for testing properties of functions over hypergrid domains.
\newblock In {\em Proc. 29th IEEE Conf. on Comput. Complexity (CCC'14)}, pp.
  309--320. IEEE Comp. Soc., 2014.
\newblock [\epfmtdoi{10.1109/CCC.2014.38}]

\bibitem{BlumLR93}\bibhead{BlumLR93}
{\sc Manuel Blum, Michael Luby, and Ronitt Rubinfeld}: Self-testing/correcting
  with applications to numerical problems.
\newblock {\em J. Comput. System Sci.}, 47(3):549--595, 1993.
\newblock [\epfmtdoi{10.1016/0022-0000(93)90044-W}]

\bibitem{ChakrabartyDJS17}\bibhead{ChakrabartyDJS17}
{\sc Deeparnab Chakrabarty, Kashyap Dixit, Madhav Jha, and C.~Seshadhri}:
  Property testing on product distributions: Optimal testers for bounded
  derivative properties.
\newblock {\em ACM Trans. Algorithms}, 13(2):20:1--30, 2017.
\newblock [\epfmtdoi{10.1145/3039241}]

\bibitem{CS13}\bibhead{CS13}
{\sc Deeparnab Chakrabarty and C.~Seshadhri}: Optimal bounds for monotonicity
  and {Lipschitz} testing over hypercubes and hypergrids.
\newblock In {\em Proc. 45th STOC}, pp. 419--428. ACM Press, 2013.
\newblock [\epfmtdoi{10.1145/2488608.2488661}]

\bibitem{ChSe14}\bibhead{ChSe14}
{\sc Deeparnab Chakrabarty and C.~Seshadhri}: An optimal lower bound for
  monotonicity testing over hypergrids.
\newblock {\em Theory of Computing}, 10(17):453--464, 2014.
\newblock [\epfmtdoi{10.4086/toc.2014.v010a017}]

\bibitem{DinurG13}\bibhead{DinurG13}
{\sc Irit Dinur and Venkatesan Guruswami}: {PCP}s via low-degree long code and
  hardness for constrained hypergraph coloring.
\newblock In {\em Proc. 54th FOCS}, pp. 340--349. IEEE Comp. Soc., 2013.
\newblock [\epfmtdoi{10.1109/FOCS.2013.44}]

\bibitem{DixitJRT13}\bibhead{DixitJRT13}
{\sc Kashyap Dixit, Madhav Jha, Sofya Raskhodnikova, and Abhradeep Thakurta}:
  Testing the {Lipschitz} property over product distributions with applications
  to data privacy.
\newblock In {\em Proc. Theory of Cryptography Conf. (TCC'13)}, pp. 418--436.
  Springer, 2013.
\newblock [\epfmtdoi{10.1007/978-3-642-36594-2\_24}]

\bibitem{DixitRTV18}\bibhead{DixitRTV18}
{\sc Kashyap Dixit, Sofya Raskhodnikova, Abhradeep Thakurta, and Nithin Varma}:
  Erasure-resilient property testing.
\newblock {\em SIAM J. Comput.}, 47(2):295--329, 2018.
\newblock [\epfmtdoi{10.1137/16M1075661}]

\bibitem{DGLRRS99}\bibhead{DGLRRS99}
{\sc Yevgeniy Dodis, Oded Goldreich, Eric Lehman, Sofya Raskhodnikova, Dana
  Ron, and Alex Samorodnitsky}: Improved testing algorithms for monotonicity.
\newblock In {\em Proc. 3rd Internat. Workshop on Randomization and Computation
  (RANDOM'99)}, pp. 97--108. Springer, 1999.
\newblock [\epfmtdoi{10.1007/978-3-540-48413-4\_10}]

\bibitem{EKKRV00}\bibhead{EKKRV00}
{\sc Funda Erg{\"{u}}n, Sampath Kannan, Ravi Kumar, Ronitt Rubinfeld, and
  Mahesh Viswanathan}: Spot-checkers.
\newblock {\em J. Comput. System Sci.}, 60(3):717--751, 2000.
\newblock [\epfmtdoi{10.1006/jcss.1999.1692}]

\bibitem{FeigeGLSS96}\bibhead{FeigeGLSS96}
{\sc Uriel Feige, Shafi Goldwasser, L{\'{a}}szl{\'{o}} Lov{\'{a}}sz, Shmuel
  Safra, and Mario Szegedy}: Interactive proofs and the hardness of
  approximating cliques.
\newblock {\em J. ACM}, 43(2):268--292, 1996.
\newblock [\epfmtdoi{10.1145/226643.226652}]

\bibitem{Fis04}\bibhead{Fis04}
{\sc Eldar Fischer}: On the strength of comparisons in property testing.
\newblock {\em Inform. Comput.}, 189(1):107--116, 2004.
\newblock [\epfmtdoi{10.1016/j.ic.2003.09.003}]

\bibitem{FLNRRS02}\bibhead{FLNRRS02}
{\sc Eldar Fischer, Eric Lehman, Ilan Newman, Sofya Raskhodnikova, Ronitt
  Rubinfeld, and Alex Samorodnitsky}: Monotonicity testing over general poset
  domains.
\newblock In {\em Proc. 34th STOC}, pp. 474--483. ACM Press, 2002.
\newblock [\epfmtdoi{10.1145/509907.509977}]

\bibitem{FriedlS95}\bibhead{FriedlS95}
{\sc Katalin Friedl and Madhu Sudan}: Some improvements to total degree tests.
\newblock In {\em Proc. 3rd Isr. Symp. Theory Comp. Sys. (ISTCS'95)}, pp.
  190--198. IEEE Comp. Soc., 1995.
\newblock [\epfmtdoi{10.1109/ISTCS.1995.377032}]

\bibitem{GemmellLRSW91}\bibhead{GemmellLRSW91}
{\sc Peter Gemmell, Richard~J. Lipton, Ronitt Rubinfeld, Madhu Sudan, and Avi
  Wigderson}: Self-testing/correcting for polynomials and for approximate
  functions.
\newblock In {\em Proc. 23rd STOC}, pp. 33--42. ACM Press, 1991.
\newblock [\epfmtdoi{10.1145/103418.103429}]

\bibitem{Goldreich14}\bibhead{Goldreich14}
{\sc Oded Goldreich}: On multiple input problems in property testing.
\newblock In {\em Proc. 18th Internat. Workshop on Randomization and
  Computation (RANDOM'14)}, pp. 704--720. Schloss Dagstuhl--Leibniz-Zentrum
  fuer Informatik, 2014.
\newblock [\epfmtdoi{10.4230/LIPIcs.APPROX-RANDOM.2014.704}]

\bibitem{GGLRS00}\bibhead{GGLRS00}
{\sc Oded Goldreich, Shafi Goldwasser, Eric Lehman, Dana Ron, and Alex
  Samorodnitsky}: Testing monotonicity.
\newblock {\em Combinatorica}, 20(3):301--337, 2000.
\newblock [\epfmtdoi{10.1007/s004930070011}]

\bibitem{GGR98}\bibhead{GGR98}
{\sc Oded Goldreich, Shafi Goldwasser, and Dana Ron}: Property testing and its
  connection to learning and approximation.
\newblock {\em J. ACM}, 45(4):653--750, 1998.
\newblock [\epfmtdoi{10.1145/285055.285060}]

\bibitem{GuruswamiR05}\bibhead{GuruswamiR05}
{\sc Venkatesan Guruswami and Atri Rudra}: Tolerant locally testable codes.
\newblock In {\em Proc. 9th Internat. Workshop on Randomization and Computation
  (RANDOM'05)}, pp. 306--317. Springer, 2005.
\newblock [\epfmtdoi{10.1007/11538462\_26}]

\bibitem{HaramatySS13}\bibhead{HaramatySS13}
{\sc Elad Haramaty, Amir Shpilka, and Madhu Sudan}: Optimal testing of
  multivariate polynomials over small prime fields.
\newblock {\em SIAM J. Comput.}, 42(2):536--562, 2013.
\newblock [\epfmtdoi{10.1137/120879257}]

\bibitem{HastadW03}\bibhead{HastadW03}
{\sc Johan H{\aa}stad and Avi Wigderson}: Simple analysis of graph tests for
  linearity and {PCP}.
\newblock {\em Random Struct. Algor.}, 22(2):139--160, 2003.
\newblock [\epfmtdoi{10.1002/rsa.10068}]

\bibitem{JhaR13}\bibhead{JhaR13}
{\sc Madhav Jha and Sofya Raskhodnikova}: Testing and reconstruction of
  {Lipschitz} functions with applications to data privacy.
\newblock {\em SIAM J. Comput.}, 42(2):700--731, 2013.
\newblock [\epfmtdoi{10.1137/110840741}]

\bibitem{JutlaPRZ09}\bibhead{JutlaPRZ09}
{\sc Charanjit~S. Jutla, Anindya~C. Patthak, Atri Rudra, and David Zuckerman}:
  Testing low-degree polynomials over prime fields.
\newblock {\em Random Struct. Algor.}, 35(2):163--193, 2009.
\newblock [\epfmtdoi{10.1002/rsa.20262}]

\bibitem{KalemajRV22}\bibhead{KalemajRV22}
{\sc Iden Kalemaj, Sofya Raskhodnikova, and Nithin Varma}: Sublinear-time
  computation in the presence of online erasures.
\newblock In {\em Proc. 13th Innovations in Theoret. Comp. Sci. Conf.
  (ITCS'22)}, volume 215, pp. 90:1--25. Schloss Dagstuhl--Leibniz-Zentrum fuer
  Informatik, 2022.
\newblock [\epfmtdoi{10.4230/LIPIcs.ITCS.2022.90}]

\bibitem{KaufmanLX10}\bibhead{KaufmanLX10}
{\sc Tali Kaufman, Simon Litsyn, and Ning Xie}: Breaking the
  $\epsilon$-soundness bound of the linearity test over {GF(2)}.
\newblock {\em SIAM J. Comput.}, 39(5):1988--2003, 2010.
\newblock [\epfmtdoi{10.1137/080715548}]

\bibitem{KaufmanR06}\bibhead{KaufmanR06}
{\sc Tali Kaufman and Dana Ron}: Testing polynomials over general fields.
\newblock {\em SIAM J. Comput.}, 36(3):779--802, 2006.
\newblock [\epfmtdoi{10.1137/S0097539704445615}]

\bibitem{KoppartyS09}\bibhead{KoppartyS09}
{\sc Swastik Kopparty and Shubhangi Saraf}: Tolerant linearity testing and
  locally testable codes.
\newblock In {\em Proc. 13th Internat. Workshop on Randomization and
  Computation (RANDOM'09)}, pp. 601--614. Springer, 2009.
\newblock [\epfmtdoi{10.1007/978-3-642-03685-9\_45}]

\bibitem{LPRV21}\bibhead{LPRV21}
{\sc Amit Levi, Ramesh Krishnan~S. Pallavoor, Sofya Raskhodnikova, and Nithin
  Varma}: Erasure-resilient sublinear-time graph algorithms.
\newblock {\em ACM Trans. Comput. Theory}, 14(1):1--22, 2021.
\newblock \prelim \href{https://doi.org/10.4230/LIPIcs.ITCS.2021.80}{ITCS'21}.
\newblock [\epfmtdoi{10.1145/3488250}]

\bibitem{Mantelero13}\bibhead{Mantelero13}
{\sc Alessandro Mantelero}: The {EU} proposal for a general data protection
  regulation and the roots of the `right to be forgotten'.
\newblock {\em Computer Law \&\ Security Review}, 29(3):229--235, 2013.
\newblock [\epfmtdoi{10.1016/j.clsr.2013.03.010}]

\bibitem{Moshkovitz17}\bibhead{Moshkovitz17}
{\sc Dana Moshkovitz}: Low-degree test with polynomially small error.
\newblock {\em Comput. Complexity}, 26(3):531--582, 2017.
\newblock [\epfmtdoi{10.1007/s00037-016-0149-4}]

\bibitem{MoshkovitzR08}\bibhead{MoshkovitzR08}
{\sc Dana Moshkovitz and Ran Raz}: Sub-constant error low degree test of
  almost-linear size.
\newblock {\em SIAM J. Comput.}, 38(1):140--180, 2008.
\newblock [\epfmtdoi{10.1137/060656838}]

\bibitem{NV20}\bibhead{NV20}
{\sc Ilan Newman and Nithin Varma}: New sublinear algorithms and lower bounds
  for {LIS} estimation.
\newblock In {\em Proc. 48th Internat. Colloq. on Automata, Languages, and
  Programming (ICALP'21)}, pp. 100:1--20. Schloss Dagstuhl--Leibniz-Zentrum
  fuer Informatik, 2021.
\newblock [\epfmtdoi{10.4230/LIPIcs.ICALP.2021.100}]

\bibitem{OD1}\bibhead{OD1}
{\sc Ryan O'Donnell}: {\em Analysis of Boolean Functions}.
\newblock Cambridge Univ. Press, 2014.
\newblock [\epfmtdoi{10.1017/CBO9781139814782}, \epfmt{arxiv}{2105.10386}]

\bibitem{PRV18}\bibhead{PRV18}
{\sc Ramesh Krishnan~S. Pallavoor, Sofya Raskhodnikova, and Nithin Varma}:
  Parameterized property testing of functions.
\newblock {\em ACM Trans. Comput. Theory}, 9(4):17:1--19, 2018.
\newblock [\epfmtdoi{10.1145/3155296}]

\bibitem{PallavoorRW22}\bibhead{PallavoorRW22}
{\sc Ramesh Krishnan~S. Pallavoor, Sofya Raskhodnikova, and Erik Waingarten}:
  Approximating the distance to monotonicity of boolean functions.
\newblock {\em Random Struct. Algor.}, 60(2):233--260, 2022.
\newblock [\epfmtdoi{10.1002/rsa.21029}]

\bibitem{Ramesh}\bibhead{Ramesh}
{\sc {Ramesh Krishnan} {Pallavoor Suresh}}: {\em Improved Algorithms and New
  Models in Property Testing}.
\newblock Ph.\,D.\ thesis, Boston University, 2020.
\newblock \href{https://hdl.handle.net/2144/42025}{OpenBU}.

\bibitem{PRR06}\bibhead{PRR06}
{\sc Michal Parnas, Dana Ron, and Ronitt Rubinfeld}: Tolerant property testing
  and distance approximation.
\newblock {\em J. Comput. System Sci.}, 72(6):1012--1042, 2006.
\newblock [\epfmtdoi{10.1016/j.jcss.2006.03.002}]

\bibitem{Enc1}\bibhead{Enc1}
{\sc Sofya Raskhodnikova}: Testing if an array is sorted.
\newblock In {\sc Ming-Yang Kao}, editor, {\em Encyclopedia of Algorithms}, pp.
  2219--2222. Springer, 2016.
\newblock [\epfmtdoi{10.1007/978-1-4939-2864-4\_700}]

\bibitem{RRV19}\bibhead{RRV19}
{\sc Sofya Raskhodnikova, Noga Ron{-}Zewi, and Nithin Varma}: Erasures versus
  errors in local decoding and property testing.
\newblock {\em Random Struct. Algor.}, 59(4):640--670, 2021.
\newblock [\epfmtdoi{10.1002/rsa.21031}]

\bibitem{RasR16}\bibhead{RasR16}
{\sc Sofya Raskhodnikova and Ronitt Rubinfeld}: Linearity testing/{T}esting
  {H}adamard codes.
\newblock In {\sc Ming-Yang Kao}, editor, {\em Encyclopedia of Algorithms}, pp.
  1107--1110. Springer, 2016.
\newblock [\epfmtdoi{10.1007/978-1-4939-2864-4\_202}]

\bibitem{RS06}\bibhead{RS06}
{\sc Sofya Raskhodnikova and Adam~D. Smith}: A note on adaptivity in testing
  properties of bounded degree graphs.
\newblock {\em Electron. Colloq. Comput. Complexity}, TR06-089, 2006.
\newblock [\epfmt{ecccprimary}{TR06-089}]

\bibitem{RV18}\bibhead{RV18}
{\sc Sofya Raskhodnikova and Nithin Varma}: Brief announcement:
  Erasure-resilience versus tolerance to errors.
\newblock In {\em Proc. 45th Internat. Colloq. on Automata, Languages, and
  Programming (ICALP'18)}, pp. 111:1--3. Schloss Dagstuhl--Leibniz-Zentrum fuer
  Informatik, 2018.
\newblock [\epfmtdoi{10.4230/LIPIcs.ICALP.2018.111}]

\bibitem{RazS97}\bibhead{RazS97}
{\sc Ran Raz and Shmuel Safra}: A sub-constant error-probability low-degree
  test, and a sub-constant error-probability {PCP} characterization of {NP}.
\newblock In {\em Proc. 29th STOC}, pp. 475--484. ACM Press, 1997.
\newblock [\epfmtdoi{10.1145/258533.258641}]

\bibitem{Ron-ZewiS13}\bibhead{Ron-ZewiS13}
{\sc Noga Ron{-}Zewi and Madhu Sudan}: A new upper bound on the query
  complexity of testing generalized {Reed-Muller} codes.
\newblock {\em Theory of Computing}, 9(25):783--807, 2013.
\newblock [\epfmtdoi{10.4086/toc.2013.v009a025}]

\bibitem{RubinfeldS96}\bibhead{RubinfeldS96}
{\sc Ronitt Rubinfeld and Madhu Sudan}: Robust characterizations of polynomials
  with applications to program testing.
\newblock {\em SIAM J. Comput.}, 25(2):252--271, 1996.
\newblock [\epfmtdoi{10.1137/S0097539793255151}]

\bibitem{SaksS17}\bibhead{SaksS17}
{\sc Michael~E. Saks and C.~Seshadhri}: Estimating the longest increasing
  sequence in polylogarithmic time.
\newblock {\em SIAM J. Comput.}, 46(2):774--823, 2017.
\newblock [\epfmtdoi{10.1137/130942152}]

\bibitem{Samorodnitsky07}\bibhead{Samorodnitsky07}
{\sc Alex Samorodnitsky}: Low-degree tests at large distances.
\newblock In {\em Proc. 39th STOC}, pp. 506--515. ACM Press, 2007.
\newblock [\epfmtdoi{10.1145/1250790.1250864}]

\bibitem{SamorodnitskyT00}\bibhead{SamorodnitskyT00}
{\sc Alex Samorodnitsky and Luca Trevisan}: A {PCP} characterization of {NP}
  with optimal amortized query complexity.
\newblock In {\em Proc. 32nd STOC}, pp. 191--199. ACM Press, 2000.
\newblock [\epfmtdoi{10.1145/335305.335329}]

\bibitem{SamorodnitskyT09}\bibhead{SamorodnitskyT09}
{\sc Alex Samorodnitsky and Luca Trevisan}: Gowers uniformity, influence of
  variables, and {PCP}s.
\newblock {\em SIAM J. Comput.}, 39(1):323--360, 2009.
\newblock [\epfmtdoi{10.1137/070681612}]

\bibitem{ShpilkaW06}\bibhead{ShpilkaW06}
{\sc Amir Shpilka and Avi Wigderson}: Derandomizing homomorphism testing in
  general groups.
\newblock {\em SIAM J. Comput.}, 36(4):1215--1230, 2006.
\newblock [\epfmtdoi{10.1137/S009753970444658X}]

\bibitem{SudanT98}\bibhead{SudanT98}
{\sc Madhu Sudan and Luca Trevisan}: Probabilistically checkable proofs with
  low amortized query complexity.
\newblock In {\em Proc. 39th FOCS}, pp. 18--27. IEEE Comp. Soc., 1998.
\newblock [\epfmtdoi{10.1109/SFCS.1998.743425}]

\bibitem{Trevisan98}\bibhead{Trevisan98}
{\sc Luca Trevisan}: Recycling queries in {PCP}s and in linearity tests
  (extended abstract).
\newblock In {\em Proc. 30th STOC}, pp. 299--308. ACM Press, 1998.
\newblock [\epfmtdoi{10.1145/276698.276769}]

\bibitem{Yao77}\bibhead{Yao77}
{\sc Andrew~Chi{-}Chih Yao}: Probabilistic computations: Toward a unified
  measure of complexity (extended abstract).
\newblock In {\em Proc. 18th FOCS}, pp. 222--227. IEEE Comp. Soc., 1977.
\newblock [\epfmtdoi{10.1109/SFCS.1977.24}]

\end{thebibliography}

\begin{tocauthors}
\begin{tocinfo}[kalemaj]
 Iden Kalemaj\\
 Graduate student\\
 Department of Computer Science\\
 Boston University\\
 Boston, MA, USA\\
 ikalemaj\tocat{}bu\tocdot{}edu \\   %
 \url{https://cs-people.bu.edu/ikalemaj/}      %
\end{tocinfo}
\begin{tocinfo}[raskhodnikova]
  Sofya Raskhodnikova\\
  Professor\\
  Department of Computer Science\\
  Boston University\\
  Boston, MA, USA\\
  sofya\tocat{}bu\tocdot{}edu \\
  \url{https://cs-people.bu.edu/sofya/}
\end{tocinfo}
\begin{tocinfo}[varma]
  Nithin Varma\\
  Assistant Professor\\
  Chennai Mathematical Institute\\
  Chennai, Tamil Nadu, India\\
  nithvarma\tocat{}gmail\tocdot{}com \\
  \url{https://www.cmi.ac.in/~nithinvarma/}
\end{tocinfo}
\end{tocauthors}

\begin{tocaboutauthors}
\begin{tocabout}[kalemaj]  %
  \textsc{Iden Kalemaj} is a \phd\ student in \href{https://www.bu.edu/cs/}{Computer Science at Boston University}, working under the wonderful supervision of \href{https://cs-people.bu.edu/sofya/}{Sofya Raskhodnikova}, in the fields of Sublinear Algorithms and Differential Privacy. 

\noindent
  Iden grew up in Vlor{\"e}, Albania.
  Her math teacher and school master, Dafina Brokaj, with a rare sense of humour and wisdom, instilled in Iden the confidence to pursue mathematics.

\noindent
  Iden completed her undergraduate studies in \href{https://www.math.princeton.edu/}{Mathematics at Princeton University}, and started to explore her interest in theoretical computer science via classes in graph theory and complexity theory. In her free time she enjoys climbing in the many Boston gyms and taking dance classes.
\end{tocabout}

\begin{tocabout}[raskhodnikova]   %
\textsc{Sofya Raskhodnikova} is a professor of \href{https://www.bu.edu/cs/}{Computer Science at Boston University}. Her office is located in the newly constructed \href{https://www.bu.edu/cds-faculty/explore/bu-center-for-computing-data-sciences/}{Center for Computing and Data Sciences}, colloquially known as the Jenga building. She received all her degrees (S.\,B., S.\,M., and \phd) from \href{https://web.mit.edu/}{MIT}, and then was a postdoctoral fellow at the Hebrew University of Jerusalem and the Weizmann Institute of Science. She was a Professor of Computer Science and Engineering at Penn State and held visiting positions at the Institute for Pure and Applied Mathematics at UCLA,
Boston University, Harvard University, and at the Simons Institute for the Theory of Computing at Berkeley. Sofya works in the areas of randomized and approximation
algorithms. Her main interest is the design and analysis of sublinear-time algorithms for combinatorial problems. She has also made important contributions to data privacy. Sofya has been a faculty member at \href{https://sigmacamp.org/}{Sigma Camp}, a residential summer science camp for kids, for several years. 
As far as her hobbies go, recall that she works on privacy.
\end{tocabout}

\begin{tocabout}[varma]
  \textsc{Nithin Varma} is a researcher working in the fields of Sublinear Algorithms and Approximation Algorithms. He got his \phd\ from the \href{https://www.bu.edu/cs/}{Computer Science Department at Boston University}, where he was fortunate to be advised by
\href{https://cs-people.bu.edu/sofya/}{Sofya Raskhodnikova}.
    After that, he was a postdoctoral researcher at the \href{https://cs.hevra.haifa.ac.il/?lang=en}{Department of Computer Science, University of Haifa}, where he was hosted and mentored by 
\href{http://cs.haifa.ac.il/~ilan/}{Ilan Newman} %
and
\href{https://sites.google.com/view/nogazewi}{Noga Ron-Zewi}. 
He was introduced to Theoretical Computer Science during his undergraduate years at the \href{https://minerva.nitc.ac.in/}{Department of Computer Science and Engineering at the National Institute of Technology Calicut} through the wonderful courses offered by
\href{https://people.cse.nitc.ac.in/muralikrishnan/}{Prof.\ K Muralikrishnan}. 
In his free time, Nithin enjoys meditation, singing Indian classical music and practicing the Mohiniyattam dance.
\end{tocabout}
\end{tocaboutauthors}

\end{document}